
\documentclass[11pt]{llncs}
\usepackage{tabularx,booktabs,multirow,delarray,array}
\usepackage{graphicx,amsmath,amssymb}
\usepackage{latexsym}
\usepackage[utf8]{inputenc}

\usepackage[in]{fullpage}

\def\calP{\mathcal{P}}

\def\calA{\mathcal{A}}

\def\calT{\mathcal{T}}
\def\bbR{\mathbb{R}}

\def\Ed{\mathsf{E}\mathrm{d}}



\begin{document}

\title{Covering Uncertain Points in a Tree\thanks{A preliminary version of this paper will appear in the Proceedings of the 15th Algorithms and Data Structures Symposium (WADS 2017). This research was supported in part by NSF under Grant CCF-1317143.}}

\author{Haitao Wang
\and
Jingru Zhang
}

\institute{
Department of Computer Science\\
Utah State University, Logan, UT 84322, USA\\
\email{haitao.wang@usu.edu,jingruzhang@aggiemail.usu.edu}\\
}

\maketitle

\pagestyle{plain}
\pagenumbering{arabic}
\setcounter{page}{1}

\begin{abstract}
In this paper, we consider a coverage problem for uncertain points in a tree.
Let $T$ be a tree containing a set $\mathcal{P}$
of $n$ (weighted) demand points, and the location of each demand point
$P_i\in \calP$ is uncertain but is known to appear in one of $m_i$
points on $T$ each associated with a probability. Given a {\em
covering range $\lambda$}, the problem is to find a minimum number of
points (called {\em centers}) on $T$ to build facilities for serving
(or covering) these demand points in the sense that for each uncertain
point $P_i\in \calP$, the expected distance from $P_i$ to at least one
center is no more than $\lambda$. The problem has not been studied before.
We present an $O(|T|+M\log^2 M)$ time algorithm for the problem, where $|T|$ is the number of vertices of $T$ and
$M$ is the total number of locations of all uncertain points of $\calP$, i.e.,
$M=\sum_{P_i\in \calP}m_i$. In addition, by using this algorithm, we
solve a $k$-center problem on $T$ for the uncertain points of
$\calP$.
\end{abstract}


\section{Introduction}
\label{sec:Gintro}


Data uncertainty is very common in many applications,
such as sensor databases, image resolution, facility location services, and it is mainly due to measurement inaccuracy, sampling discrepancy, outdated data sources, resource limitation, etc.
Problems on uncertain data have attracted considerable attention, e.g.,
\cite{ref:AgarwalIn09,ref:AgarwalNe12,ref:AgarwalCo14,ref:ChengCl08,ref:ChengEf04,ref:DongDa07,ref:KamousiCl11,ref:KamousiSt11,ref:SuriOn14,ref:SuriOn13,ref:TaoRa07}.
In this paper, we study a problem of covering uncertain points on a tree. The problem is formally defined as follows.


Let $T$ be a tree. We consider each edge $e$ of $T$ as a line segment of a positive length
so that we can talk about ``points'' on $e$. Formally, we specify a point $x$ of $T$ by an edge $e$ of $T$ that contains $x$ and the distance between $x$ and an incident vertex of $e$. The distance of any two
points $p$ and $q$ on $T$, denoted by $d(p,q)$, is defined as the sum of the lengths of all edges on the simple path from $p$ to $q$ in $T$. Let
$\calP=\{P_1,\ldots,P_n\}$ be a set of $n$ uncertain (demand) points
on $T$. Each $P_i\in \calP$ has $m_i$ possible locations on $T$,
denoted by $ \{p_{i1}, p_{i2}, \cdots, p_{im_i}\} $, and each location $p_{ij}$
of $P_i$ is associated with a probability $f_{ij}\geq 0$ for $P_i$
appearing at $p_{ij}$ (which is independent of other locations),
with $\sum_{j=1}^{m_i}f_{ij}=1$; e.g.,
see Fig.~\ref{fig:uncertainty}.
In addition, each $P_i\in \calP$ has a weight $w_i\geq 0$. For any point $x$ on $T$, the (weighted) {\em expected distance} from $x$ to $P_i$,
denoted by $\Ed(x,P_i)$, is defined as
$$\Ed(x, P_i)=w_i\cdot\sum_{j=1}^{m_i}f_{ij}\cdot d(x, p_{ij}).$$


\begin{figure}[t]
\begin{minipage}[t]{\linewidth}
\begin{center}
\includegraphics[totalheight=1.0in]{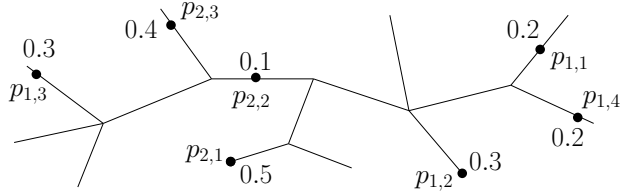}
\caption{\footnotesize Illustrating two uncertain points $P_1$ and
$P_2$, where $P_1$ has four possible locations and $P_2$ has three
possible locations.
The numbers are the probabilities.}
\label{fig:uncertainty}
\end{center}
\end{minipage}
\vspace*{-0.15in}
\end{figure}

Given a value $\lambda\geq 0$, called the {\em covering range}, we say that a point $x$ on $T$ {\em covers} an uncertain point $P_i$ if $\Ed(x,P_i)\leq \lambda$.
The {\em center-coverage problem} is to compute a minimum number of points on $T$, called {\em centers}, such that every uncertain point of $\calP$ is covered by at least one center (hence we can build facilities on these centers to ``serve'' all demand points).

To the best of our knowledge, the problem has not been studied before.
Let $M$ denote the total number of locations all uncertain points, i.e., $M=\sum_{i=1}^n m_i$. Let $|T|$ be the number of vertices of $T$.
In this paper, we present an algorithm that solves the problem in $O(|T|+M\log^2M)$ time, which is nearly linear as the input size of the problem is $\Theta(|T|+M)$.

As an application of our algorithm, we also solve a dual problem,
called the {\em $k$-center} problem, which is to compute a
number of $k$ centers on $T$ such that the covering range is
minimized. Our algorithm solves the $k$-center problem in
$O(|T|+n^2\log n\log M+M\log^2M\log n)$ time.


\subsection{Related Work}

Two models on uncertain data have been commonly considered:  the \textit{existential} model
\cite{ref:AgarwalCo14,ref:KamousiCl11,ref:KamousiSt11,ref:SuriOn14,ref:SuriOn13,ref:YiuEf09}
and the \textit{locational} model
\cite{ref:AgarwalIn09,ref:AgarwalNe12,ref:ChengEf04,ref:TaoRa07}.
In the existential
model an uncertain point has a specific location but its existence is uncertain while
in the locational model an uncertain point always exists but its location is uncertain and follows a probability distribution function.
Our problems belong to the locational model. In fact, the same
problems under existential model are essentially the weighted case
for ``deterministic'' points (i.e., each $P_i\in \calP$ has a
single ``certain'' location), and the center-coverage problem is solvable in linear time~\cite{ref:KarivAnC79} and the $k$-center problem is solvable in $O(n\log^2 n)$ time~\cite{ref:ColeSl87,ref:MegiddoNe83}.


If $T$ is a path, both the center-coverage problem and the $k$-center
problem on uncertain points have been studied~\cite{ref:WangOn15}, but
under a somewhat special problem setting where $m_i$ is the same for
all $1\leq i\leq n$.
The two problems were solved in
$O(M+n\log k)$ and $O(M\log M + n \log k \log n)$ time, respectively.
If $T$ is tree, an $O(|T|+M)$ time algorithm was given in
\cite{ref:WangCo16} for the one-center problem under the above special
problem setting.

As mentioned above, the ``deterministic'' version of the center-coverage problem is solvable in linear
time~\cite{ref:KarivAnC79}, where all demand points are on the vertices.
For the $k$-center problem, Megiddo and Tamir
\cite{ref:MegiddoNe83} presented an $O(n\log^2 n\log\log n)$ time
algorithm ($n$ is the size of the tree), which was improved to $O(n\log^2 n)$ time
by Cole~\cite{ref:ColeSl87}. The unweighted case was solved in linear time by Frederickson
\cite{ref:FredericksonPa91}.

Very recently, Li and Huang~\cite{ref:HuangSt17} considered the same $k$-center problem under the same uncertain model as ours but in the Euclidean space, and they gave an approximation algorithm.
Facility location problems in other uncertain models have also been considered.
For example, L{\"o}ffler and van Kreveld \cite{ref:LofflerLa10} gave
algorithms for
computing the smallest enclosing circle for imprecise points
each of which is contained in a planar region (e.g., a circle or a square).
J{\o}rgenson et al. \cite{ref:JorgensenGe11}
studied the problem of computing the distribution of
the radius of the smallest enclosing circle for uncertain points
each of which has multiple locations in the plane.
de~Berg et al. \cite{ref:BergKi13} proposed algorithms for dynamically
maintaining
Euclidean $2$-centers for a set of moving points in the plane (the
moving points are considered uncertain).
See also the problems for minimizing the maximum regret, e.g.,
\cite{ref:AverbakhFa05,ref:AverbakhMi97,ref:WangMi14}.

Some coverage problems in various geometric settings have also been studied.
For example,
the unit disk coverage problem is to compute a minimum number of unit
disks to cover a given set of points in the plane. The problem is
NP-hard and a polynomial-time approximation scheme
was known~\cite{ref:HochbaumAp85}. The discrete case where the disks must
be selected from a given set was also studied~\cite{ref:MustafaPT09}.
See~\cite{ref:BeregOp15,ref:ChanGe15,ref:GonzalezCo91,ref:KimCo11}
and the references therein for various problems of covering points using squares.
Refer to a survey~\cite{ref:AgarwalEf98} for more geometric coverage problems.

\subsection{Our Techniques}
We first discuss our techniques for solving the center-coverage
problem.

For each uncertain point $P_i\in \calP$, we find a point $p^*_i$ on $T$
that minimizes the expected distance $\Ed(p_i,P_i)$, and $p^*_i$
is actually the weighted median of all locations of $P_i$. We observe
that if we move a point $x$ on $T$ away from
$p_i^*$, the expected distance $\Ed(x,P_i)$ is monotonically increasing. We
compute the medians $p_i^*$ for all uncertain points in $O(M\log M)$ time. Then we show
that there exists an optimal solution in which all centers are in $T_m$,
where $T_m$ is the minimum subtree of $T$ that connects all medians
$p_i^*$ (so every leaf of $T_m$ is a median $p_i^*$).
Next we find centers on $T_m$. To this end, we propose a simple
greedy algorithm, but the challenge is on developing efficient data structures
to perform certain operations. We briefly discuss it below.

We pick an arbitrary vertex $r$ of $T_m$ as the root.
Starting from the leaves, we consider the vertices of $T_m$ in
a bottom-up manner and place centers whenever we ``have to''. For
example, consider a leaf $v$ holding a median $p_i^*$ and let $u$ be
the parent of $v$. If $\Ed(u,P_i)> \lambda$, then we have to place a
center $c$ on the edge $e(u,v)$ in order to cover $P_i$. The location
of $c$ is chosen to be at a point of $e(u,v)$ with
$\Ed(c,P_i)=\lambda$ (i.e., on the one hand, $c$ covers $P_i$, and on
the other hand, $c$ is as close to $u$ as possible in the
hope of covering other uncertain points as many as possible). After
$c$ is placed, we find and remove all uncertain points
that are covered by $c$. Performing this operation efficiently is a key difficulty for our approach.
We solve the problem in an output-sensitive manner
by proposing a dynamic data structure that also
supports the remove operations.

We also develop data structures for other operations needed in
the algorithm. For example, we build a data structure in $O(M\log M)$ time that can compute the expected distance $\Ed(x,P_i)$ in $O(\log M)$
time for any point $x$ on $T$ and any $P_i\in \calP$.
These data structures may be of independent interest.

We should point out that our algorithm is essentially different from the one in our previous work \cite{ref:WangCo16}. Indeed, our algorithm here is a greedy algorithm while the one in \cite{ref:WangCo16} uses the prune-and-search technique. Also our algorithm relies heavily on some data structures as mentioned above while the algorithm in \cite{ref:WangCo16} does not need any of these data structures.

For solving the $k$-center problem, by observations, we first
identify a set of $O(n^2)$ ``candidate'' values
such that the covering range in the optimal solution must be in the
set. Subsequently, we use our algorithm for the center-coverage
problem as a decision procedure to find the optimal covering
range in the set.

Note that although we have assumed $\sum_{j=1}^{m_i}f_{ij} =
1$ for each $P_i\in \calP$, it is quite straightforward to adapt our algorithm
to the general case where the assumption does not hold.

The rest of the paper is organized as follows.
We introduce some notation in Section~\ref{sec:pre}.
In Section~\ref{sec:main},
we describe our algorithmic scheme for the center-coverage problem but
leave the implementation details in the subsequent two sections.
Specifically, the algorithm for computing all medians $p_i^*$ is given
in Section~\ref{sec:decmedian}, and in the same section  we also
propose a connector-bounded centroid
decomposition of $T$, which is repeatedly used in the paper and may
be interesting in its own right.
The data structures used in our algorithmic scheme are given in Section~\ref{sec:ds}.
We finally solve the $k$-center problem in Section~\ref{sec:kcenter}.

\section{Preliminaries}
\label{sec:pre}

Note that the locations of the uncertain points of $\calP$ may be in
the interior of the edges of $T$.  A
{\em vertex-constrained case} happens if all locations of $\calP$ are at
vertices of $T$ and each vertex of $T$ holds at least one location of
$\calP$ (but the centers we seek can still be in the interior of edges).
As in \cite{ref:WangCo16}, we will show later in Section~\ref{sec:reduction} that
the general problem can be reduced to the vertex-constrained case in $O(|T|+M)$
time. In the following, unless otherwise stated, we focus our
discussion on the vertex-constrained case and assume our problem on $\calP$ and $T$ is a vertex-constrained case.
For ease of exposition, we further make a general position assumption that every
vertex of $T$ has only one location of $\calP$ (we explain in Section~\ref{sec:reduction} that our algorithm easily extends to the degenerate case). Under this assumption,
it holds that $|T|=M\geq n$.

Let $e(u,v)$ denote the edge of $T$ incident to two vertices $u$ and $v$.
For any two points $p$ and $q$ on $T$, denote by $\pi(p,q)$ the simple
path from $p$ to $q$ on $T$.

Let $\pi$ be any simple path on $T$ and $x$ be any point on $\pi$.
For any location $p_{ij}$ of an uncertain point $P_i$, the distance
$d(x,p_{ij})$ is a convex (and piecewise linear) function as $x$ changes on
$\pi$ \cite{ref:MegiddoLi83}.
As a sum of multiple convex functions,
$\Ed(x,P_i)$ is also convex (and piecewise linear) on
$\pi$, that is, in general, as $x$ moves
on $\pi$, $\Ed(x,P_i)$ first monotonically decreases and then monotonically increases.
In particular, for each edge $e$ of $T$, $\Ed(x,P_i)$ is a linear
function for $x\in e$.

For any subtree $T'$ of $T$ and any
$P_i\in \calP$, we call the sum of the probabilities of the locations
of $P_i$ in $T'$ the {\em probability sum} of $P_i$ in $T'$.

For each uncertain point $P_i$, let $p_i^*$ be a point $x\in T$ that minimizes
$\Ed(x,P_i)$. If we consider $w_i\cdot f_{ij}$ as the
weight of $p_{ij}$, $p_i^*$ is actually the {\em weighted median} of
all points $p_{ij}\in P_i$. We call $p_i^*$ the {\em median} of $P_i$.
Although $p_i^*$ may not be unique (e.g.,
when there is an edge $e$ dividing $T$ into two subtrees such that the probability sum
of $P_i$ in either subtree is exactly $0.5$), $P_i$
always has a median located at a vertex $v$ of $T$, and we let $p_i^*$
refer to such a vertex.

Recall that $\lambda$ is the given covering range for the center-coverage problem.
If $\Ed(p_i^*,P_i)>\lambda$ for some $i\in [1,n]$, then
there is no solution for the problem since no point of $T$ can cover $P_i$. Henceforth, we assume
$\Ed(p_i^*,P_i)\leq \lambda$ for each $i\in [1,n]$.


\section{The Algorithmic Scheme}
\label{sec:main}
In this section, we describe our algorithmic scheme for the center-coverage problem,  and
the implementation details will be presented in the subsequent two
sections.

We start with computing the medians $p_i^*$ of all uncertain points of
$\calP$. We have the following lemma, whose proof is deferred to
Section~\ref{sec:median}.

\begin{lemma}\label{lem:median}
The medians $p_i^*$ of all uncertain points $P_i$ of $\calP$ can be
computed in $O(M\log M)$ time.
\end{lemma}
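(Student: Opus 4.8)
The plan is to compute all medians simultaneously using a single pass over a suitable decomposition of $T$, exploiting the convexity of each $\Ed(\cdot, P_i)$ along paths. Recall that for each $P_i$, the median $p_i^*$ can be taken to be a vertex $v$ such that neither subtree obtained by deleting an edge at $v$ has $P_i$-probability-sum exceeding $1/2$ (weighting $p_{ij}$ by $w_i f_{ij}$; since $\sum_j f_{ij}=1$ the relevant threshold is $w_i/2$, or equivalently $1/2$ after normalizing by $w_i$). Equivalently, $v$ is the median iff for every neighbor $u$ of $v$, the probability sum of $P_i$ in the component of $T-e(u,v)$ containing $u$ is at most $1/2$. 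So the task reduces to a weighted-median-in-a-tree computation, but done for $n$ different weight distributions that collectively have total support size $M$, within total time $O(M\log M)$ rather than $O(nM)$.

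The key steps, in order. First, root $T$ at an arbitrary vertex and, for each $P_i$, compute by one postorder traversal the probability sum of $P_i$ in every rooted subtree; done naively across all $i$ this is $O(nM)$, so instead I would process the locations of $P_i$ only along the union of root-to-$p_{ij}$ paths — but to keep this near-linear I would use the \emph{connector-bounded centroid decomposition} of $T$ promised in Section~\ref{sec:decmedian}: it gives an $O(\log M)$-depth hierarchy of subtrees each with $O(1)$ ``connector'' vertices to the rest, so that the $m_i$ locations of each $P_i$ can be aggregated bottom-up through the decomposition in $O(m_i\log M)$ time, for $O(M\log M)$ total. Second, with these partial sums available, locate the median of $P_i$ by a descent in the decomposition/tree: start at an appropriate node and repeatedly move toward the child subtree whose probability sum exceeds $1/2$ (there is at most one such child, by convexity); this walk has length $O(\log M)$ per uncertain point if guided by the centroid hierarchy, again $O(M\log M)$ overall. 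Third, at the terminal vertex verify the median condition against all incident edges (there are $O(1)$ relevant directions because only the direction containing locations of $P_i$ matters, and these are tracked by the decomposition), and output that vertex as $p_i^*$.

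The main obstacle is the second step: carrying out $n$ separate median searches without paying $\Theta(n\cdot|T|)$ in the worst case (e.g. a long path shared by many $P_i$). This is exactly what the connector-bounded centroid decomposition is designed to handle — it lets each median search touch only $O(\log M)$ pieces of $T$ while still charging the location-aggregation cost to the $m_i$ locations of $P_i$ — so the crux of the proof is to set up that decomposition and argue that (a) each $P_i$'s probability sums can be propagated through it in $O(m_i\log M + \log M)$ time, and (b) the median descent stays within $O(\log M)$ pieces. Granting the decomposition's existence and properties from Section~\ref{sec:decmedian}, the remaining bookkeeping (sorting locations within each piece, prefix sums along connector paths, handling the $w_i$ weights and the degenerate case where the $1/2$-threshold is met exactly, in which case we pick the vertex endpoint) is routine and contributes only the $\log M$ factor from sorting, giving the claimed $O(M\log M)$ bound.
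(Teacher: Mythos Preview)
Your approach is correct and uses the same connector-bounded centroid decomposition as the paper, but the organization differs. The paper processes the decomposition tree $\Upsilon$ \emph{top-down in batch}: at each node $\mu$ it maintains a single index list $L(\mu)$ of all medians still known to lie in $T(\mu)$, together with a probability list $L(y,\mu)$ at each connector $y$ storing the probability sum of each relevant $P_i$ in $T(y,\mu)$; it then traverses $T(\mu)$ once to route every index in $L(\mu)$ to the correct child and to rebuild the connector lists. The crucial invariant is that every $i\in L(\mu)$ has at least one location in $T(\mu)$, so $|L(\mu)|\le |T(\mu)|$ and each node is processed in $O(|T(\mu)|)$ time, giving $O(M)$ per level and $O(M\log M)$ total. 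Your scheme instead works \emph{per uncertain point}: a bottom-up pass charges each of the $M$ locations to the $O(\log M)$ pieces containing it, and then each median is located by an individual $O(\log M)$-step descent. Both reach $O(M\log M)$, and both rely on the same fact (the side of a centroid split with probability $>\tfrac12$ must contain a location of $P_i$, so the descent path never leaves location-bearing pieces).

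One point you leave implicit should be made explicit: at a node $\mu$ with two connectors, the descent decision is based on the probability of $P_i$ in the \emph{global} half of $T$ determined by the centroid, i.e.\ $T(\mu_c)\cup T(y,\mu)$, not merely in the child piece $T(\mu_c)$. Your aggregated sums give you only the latter; you additionally need the connector sums $F(i,y,\mu)$. This is exactly what the $O(1)$-connector bound buys you---those sums can be carried along and updated in $O(1)$ per descent step (when you move into $T(\mu_c)$, the new connector at the centroid inherits the sibling's piece-sum plus any connector sum that lay on the sibling's side)---but your phrase ``move toward the child subtree whose probability sum exceeds $1/2$'' elides this, and ``prefix sums along connector paths'' does not obviously describe it. The paper handles this by maintaining the probability lists $L(y,\mu)$ explicitly as part of the top-down sweep; in your per-point version you should spell out the incremental update of the two connector sums along the descent.
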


\subsection{The Medians-Spanning Tree $T_m$}
\label{sec:mediantree}

Denote by $P^*$ the set of all medians $p_i^*$.
Let $T_m$ be the minimum connected subtree of $T$ that spans/connects all
medians. Note that each leaf of $T_m$ must hold a median.
We pick an
arbitrary median as the root of $T$, denoted by $r$. The
subtree $T_m$ can be easily computed in $O(M)$ time by a
post-order traversal on $T$ (with respect to the root $r$), and we
omit the details. The following lemma is based on the fact that $\Ed(x,P_i)$ is
convex for $x$ on any simple path of $T$ and $\Ed(x,P_i)$ minimizes at
$x=p_i^*$.

\begin{lemma}\label{lem:Tm}
There exists an optimal solution for the center-coverage problem in which every
center is on $T_m$.
\end{lemma}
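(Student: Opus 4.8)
The plan is to take an arbitrary optimal solution and transform it, center by center, into one that lives entirely on $T_m$ without increasing the number of centers or breaking coverage. The key tool is the projection of a point of $T$ onto the subtree $T_m$: since $T_m$ is connected, for any point $x\in T$ there is a unique point $x'\in T_m$ closest to $x$, namely the point where the path from $x$ to $T_m$ first meets $T_m$; equivalently, $x'$ is the vertex (or point) of $T_m$ such that $\pi(x,p_i^*)$ passes through $x'$ for every median $p_i^*$. I would first establish this projection is well defined, using that $T$ is a tree and $T_m$ is a connected subtree.

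The heart of the argument is then the claim that if a center $c$ covers $P_i$, so does its projection $c'$ onto $T_m$, i.e.\ $\Ed(c',P_i)\le \Ed(c,P_i)\le\lambda$. To see this, recall from Section~\ref{sec:pre} that $\Ed(x,P_i)$ is convex along any simple path of $T$ and is minimized at $x=p_i^*\in T_m$. Consider the simple path $\pi(c,p_i^*)$. By the defining property of the projection, $c'$ lies on this path (since the path from $c$ to any point of $T_m$, in particular to $p_i^*$, must pass through $c'$). Thus $c'$ is a point on $\pi(c,p_i^*)$ lying between $c$ and the minimizer $p_i^*$, so by convexity $\Ed(c',P_i)\le \Ed(c,P_i)\le\lambda$. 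Hence $c'$ covers every uncertain point that $c$ covered.

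Putting it together: given an optimal solution $C=\{c_1,\dots,c_k\}$, replace each $c_j$ by its projection $c_j'$ onto $T_m$. The resulting set $C'$ has at most $k$ centers (projections may coincide, which only helps), lies entirely on $T_m$, and covers every $P_i$ that $C$ covered — and $C$ covered all of $\calP$, so $C'$ does too. Therefore $C'$ is a feasible solution of size $\le k$, hence also optimal, and all its centers are on $T_m$, proving the lemma.

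I expect the only delicate point to be the clean justification that the projection $c'$ lies on the path $\pi(c,p_i^*)$ for \emph{every} $i$ simultaneously; this is exactly where connectivity of $T_m$ and the tree structure are used, and it is worth stating carefully (if $c\in T_m$ already then $c'=c$ and there is nothing to prove; otherwise the component of $T\setminus\{c'\}$ containing $c$ contains no point of $T_m$, so in particular no median, and thus every $\pi(c,p_i^*)$ exits this component through $c'$). Everything else is a direct application of the convexity fact already recorded in Section~\ref{sec:pre}.
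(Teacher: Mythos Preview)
Your proposal is correct and follows essentially the same approach as the paper: both project each off-$T_m$ center to its closest point on $T_m$ and use the convexity/monotonicity of $\Ed(x,P_i)$ along $\pi(c,p_i^*)$ to conclude that coverage is preserved. Your write-up is in fact a bit more careful than the paper's in justifying that the projection $c'$ lies on $\pi(c,p_i^*)$ for every $i$; the paper simply takes the nearest vertex of $T_m$ (phrased as ``vertex of $T_m$ that holds a median'') and asserts the resulting decomposition directly.
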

\begin{proof}
Consider an optimal solution and let $C$ be the set of all centers
in it. Assume there is a center $c\in C$ that is not on $T_m$.
Let $v$ be the vertex of $T_m$ that holds a median and is closest to $c$.
Then $v$ decomposes $T$ into two subtrees $T_1$ and $T_2$ with the only
common vertex $v$ such that $c$ is in one subtree, say $T_1$, and all
medians are in $T_2$. If
we move a point $x$ from $c$ to $v$ along $\pi(c,v)$, then
$\Ed(x,P_i)$ is non-increasing for each $i\in [1,n]$. This implies that if we move
the center $c$ to $v$, we can obtain an optimal solution in which $c$ is
in $T_m$.

If $C$ has other centers that are not on $T_m$, we do the
same as above to obtain an optimal solution in which all centers are
on $T_m$. The lemma thus follows.
\qed
\end{proof}

Due to Lemma~\ref{lem:Tm}, we will focus on finding centers on $T_m$. We also
consider $r$ as the root of $T_m$. With respect to $r$, we can talk
about ancestors and descendants of the vertices in $T_m$.
Note that for any two vertices $u$ and $v$ of $T_m$, $\pi(u,v)$ is in $T_m$.

We reindex all medians and the corresponding
uncertain points so that the new indices will facilitate our
algorithm, as follows.
Starting from an arbitrary child of $r$ in $T_m$, we traverse down
the tree $T_m$ by always following the leftmost child of the current
node until we encounter a leaf, denoted by $v^*$. Starting from $v^*$ (i.e., $v^*$ is the first visited leaf),
we perform a post-order traversal on $T_m$ and reindex all medians of
$P^*$ such that $p_1^*,p_2^*,\ldots,p^*_n$ is the list of points of
$P^*$ visited in order in the above traversal. Recall that the
root $r$ contains a median, which is $p^*_n$ after the reindexing. Accordingly, we also
reindex all uncertain points of $\calP$ and their corresponding
locations on $T$, which can be done in $O(M)$ time. In the following
paper, we will always use the new indices.

 For each vertex $v$ of $T_m$, we
use $T_m(v)$ to represent the subtree of $T_m$ rooted at $v$. The reason we do the
above reindexing is that for any vertex $v$ of $T_m$, the new indices of all
medians in $T_m(v)$ must form a range $[i,j]$ for
some $1\leq i\leq j\leq n$, and we use $R(v)$ to denote the range. It will be clear later that this property will facilitate our algorithm.


\subsection{The Algorithm}

Our algorithm for the center-coverage problem works as follows. Initially,
all uncertain points are ``active''. During the algorithm, we will
place centers on $T_m$, and once an uncertain point $P_i$ is covered
by a center, we will ``deactivate'' it (it then becomes
``inactive''). The algorithm visits all
vertices of $T_m$ following the above post-order
traversal of $T_m$ starting from leaf $v^*$. Suppose $v$ is currently being
visited. Unless $v$ is the root $r$, let $u$ be the parent of $v$.
Below we describe our algorithm for processing $v$. There are two cases
depending on whether $v$ is a leaf or an internal node, although the
algorithm for them is essentially the same.

\subsubsection{The Leaf Case}
If $v$ is a leaf, then it holds a median $p_i^*$. If $P_i$ is inactive,
we do nothing; otherwise, we proceed as follows.

We compute a point $c$ (called a {\em candidate center}) on the path
$\pi(v,r)$ closest to $r$ such that
$\Ed(c,P_i)\leq \lambda$. Note that if we move a point $x$ from $v$ to $r$ along $\pi(v,r)$,
$\Ed(x,P_i)$ is monotonically increasing. By the definition of $c$,
if $\Ed(r,P_i)\leq \lambda$, then $c=r$; otherwise, $\Ed(c,P_i)=\lambda$.
If $c$ is in $\pi(u,r)$, then we do nothing and finish processing $v$.
Below we assume that $c$ is not in $\pi(u,r)$ and thus is in
$e(u,v)\setminus\{u\}$ (i.e., $c\in e(u,v)$ but $c\neq u$).

In order to cover $P_i$, by the definition of $c$, we must place a
center in $e(u,v)\setminus\{u\}$. Our strategy is to place a center at $c$.
Indeed, this is the
best location for placing a center since it is the location that can
cover $P_i$ and is closest to $u$ (and thus is closest to every other
active uncertain point).
We use a {\em candidate-center-query} to compute $c$ in
$O(\log n)$ time,  whose details will be discussed later.
Next, we report all active uncertain points that can be covered by $c$,
and this is done by a {\em coverage-report-query} in output-sensitive
$O(\log M\log n+k\log n)$ amortized time, where $k$ is the number of uncertain points covered by $c$. The details for the operation will be discussed later.
Further, we deactivate all these uncertain points.
We will show that deactivating each uncertain point $P_j$ can be done
in $O(m_j\log M\log n)$ amortized time.
This finishes processing $v$.

\subsubsection{The Internal Node Case}
If $v$ is an internal node, since we process the vertices of $T_m$
following a post-order traversal, all descendants of $v$ have already been processed. Our algorithm
maintains an invariant that if the subtree $T_m(v)$ contains any active median
$p_i^*$ (i.e., $P_i$ is active), then $\Ed(v,P_i)\leq \lambda$.
When $v$ is a leaf, this invariant trivially holds. Our way of
processing a leaf discussed above also maintains this invariant.

To process $v$, we first check whether $T_m(v)$ has any active medians.
This is done by a
{\em range-status-query} in $O(\log n)$ time, whose details
will be given later.
If  $T_m(v)$  does not have any active median, then we are done with
processing $v$.
Otherwise, by the algorithm invariant, for each active median
$p_i^*$ in $T_m(v)$, it holds that $\Ed(v,P_i)\leq \lambda$.
If $v=r$, we place a center at $v$ and finish the entire algorithm.
Below, we assume $v$ is not $r$ and thus $u$ is the parent of $v$.

We compute a point $c$ on $\pi(v,r)$ closest to $r$ such that
$\Ed(c,P_i)\leq \lambda$ for all active medians $p_i^*\in T_m(v)$, and we call
$c$ the {\em candidate center}.
By the definition of $c$, if $\Ed(r,P_i)\leq \lambda$ for all active
medians $p_i^*\in T_m(v)$, then $c=r$; otherwise, $\Ed(c,P_i)=\lambda$ for
some active median $p_i^*\in T_m(v)$. As in the leaf case,
finding $c$ is done in $O(\log n)$ time by a {\em candidate-center-query}.
If $c$ is on $ \pi(u,r)$, then we finish processing $v$.
Note that this implies $\Ed(u,P_i)\leq \lambda$ for each active
median $p_i^*\in T_m(v)$, which maintains the algorithm invariant for $u$.

If $c\not\in \pi(u,r)$, then $c\in e(u,v)\setminus\{u\}$.
In this case, by the definition of $c$,
we must place a center in $e(u,v)\setminus\{u\}$ to cover $P_i$.
As discussed in the leaf case, the best location for placing a center
is $c$ and thus we place a center at $c$.
Then, by using a coverage-report-query, we find all active uncertain points
covered by $c$ and deactivate them. Note that by the definition of $c$, $c$ covers $P_j$ for all medians $p_j^*\in T_m(v)$.
This finishes processing $v$.

Once the root $r$ is processed, the algorithm finishes.

\subsection{The Time Complexity}

To analyze the running time of the algorithm, it remains to discuss the three operations: range-status-queries, coverage-report-queries, and candidate-center-queries. For answering range-status-queries, it is trivial, as shown in Lemma~\ref{lem:status}.

\begin{lemma}\label{lem:status}
We can build a data structure in $O(M)$ time that can answer each range-status-query in $O(\log n)$ time. Further, once an uncertain point is deactivated, we can remove it from the data structure in $O(\log n)$ time.
\end{lemma}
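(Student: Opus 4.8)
The plan is to realize range-status-queries by maintaining, over the index range $[1,n]$ of uncertain points, a data structure that records which uncertain points are currently active and supports the query ``does a given index range $[i,j]$ contain at least one active index?'' together with deletions (deactivations). First I would recall from Section~\ref{sec:mediantree} that the reindexing guarantees that for every vertex $v$ of $T_m$, the set of medians in $T_m(v)$ occupies a contiguous range $R(v)=[i,j]$; thus a range-status-query for $v$ is exactly the contiguous-range emptiness query above, and no per-vertex work beyond knowing $R(v)$ is needed. The two ranges $R(v)$ for all vertices can be precomputed in $O(M)$ time during the same post-order traversal that builds $T_m$, so I only need a structure on the array $A[1\mathinner{\ldotp\ldotp}n]$ of activity bits.

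The concrete implementation I would use is a balanced binary tree (or a segment tree) over the leaves $1,\dots,n$, where each internal node stores a count of active indices in its subtree (equivalently, a single bit saying whether that count is positive). Building it over $n\le M$ leaves takes $O(n)=O(M)$ time. A range-emptiness query on $[i,j]$ decomposes $[i,j]$ into $O(\log n)$ canonical nodes and returns ``nonempty'' iff one of them has positive count, giving $O(\log n)$ query time. Deactivating an uncertain point $P_i$ means setting $A[i]$ to inactive and decrementing the counters along the root-to-leaf path of leaf $i$, which touches $O(\log n)$ nodes and hence takes $O(\log n)$ time; since each $P_i$ is deactivated at most once, this cost is charged once per uncertain point.

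The steps, in order: (1) observe $R(v)$ is a contiguous range and precompute all $R(v)$ in $O(M)$ time; (2) build the counting balanced tree / segment tree on $[1,n]$ in $O(n)\subseteq O(M)$ time; (3) answer a range-status-query for $v$ by a canonical-range-emptiness query on $R(v)$ in $O(\log n)$ time; (4) implement deactivation of $P_i$ by a single leaf update with ancestor-counter maintenance in $O(\log n)$ time. There is essentially no hard part here: the only thing to be slightly careful about is making explicit that the reindexing makes $R(v)$ contiguous (already established) and that a static balanced tree with subtree counters supports both the emptiness query and point deletions in logarithmic time; everything else is routine, which is why the lemma is stated as ``trivial''.
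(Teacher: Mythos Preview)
Your proposal is correct and follows essentially the same approach as the paper: both reduce the range-status-query to a range-emptiness query on the contiguous interval $R(v)$ after precomputing all $R(v)$ in $O(M)$ time by a post-order traversal of $T_m$. The only cosmetic difference is the underlying structure: the paper maintains the active indices in a balanced binary search tree and answers the query by finding the successor of $i$ and checking whether it lies in $[i,j]$, whereas you use a segment tree with subtree counters and a canonical-node decomposition. Both are standard and yield the stated $O(\log n)$ bounds for queries and deletions.
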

\begin{proof}
Initially we build a balanced binary search tree $\Phi$ to maintain all indices $1,2,\ldots,n$. If an uncertain point $P_i$ is deactivated, then we simply remove $i$ from the tree in $O(\log n)$ time.

For each range-status-query, we are given a vertex $v$ of $T_m$, and the goal is to decide whether $T_m(v)$ has any active medians. Recall that all medians in $T_m(v)$ form a range $R(v)=[i,j]$. As preprocessing, we compute $R(v)$ for all vertices $v$ of $T_m$, which can be done in $O(|T_m|)$ time by the post-order traversal of $T_m$ starting from leaf $v^*$. Note that $|T_m|=O(M)$.

During the query, we simply check whether $\Phi$ still contains any index in the range $R(v)=[i,j]$, which can be done in $O(\log n)$ time by standard approaches (e.g., by finding the successor of $i$ in $\Phi$).
\qed
\end{proof}

For answering the coverage-report-queries and the candidate-center-queries, we have the following two lemmas. Their proofs are deferred to Section~\ref{sec:ds}.

\begin{lemma}\label{lem:report}
We can build a data structure $\calA_1$ 
in $O(M\log^2 M)$ time that can answer in $O(\log M \log n+k\log n)$ amortized time each
coverage-report-query, i.e., given any point $x\in T$, report
all active uncertain points covered by $x$, where $k$ is the output size.
Further, if an uncertain
point $P_i$ is deactivated, we can remove $P_i$ from $\calA_1$
in $O(m_i\cdot \log M\cdot \log n)$ amortized time.
\end{lemma}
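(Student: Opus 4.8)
The plan is to build $\calA_1$ on top of the connector-bounded centroid decomposition $\calD$ of $T$ developed in Section~\ref{sec:decmedian}. Recall that $\calD$ has $O(\log M)$ levels, each vertex of $T$ lies in $O(\log M)$ components of $\calD$, each component $C$ attaches to the rest of $T$ through only $O(1)$ connector vertices, and $\calD$ has total size $O(M\log M)$ and is constructed in $O(M\log M)$ time; since we are in the vertex-constrained case, $|T|=M$.

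The first ingredient is a distance-decomposition identity. Fix a point $x\in T$ and let $C_0\supsetneq C_1\supsetneq\cdots\supsetneq C_\ell\ni x$ be the components of $\calD$ that contain $x$, with centroids $c_0,\dots,c_\ell$; for $t\ge 1$ let $b_t$ be the connector of $C_t$ adjacent to $c_{t-1}$. Because $C_t$ is a connected subtree of $T$ and $C_{t+1}$ is a connected component of $C_t\setminus\{c_t\}$, the path in $T$ from $x$ to any location $p_{ij}$ that lies in $C_t$ but not in $C_{t+1}$ passes through $c_t$. Grouping the locations of $P_i$ according to the band $C_t\setminus C_{t+1}$ in which they fall therefore gives
\[
\Ed(x,P_i)=w_i\sum_{t=0}^{\ell}\bigl(d(x,c_t)\,\bar S_t(i)+\bar D_t(i)\bigr),
\]
where $\bar S_t(i)$ is the probability sum of $P_i$ inside $C_t\setminus C_{t+1}$ and $\bar D_t(i)=\sum_{p_{ij}\in C_t\setminus C_{t+1}}f_{ij}\,d(c_t,p_{ij})$ (with $C_{\ell+1}=\emptyset$). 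I would precompute, for every component $C$ of $\calD$ with centroid $c_C$, every uncertain point $P_i$ having a location in $C$, and every vertex $v$ in $\{c_C\}$ together with the connectors of $C$, the two aggregates $S(C,i)=\sum_{p_{ij}\in C}f_{ij}$ and $D_v(C,i)=\sum_{p_{ij}\in C}f_{ij}\,d(v,p_{ij})$; then $\bar S_t(i)=S(C_t,i)-S(C_{t+1},i)$ and $\bar D_t(i)=D_{c_t}(C_t,i)-d(c_t,b_{t+1})\,S(C_{t+1},i)-D_{b_{t+1}}(C_{t+1},i)$, so each is available in $O(1)$ time. A location lies in only $O(\log M)$ components and in each feeds only $O(1)$ of these aggregates, so there are $O(M\log M)$ of them and they can be accumulated bottom-up over $\calD$ in $O(M\log M)$ time. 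At query time the chain of $x$ is obtained in $O(\log M)$ time from parent pointers in $\calD$, and the $O(\log M)$ numbers $d(x,c_t)$ in $O(\log M)$ time from precomputed centroid-to-connector distances inside the components of the chain together with the position of $x$ on its edge (edge-interior query points being handled by stopping the chain at the component whose centroid is an endpoint of the edge carrying $x$).

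Next I would attach to each component $C$ of $\calD$ a balanced binary search tree $B_C$ holding every uncertain point that has a location in $C$ (so $P_i$ occurs in $O(m_i\log M)$ of these trees, and the total number of stored items is $O(M\log M)$); building them costs $O(M\log^2 M)$ time. The key under which $P_i$ is stored in $B_C$ is to be chosen so that, once a query has fixed its chain and the numbers $d(x,c_t)$, the test $\Ed(x,P_i)\le\lambda$ can be resolved level by level: using the identity above, at level $t$ the still-undecided uncertain points of $B_{C_t}$ whose accumulated contribution from levels $\le t$ does not yet exceed $\lambda$ should form a contiguous range of $B_{C_t}$, so they can be pulled out by a one-sided range search, and the $O(1)$ auxiliary trees indexed by the connectors of $C_t$ are used to pass the partial sums on to the deeper levels. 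A coverage-report-query then walks the $O(\log M)$ components of the chain, performs $O(1)$ one-sided range extractions per component, and at the end discards the extracted points that turn out to be over budget; with $O(\log n)$ per tree operation and a standard potential/charging argument to absorb the extracted-but-discarded points (and any lazy cleanup), this runs in $O(\log M\log n+k\log n)$ amortized time. Deactivating $P_i$ means deleting its entry from $B_C$ for each of the $O(m_i\log M)$ components $C$ containing a location of $P_i$, which is $O(m_i\log M\log n)$ amortized time; the precomputed aggregates need no update, since whole uncertain points rather than individual locations are removed and, by the bookkeeping of Lemma~\ref{lem:status}, aggregates referring to deactivated points are never consulted.

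The main obstacle is precisely this output-sensitive extraction. The budget $\lambda$ bounds a \emph{sum} of up to $O(\log M)$ per-level contributions of $P_i$, so a naive level-by-level filter would have to inspect every uncertain point registered at the top component $C_0=T$, that is, all of $\calP$. Making the extraction cost only $O(\log M+k)$ tree operations requires (i) a choice of sort key in each $B_C$ under which ``still within budget after this level'' is always a prefix (or suffix), (ii) separate handling of uncertain points whose median branches off the query chain, and (iii) a charging scheme that amortizes away candidates surviving the early levels but failing a later one; making these three fit together with the connector-bounded decomposition and with the deletion operation is where the technical work concentrates.
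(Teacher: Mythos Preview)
Your proposal has a genuine gap, and you have correctly put your finger on it: the ``main obstacle'' you name at the end is not an obstacle you overcome. Storing each $P_i$ in $B_C$ for every component $C$ containing a location of $P_i$, and then trying to filter level by level against the global budget $\lambda$, runs straight into the problem that the per-level contributions of different $P_i$'s are not comparable under any single sort key. Requirement (i) --- that ``still within budget after this level'' be a prefix of $B_C$ --- cannot hold in general: at level $t$ the partial sum for $P_i$ depends on its aggregates at \emph{all} earlier levels, and two uncertain points can cross over between levels. No one-dimensional ordering inside $B_C$ separates the covered from the uncovered, and the charging scheme in (iii) has nothing concrete to charge to. As stated, the query algorithm is a wish, not an algorithm.

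The paper avoids this obstacle by organizing the decomposition differently. Instead of registering $P_i$ at every component that contains one of its locations, it registers $P_i$ at a node $\mu$ of the decomposition tree precisely when $P_i$ has \emph{no} location in $T(\mu)$ but still has one in the parent $T(\mu')$; call this list $L(\mu)$. Then along the root-to-leaf path of any query point $x$, each $P_i$ appears in at most one $L(\mu)$, and crucially, for $i\in L(\mu)$ and $x\in T(\mu)$, every location of $P_i$ lies outside $T(\mu)$ and is reached through one of the at most two connectors $y_1,y_2$. A short computation then shows that $\Ed(x,P_i)$ is an affine function of just two query parameters $a_x=d(x,q_x)$ and $b_x=d(q_x,y_1)$ (where $q_x$ is the projection of $x$ onto $\pi(y_1,y_2)$). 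Hence ``$\Ed(x,P_i)\le\lambda$'' is a half-plane constraint in the $(a_x,b_x)$ plane, and reporting all covered $P_i$ in $L(\mu)$ is half-plane range reporting, handled with a dynamic convex hull structure supporting deletions (Brodal--Jacob). Summed over the $O(\log M)$ nodes on the path this gives the claimed $O(\log M\log n + k\log n)$ amortized query time; the deletion bound follows because $i$ lies in $L(\mu)$ for $O(m_i\log M)$ nodes $\mu$. The step you are missing is exactly this linearity-under-two-connectors observation and the corresponding placement of $P_i$ at its ``transition'' nodes rather than at all components containing it.
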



\begin{lemma}\label{lem:candidate}
We can build a data structure $\calA_2$ in
$O(M\log M+n\log^2 M)$ time that can answer in $O(\log n)$ time each
candidate-center-query, i.e., given any vertex $v\in T_m$,
find the candidate center $c$ for the active medians of $T_m(v)$.
Further, if an uncertain point $P_i$ is deactivated, we can remove $P_i$ from $\calA_2$ in $O(\log n)$ time.
\end{lemma}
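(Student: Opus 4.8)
The plan is to reduce each candidate-center-query to a small number of ``prefix-style'' queries on the path $\pi(v,r)$ together with an aggregation over the active medians in $T_m(v)$. Recall that for a fixed active median $p_i^*\in T_m(v)$, the function $\Ed(x,P_i)$ is monotonically increasing as $x$ moves from $v$ toward $r$ along $\pi(v,r)$, so the location on $\pi(v,r)$ closest to $r$ at which $P_i$ is still covered is well defined; call it $c_i$. The candidate center $c$ we must return is the one among $\{c_i : p_i^* \in T_m(v),\ P_i \text{ active}\}$ that is \emph{closest to} $v$ (equivalently, the ``tightest'' constraint), or $r$ itself if every active median is already covered by $r$. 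Thus the query splits into two subproblems: (i) for each relevant $i$, compute $c_i$; (ii) take the minimum (in the order along $\pi(v,r)$) over all active medians in $T_m(v)$.

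For subproblem (i), I would precompute, for each median $p_i^*$, a representation of the piecewise-linear convex function $\Ed(\cdot,P_i)$ restricted to the path from $p_i^*$ up to the root $r$; using the expected-distance data structure of the paper (built in $O(M\log M)$ time, $O(\log M)$ per evaluation) this lets us, for the specific ``which edge on $\pi(p_i^*,r)$ does the value $\lambda$ fall in, and where'' question, do a binary search over the $O(\log M)$ ``heavy-path-decomposition'' pieces of the root path, spending $O(\log M)$ per piece, i.e.\ $O(\log^2 M)$ per median; over all $n$ medians this is the $O(n\log^2 M)$ term in the preprocessing bound. The upshot is that we can tabulate $c_i$ (as a point on $T_m$, i.e.\ an edge plus an offset) for every $i$ during preprocessing; but note that for an internal-node query we need $c$ measured along $\pi(v,r)$, not along $\pi(p_i^*,r)$ — however these two paths share the suffix from $v$ to $r$, and every relevant $p_i^*$ lies in $T_m(v)$, so $c_i$ as computed on $\pi(p_i^*,r)$ already coincides with its position on $\pi(v,r)$ whenever $c_i$ is at or above $v$, and otherwise $c_i$ lies strictly inside $T_m(v)$ and is dominated by $v$ itself for the purpose of the min. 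So it suffices to clamp each $c_i$ to $v$ and then take the one closest to $v$.

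For subproblem (ii), the key observation is the reindexing from Section~\ref{sec:mediantree}: the active medians of $T_m(v)$ occupy a contiguous index range $R(v)=[i,j]$ intersected with the set of still-active indices. So I would maintain a balanced binary search tree (or segment tree) keyed on the index $1,\dots,n$, storing at leaf $i$ a key encoding the position of $c_i$ along the root path, and supporting a range-minimum query that returns, over the active leaves in $[i,j]$, the entry whose $c_i$ is closest to $v$. To make ``closest to $v$'' comparable across queries with different $v$, I would store each $c_i$ by the distance $d(c_i,r)$ (a single real number, since all the relevant $c_i$'s lie on the root path and the one with the \emph{largest} $d(\cdot,r)$ is the one closest to $v$ among those that are ancestors of $v$); a range-\emph{maximum} query on $d(c_i,r)$ over the active part of $[i,j]$ then gives the answer, after which one comparison against $d(v,r)$ and $d(u,r)$ decides whether $c$ falls on $e(u,v)$, on $\pi(u,r)$, or equals $r$. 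Such a tree answers each query in $O(\log n)$ and supports deletion of a deactivated $P_i$ (just delete leaf $i$) in $O(\log n)$, matching the claimed bounds; building it is $O(n)$ once all $c_i$ are known. The leaf-case query is the special case $R(v)=\{i\}$.

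The main obstacle I expect is the geometry bookkeeping in subproblem (i): cleanly arguing that the $c_i$ computed on $\pi(p_i^*,r)$ is the correct value to feed into the range-max structure for \emph{every} ancestor-query $v$, i.e.\ that clamping to $v$ is sound and that ``farthest from $r$ on the root path'' is the right comparison key even when some $c_i$ are not ancestors of the query vertex. Handling the boundary cases — $c_i=r$ for some medians, $c_i$ strictly below $v$, ties where $\Ed(r,P_i)=\lambda$ exactly — needs care but is routine. Once that is pinned down, the decomposition into a precomputable per-median quantity plus a range-max over a contiguous index range makes both the $O(\log n)$ query time and the $O(\log n)$ deletion time immediate.
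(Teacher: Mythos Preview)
Your proposal is correct and matches the paper's approach: precompute each $q_i$ (your $c_i$) by binary search along $\pi(p_i^*,r)$ using the expected-distance structure $\calA_3$ (the paper does this with a DFS that maintains the current root-path as an array rather than via heavy paths, but either way gives $O(\log^2 M)$ per median), and then maintain a balanced tree over the index range $[1,n]$ that aggregates these per-leaf values and supports deletions in $O(\log n)$. The obstacle you flag---that the scalar key $d(c_i,r)$ might be the wrong comparison when some $c_i$ sits below the query vertex $v$---is resolved exactly by the algorithm invariant ($\Ed(v,P_i)\le\lambda$ for every active $i\in R(v)$), which forces every relevant $c_i$ onto $\pi(v,r)$; the paper's variant instead stores each aggregate as an actual point on $T_m$ and compares positions via LCA queries, which additionally lets it answer the slightly more general range query that may return~$\emptyset$.
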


Using these results, we obtain the following.

\begin{theorem}\label{theo:10}
We can find a minimum number of centers on $T$ to cover all uncertain
points of $\calP$ in $O(M\log^2 M)$ time.
\end{theorem}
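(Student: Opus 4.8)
The plan is to prove Theorem~\ref{theo:10} by establishing three things: (i) the greedy scheme of Section~\ref{sec:main} returns a feasible set of centers; (ii) this set has minimum cardinality; and (iii) the whole procedure runs in $O(M\log^2 M)$ time, assuming Lemmas~\ref{lem:median}, \ref{lem:status}, \ref{lem:report} and~\ref{lem:candidate}. By Lemma~\ref{lem:Tm} it suffices to compare against optimal solutions whose centers all lie on $T_m$, and by the reduction of Section~\ref{sec:reduction} it suffices to treat the vertex-constrained case, in which $|T|=M$; $n\le M$, so $\log n\le\log M$ throughout.

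For feasibility I would argue by induction along the post-order traversal that the algorithm invariant holds: after $v$ is processed, every still-active median $p_i^*$ in $T_m(v)$ has $\Ed(v,P_i)\le\lambda$. It holds vacuously at a leaf; if the candidate center $c$ at $v$ lies on $\pi(u,r)$ the algorithm places nothing, and since $\pi(p_i^*,r)$ passes through $v$ then $u$ then $c$, monotonicity of $\Ed(\cdot,P_i)$ away from $p_i^*$ gives $\Ed(u,P_i)\le\Ed(c,P_i)\le\lambda$, passing the invariant to $u$; if instead $c\in e(u,v)\setminus\{u\}$, the center at $c$ covers every active median of $T_m(v)$ by the choice of $c$, so no active median survives upward. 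Hence at $r$ every active median is covered by the center placed at $r$, every median deactivated earlier was covered when deactivated, and since $T_m$ spans all medians, every uncertain point is covered. For optimality I would attach to each placed center $c$ (processing some $v$, possibly $v=r$) a \emph{witness}: the active median $p_i^*\in T_m(v)$ with $\Ed(c,P_i)=\lambda$ (or any active one if $v=r$). The first key claim is that any point of $T$ covering the witness $P_i$ lies in the component $T_{\downarrow c}$ of $T\setminus\{c\}$ containing $v$, together with $c$; this follows because $c$ is the point of $\pi(v,r)$ closest to $r$ still covering $P_i$, and moving away from $p_i^*$ only increases $\Ed(\cdot,P_i)$. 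The second key claim is that for centers $c_s,c_t$ placed at steps $s<t$, no point $z$ can cover both witnesses $P_{i(s)},P_{i(t)}$: if $p^*_{i(t)}\in T_m(v_s)$ then $c_s$ was chosen to cover $P_{i(t)}$ as well, so the algorithm would deactivate $P_{i(t)}$ at step $s<t$, contradicting that it is its witness at $t$; otherwise $p^*_{i(t)}$ lies strictly on the root-side of $c_s$ while $z\in T_{\downarrow c_s}$, so $c_s\in\pi(z,p^*_{i(t)})$, and convexity of $\Ed(\cdot,P_{i(t)})$ gives $\Ed(c_s,P_{i(t)})\le\max\{\Ed(z,P_{i(t)}),\Ed(p^*_{i(t)},P_{i(t)})\}\le\lambda$, again forcing a step-$s$ deactivation. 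Therefore in any feasible solution (in particular any optimal one on $T_m$) there is a distinct center covering each witness, so the number of centers the algorithm produces is a lower bound on the optimum and is hence optimal.

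For the time bound I would add up the costs. Reducing to the vertex-constrained case is $O(|T|+M)$; computing all medians is $O(M\log M)$ (Lemma~\ref{lem:median}); building $T_m$, rooting it, and the post-order reindexing that makes each $R(v)$ a contiguous range are $O(M)$; the three data structures cost $O(M)$ (Lemma~\ref{lem:status}), $O(M\log^2 M)$ (Lemma~\ref{lem:report}), and $O(M\log M+n\log^2 M)=O(M\log^2 M)$ (Lemma~\ref{lem:candidate}). The traversal visits $O(|T_m|)=O(M)$ vertices, each incurring one range-status-query and one candidate-center-query, i.e.\ $O(\log n)$ each, for $O(M\log M)$ total; a coverage-report-query is issued only when a center is placed, and across the algorithm its per-query overhead sums to $O(M\log M\log n)$ and the reported-point cost to $O(n\log n)$, since each uncertain point is reported at most once before being deactivated; deactivating $P_i$ costs $O(m_i\log M\log n)$ in $\calA_1$ and $O(\log n)$ in $\calA_2$ and in the range-status structure, so over all $i$ this is $O(M\log M\log n)$ because $\sum_i m_i=M$. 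Every term is $O(M\log^2 M)$, which gives the theorem. I expect the main obstacle to be the optimality argument --- in particular, justifying precisely that every point covering a witness stays in $T_{\downarrow c}$ and that the witnesses' covering points in a feasible solution are pairwise distinct; the feasibility invariant and the running-time accounting are comparatively routine.
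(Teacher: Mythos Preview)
Your time-complexity accounting matches the paper's proof of Theorem~\ref{theo:10} essentially line for line; the only difference is that you bound the number of coverage-report-queries by $O(M)$ rather than by the sharper $n$ (the paper observes that at most $n$ centers are ever placed, one per deactivated witness), but both bounds are within $O(M\log^2 M)$.

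The paper's proof, however, contains \emph{only} the time analysis: feasibility and optimality of the greedy scheme are not argued in the proof of Theorem~\ref{theo:10} at all, being left to the informal remarks scattered through the algorithm description in Section~\ref{sec:main} (e.g.\ ``this is the best location for placing a center''). Your explicit invariant-based feasibility proof and witness-based lower bound therefore go beyond what the paper actually writes out, and the arguments are correct. One small wrinkle to tidy: when the placed center $c$ coincides with the vertex $v$ being processed (which can happen, since $c\in e(u,v)\setminus\{u\}$ allows $c=v$), your phrase ``the component of $T\setminus\{c\}$ containing $v$'' is undefined; it is cleaner to take $T_{\downarrow c}$ to be the set of points whose path to $r$ passes through $c$ (equivalently, the complement of the $r$-side component of $T\setminus\{c\}$, together with $c$), after which your two key claims go through unchanged.
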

\begin{proof}
First of all, the total preprocessing time of Lemmas~\ref{lem:status}, \ref{lem:report}, and \ref{lem:candidate} is $O(M\log^2 M)$.
Computing all medians takes $O(M\log M)$ time by Lemma~\ref{lem:median}.
Below we analyze the total time for computing centers on $T_m$.

The algorithm processes each vertex of $T_m$ exactly once. The
processing of each vertex calls each of the following three operations
at most once: coverage-report-queries, range-status-queries, and candidate-center-queries.
Since each of the last two operations runs in $O(\log n)$ time, the
total time of these two operations in the entire algorithm is $O(M\log
n)$. For the coverage-report-queries, each operation runs in $O(\log M \log n+k\log n)$
amortized time. Once an uncertain point $P_i$ is reported by it, $P_i$ will
be deactivated by removing it from all three data structures (i.e., those in  Lemmas~\ref{lem:status}, \ref{lem:report}, and \ref{lem:candidate}) and $P_i$
will not become active again. Therefore, each uncertain point will be
reported by the coverage-report-query operations at most once. Hence, the
total sum of the value $k$ in the entire algorithm is $n$. Further, notice that there are at most $n$ centers placed by the algorithm. Hence, there are at most $n$ coverage-report-query operations in the algorithm. Therefore, the
total time of the coverage-report-queries in the entire
algorithm is $O(n\log M\log n)$. In addition, since
each uncertain point $P_i$ will be deactivated at
most once, the total time of the remove operations for all three data
structures in the entire algorithm is $O(M\log M\log n)$ time.

As $n\leq M$, the theorem follows.  \qed
\end{proof}

In addition, Lemma~\ref{lem:ed} will be used to build the data structure $\calA_2$ in Lemma~\ref{lem:candidate}, and it will also help to solve the $k$-center problem in Section~\ref{sec:kcenter}. Its proof is given in Section~\ref{sec:ds}.

\begin{lemma}\label{lem:ed}
We can build a data structure $\calA_3$ in $O(M\log M)$ time
that can compute the expected distance $\Ed(x,P_i)$ in $O(\log M)$
time for any point $x\in T$ and any uncertain point $P_i\in \calP$.
\end{lemma}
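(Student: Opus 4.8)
\medskip\noindent\textbf{Proof proposal.}
The plan is to reduce a query at an arbitrary point $x$ of $T$ to a query at a \emph{vertex} of $T$, and then to answer vertex queries with a structure built on the connector-bounded centroid decomposition of $T$ (Section~\ref{sec:decmedian}). For the reduction, root $T$ at an arbitrary vertex $r$. Suppose $x$ lies on an edge $e(a,b)$ with $a$ the parent of $b$, let $t=d(x,a)$, and let $G$ be the probability sum of $P_i$ in the subtree of $T$ rooted at $b$. Splitting the defining sum of $\Ed(x,P_i)$ according to whether a location $p_{ij}$ lies below $b$, and comparing with the same expansion for $\Ed(a,P_i)$, gives
$$\Ed(x,P_i)=\Ed(a,P_i)+w_i\cdot t\cdot(1-2G).$$
So it suffices to support, each in $O(\log M)$ time, (i) probability-sum queries: given $i$ and a vertex $b$, return the probability sum of $P_i$ in $T(b)$; and (ii) vertex queries: given $i$ and a vertex $a$, return $\Ed(a,P_i)$. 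For (i) I would compute an Euler tour of $T$; since every vertex holds exactly one location, for each $P_i$ I collect its locations, sort them by Euler-tour entry time, and store prefix sums of the $f_{ij}$'s, so that a query reduces to two binary searches over the contiguous range of $T(b)$. The preprocessing for (i) is $O(\sum_i m_i\log m_i)=O(M\log M)$.

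For (ii), let $\Delta_0\supsetneq\Delta_1\supsetneq\cdots\supsetneq\Delta_s$ be the chain of pieces of the decomposition that contain $a$, one per level, with $s=O(\log M)$. Using the separator property — each piece communicates with the rest of $T$ only through $O(1)$ connector vertices — the locations of $P_i$ lying in $\Delta_k$ but not in $\Delta_{k+1}$ fall into $O(1)$ groups, one per connector $y$ of $\Delta_{k+1}$ facing $\Delta_k$, with $d(a,p_{ij})=d(a,y)+d(y,p_{ij})$ inside each group; hence the contribution of these locations to $\Ed(a,P_i)$ is a fixed combination of $O(1)$ quantities of two types: probability sums of $P_i$ in pieces, and sums $\sum_{p_{ij}\in\Delta}f_{ij}\,d(y,p_{ij})$ over a piece $\Delta$ for a vertex $y$ that is either a connector of $\Delta$ or the centroid removed from $\Delta$ (together with a few connector-to-centroid distances, precomputed with the decomposition, and one distance $d(a,\cdot)$ obtained from an $O(M)$-time LCA/depth structure on $T$). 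I would precompute all these quantities for every piece and every uncertain point having a location in that piece; they obey straightforward bottom-up recurrences over the decomposition (each piece's values obtained by merging those of its child pieces), and since each vertex lies in only $O(\log M)$ pieces, both the number of stored values and the total computation time are $O(M\log M)$. Then $\Ed(a,P_i)$ is the sum of the $O(\log M)$ level contributions.

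The main obstacle is making the per-level retrieval of these precomputed quantities fast enough: if each piece's values are indexed by uncertain-point index and looked up by binary search, a vertex query costs $O(\log^2 M)$. This is where the decomposition's structure is essential: along the chain $\Delta_0\supsetneq\cdots\supsetneq\Delta_s$ the index sets (uncertain points having a location in the piece) are nested, so the sorted per-piece catalogs form a tree of nested sorted lists, and a fractional-cascading structure on it lets us locate a fixed index $i$ in all $O(\log M)$ relevant catalogs with one $O(\log M)$ search followed by $O(1)$ work per level. Combining this with the reduction above (one probability-sum query plus one vertex query per overall query) yields $O(\log M)$ query time with $O(M\log M)$ preprocessing, as claimed.
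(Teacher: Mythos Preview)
Your argument is correct and reaches the stated bounds, but it is organized differently from the paper's proof. Both use the connector-bounded centroid decomposition and fractional cascading over the decomposition tree, so the core machinery is the same; the difference is in how a query is evaluated. You first reduce an interior point to a vertex via the identity $\Ed(x,P_i)=\Ed(a,P_i)+w_i\,t\,(1-2G)$ (needing a separate Euler-tour structure for the subtree probability sum $G$), and then compute $\Ed(a,P_i)$ by \emph{summing level-by-level contributions}: for each level $k$ you add in the part of $P_i$ that lies in $\Delta_k\setminus\Delta_{k+1}$, using $O(1)$ precomputed $F$- and $D$-values at connectors. The paper instead observes that along the root-to-leaf chain there is a \emph{single} node $\mu'$ at which $P_i$ first has no location in $T(\mu')$; since $T(\mu')$ has at most two connectors, for any $x\in T(\mu')$ the value $\Ed(x,P_i)$ is one linear function of the two parameters $(a_x,b_x)$ and is evaluated in $O(1)$ time using the stored $F,D$ values at those connectors. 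Thus the paper replaces your telescoping sum (and your separate interior-point reduction) by locating one node and doing constant work there.

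Both approaches precompute $O(M\log M)$ values (yours are ``inside'' sums $\sum_{p_{ij}\in\Delta}f_{ij}\,d(y,p_{ij})$ over pieces; the paper's are the complementary ``outside'' sums over $T(y,\mu)$), and both use fractional cascading on the $O(\log M)$-height catalog tree to avoid the extra $\log$ factor. One detail you glossed over: the quantities you need at level $k$ are stored at the \emph{siblings} of $\Delta_{k+1}$, which are off the chain. This is easily fixed by storing, at each node $\Delta_k$ and for each $i$ in its catalog and each child $c$, the $O(1)$ connector-indexed sums needed when $c$ plays the role of $\Delta_{k+1}$; the total remains $O(M\log M)$. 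With that adjustment your proof goes through. The paper's route is a bit slicker (no Euler-tour auxiliary, no per-level summation), while yours is closer to a generic centroid-decomposition template.
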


\section{A Tree Decomposition and Computing the Medians}
\label{sec:decmedian}

In this section, we first introduce a decomposition of $T$, which will
be repeatedly used in our algorithms (e.g., for
Lemmas~\ref{lem:median}, \ref{lem:report}, \ref{lem:ed}).
Later in Section~\ref{sec:median} we will compute the medians with the help of the decomposition.

\subsection{A Connector-Bounded Centroid Decomposition}
\label{sec:decom}

We propose a tree decomposition of $T$, called a {\em connector-bounded centroid
decomposition}, which is different from the centroid decompositions used before,
e.g.,~\cite{ref:FredericksonFi83,ref:KarivAnC79,ref:MegiddoNe83,ref:MegiddoAn81}
and has certain properties that can facilitate our algorithms.

A vertex $v$ of $T$ is called a {\em centroid} if $T$ can be
represented as a union of two subtrees with $v$ as their only common
vertex and each subtree has at most $\frac{2}{3}$ of the vertices of
$T$~\cite{ref:KarivAnC79,ref:MegiddoAn81},
and we say the two subtrees are {\em decomposed} by $v$.
Such a centroid always exists and can be found in linear time~\cite{ref:KarivAnC79,ref:MegiddoAn81}.
For convenience of discussion, we consider $v$ to be contained in only one subtree
but an ``open vertex'' in the other subtree (thus, the location of $\calP$ at $v$ only belongs to one subtree).

Our decomposition of $T$ corresponds to a {\em decomposition tree}, denoted by $\Upsilon$ and defined recursively as follows. Each internal node of $\Upsilon$ has two, three, or four children.
The root of $\Upsilon$ corresponds to the entire tree $T$. Let $v$ be
a centroid of $T$, and let $T_1$ and $T_2$ be the subtrees of $T$
decomposed by $v$. Note that $T_1$ and $T_2$ are disjoint since we consider $v$ to be contained in only one of them.
Further, we call $v$ a {\em connector} in both $T_1$ and $T_2$.
Correspondingly, in $\Upsilon$, its root has two children
corresponding to $T_1$ and $T_2$, respectively.

In general, consider a node $\mu$ of $\Upsilon$. Let $T(\mu)$
represent the subtree of $T$ corresponding to $\mu$. We assume
$T(\mu)$ has at most two connectors (initially this is true when $\mu$
is the root). We further
decompose $T(\mu)$ into subtrees that correspond to the children of
$\mu$ in $\Upsilon$, as follows. Let $v$ be the centroid of $T(\mu)$
and let $T_1(\mu)$ and $T_2(\mu)$ respectively be the two subtrees of $T(\mu)$
decomposed by $v$. We consider $v$ as a {\em connector} in both
$T_1(\mu)$ and $T_2(\mu)$.

If $T(\mu)$ has at most one connector, then each of $T_1(\mu)$ and
$T_2(\mu)$ has at most two connectors. In this case, $\mu$ has two
children corresponding to $T_1(\mu)$ and $T_2(\mu)$, respectively.

%
%

If $T(\mu)$ has two connectors but each of  $T_1(\mu)$ and $T_2(\mu)$
still has at most two connectors (with $v$ as a new connector), then
$\mu$ has two children corresponding to $T_1(\mu)$ and $T_2(\mu)$,
respectively. Otherwise, one of them, say, $T_2(\mu)$, has three
connectors and the other $T_1(\mu)$ has only one connector (e.g., see Fig.~\ref{fig:decom}). In this case, $\mu$ has a child in $\Upsilon$
corresponding to $T_1(\mu)$, and we further
perform a {\em connector-reducing decomposition} on $T_2(\mu)$, as
follows (this is the main difference between our decomposition and the
traditional centroid decomposition used
before~\cite{ref:FredericksonFi83,ref:KarivAnC79,ref:MegiddoNe83,ref:MegiddoAn81}).
Depending on whether the three connectors of $T_2(\mu)$ are in a simple
path, there are two cases.

\begin{figure}[t]
\begin{minipage}[t]{\linewidth}
\begin{center}
\includegraphics[totalheight=1.5in]{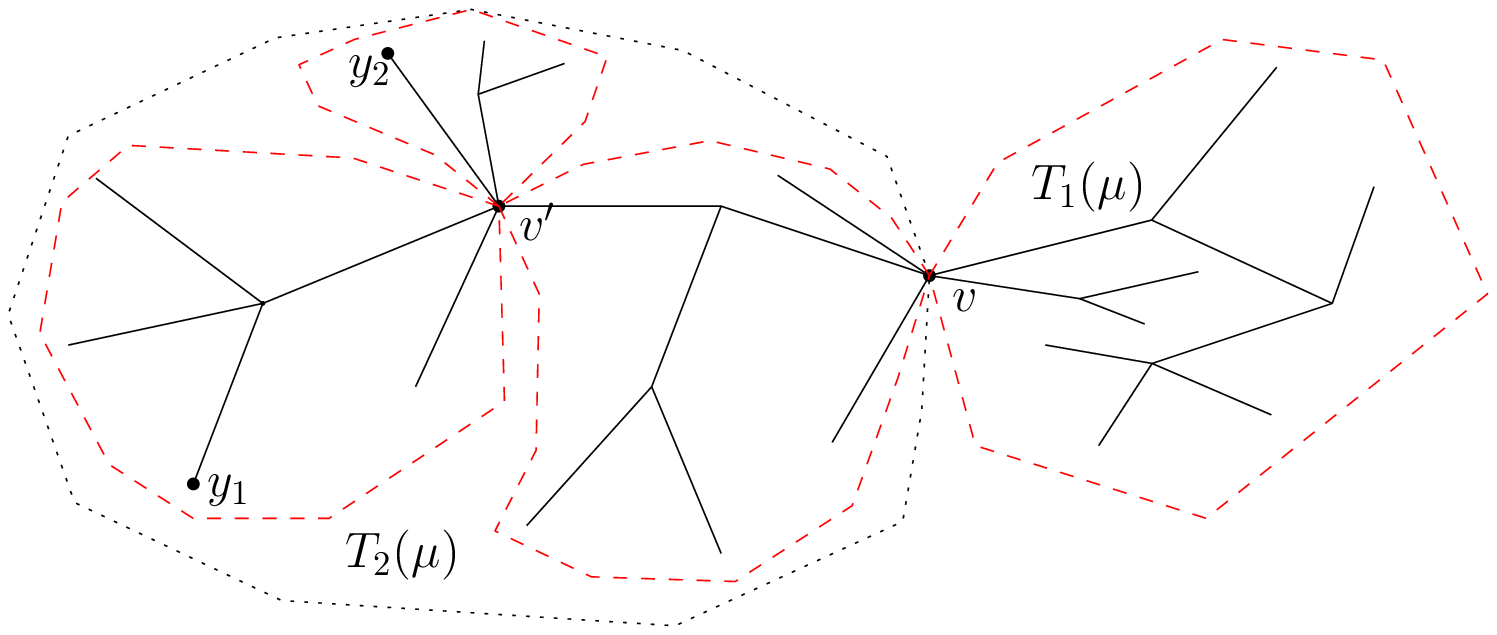}
\caption{\footnotesize Illustrating the decomposition of $T(\mu)$ into
four subtrees enclosed by the (red) dashed cycles, where $y_1$ and
$y_2$ are two connectors of $T(\mu)$. $T(\mu)$ is first
decomposed into two subtrees $T_1(\mu)$ and $T_2(\mu)$. However, since
$T_2(\mu)$ has three connectors, we further decompose it into three
subtrees each of which has at most two connectors. }
\label{fig:decom}
\end{center}
\end{minipage}
\vspace*{-0.15in}
\end{figure}

\begin{enumerate}
\item
If they are in a simple path, without loss of generality,
we assume $v$ is the one between the other two
connectors in the path. We decompose $T_2(\mu)$ into two subtrees at
$v$ such that they contain the two connectors respectively. In this
way, each subtree contains at most two connectors.
Correspondingly, $\mu$ has another two children corresponding the two
subtrees of $T_2(\mu)$, and thus $\mu$ has three children in total.

\item
Otherwise, there is a unique vertex $v'$ in $T_2(\mu)$ that
decomposes $T_2(\mu)$ into three subtrees that
contain the three connectors respectively (e.g., see
Fig.~\ref{fig:decom}). Note that $v'$ and the three
subtrees can be easily found in linear time by traversing $T_2(\mu)$.
Correspondingly, $\mu$ has another three children corresponding
to the above three subtrees of $T_2(\mu)$, respectively,
and thus $\mu$ has four children in total.
Note that we consider $v'$ as a connector in each of the above three
subtrees. Thus, each subtree contains at most two connectors.
\end{enumerate}

We continue the decomposition until each subtree $T(\mu)$ of $\mu\in
\Upsilon$ becomes an edge $e(v_1,v_2)$ of $T$.  According to our
decomposition, both $v_1$ and $v_2$ are connectors of $T(\mu)$, but they
may only open vertices of $T(\mu)$. If both $v_1$ and $v_2$ are open
vertices of  $T(\mu)$, then we will not further decompose $T(\mu)$, so
$\mu$ is a leaf of $\Upsilon$. Otherwise, we further decompose
$T(\mu)$ into an open edge and a closed vertex $v_i$ if $v_i$ is
contained in $T(\mu)$ for each $i=1,2$. Correspondingly, $\mu$  has
either two or three children that are leaves of $\Upsilon$.
In this way, for each leaf $\mu$ of $\Upsilon$, $T(\mu)$ is either an
open edge or a closed vertex of $T$. In the former case, $T(\mu)$ has
two connectors that are its incident vertices, and in the latter case,
$T(\mu)$ has one connector that is itself.

This finishes the decomposition of $T$. A major difference
between our decomposition and the traditional centroid
decomposition~\cite{ref:FredericksonFi83,ref:KarivAnC79,ref:MegiddoNe83,ref:MegiddoAn81}
is that the subtree in our decomposition has at most two connectors.
As will be clear later, this property is crucial to guarantee the
runtime of our algorithms.

\begin{lemma}
The height of $\Upsilon$ is $O(\log M)$ and $\Upsilon$ has $O(M)$ nodes.
The connector-bounded centroid decomposition of $T$ can be computed in $O(M\log M)$ time.
\end{lemma}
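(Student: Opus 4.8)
The plan is to prove the three assertions --- the $O(\log M)$ bound on the height of $\Upsilon$, the $O(M)$ bound on the number of its nodes, and the $O(M\log M)$ construction time --- in that order, since the last relies on the second. For the height I would argue by induction on the depth $d$ that every node $\mu$ at depth $d$ satisfies $|T(\mu)|\le (2/3)^{d}M$, where $|T(\mu)|$ denotes the number of vertices of $T$ in $T(\mu)$ (the open/closed bookkeeping of a shared vertex only affects an additive constant and can be ignored). The inductive step is the heart of the matter. Let $v$ be the centroid of $T(\mu)$ decomposing it into $T_1(\mu)$ and $T_2(\mu)$; by the defining property of a centroid each of these has at most $\tfrac{2}{3}|T(\mu)|$ vertices. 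If $\mu$ has only these two children we are done. Otherwise $\mu$ has a third, and possibly a fourth, child produced by the connector-reducing decomposition of, say, $T_2(\mu)$; but that decomposition only cuts $T_2(\mu)$ further at a single vertex ($v$ in the path case, $v'$ in the other), so every resulting piece is a subtree of $T_2(\mu)$ and hence still has at most $\tfrac{2}{3}|T(\mu)|$ vertices. Thus every child $\mu'$ of $\mu$ has $|T(\mu')|\le\tfrac{2}{3}|T(\mu)|$, which closes the induction. Consequently after about $\log_{3/2}M$ levels the subtree at a node has $O(1)$ vertices, and the terminal ``single edge / single vertex'' decompositions then add only $O(1)$ further levels, so the height of $\Upsilon$ is $O(\log M)$.

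For the number of nodes, note first that by construction every internal node of $\Upsilon$ has at least two children, so the number of nodes of $\Upsilon$ is at most twice its number of leaves. For each leaf $\mu$, $T(\mu)$ is a single edge of $T$ (with its endpoints open or closed) or a single vertex of $T$. The point to verify is that no edge of $T$ is ever subdivided during the decomposition, because every cut is made at a vertex of $T$: a centroid is a vertex of $T$; the terminal decompositions cut at endpoints of an edge; and in the non-path case of the connector-reducing decomposition the ``unique vertex $v'$'' at which the three connector-paths meet is the median of three vertices of a tree, which is again a vertex of the tree. Hence distinct leaves correspond to distinct edges or vertices of $T$, so $\Upsilon$ has at most $|E(T)|+|V(T)|=O(M)$ leaves and therefore $O(M)$ nodes.

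For the construction time, observe that the work done at a node $\mu$ --- finding the centroid $v$ of $T(\mu)$ in linear time, splitting $T(\mu)$ at $v$, counting the connectors in the two parts, and, when necessary, locating $v'$ and performing the connector-reducing split (all linear by the remarks in the construction) --- is $O(|T(\mu)|)$. Grouping nodes by depth, the subtrees at any fixed depth $d$ partition the edge set of $T$ and are therefore pairwise edge-disjoint, so $\sum_{\mu\ \text{at depth}\ d}|T(\mu)|\le |E(T)|+N_d\le M+N_d$, where $N_d$ is the number of nodes at depth $d$. Summing over the $O(\log M)$ depths and using $\sum_d N_d=O(M)$ from the node bound gives a total of $O(M\log M)$ time.

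The main obstacle is the height argument, precisely because the connector-reducing step is not a balanced cut: the vertex $v'$ (or the reused centroid $v$) need not split $T_2(\mu)$ evenly, so a priori one of its pieces could be nearly as large as $T_2(\mu)$ itself; the resolution is simply that $|T_2(\mu)|$ was already at most $\tfrac{2}{3}|T(\mu)|$, so shrinkage is inherited rather than lost when $T_2(\mu)$ is cut up. A secondary technicality is the base of the recursion --- subtrees that are already edges, whose endpoints may be open or closed --- where one must check that it contributes only $O(1)$ extra levels and $O(M)$ extra leaves, and in particular that, since all cuts there are at vertices, the edge-disjointness used in the running-time bound is preserved.
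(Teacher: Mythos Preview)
Your proposal is correct and follows essentially the same approach as the paper's proof: the $(2/3)^d$ shrinkage for the height, the leaf count via the edge/vertex correspondence (with $\ge 2$ children per internal node) for the node bound, and the level-by-level linear work for the time bound. You supply more detail than the paper does on why the connector-reducing step does not spoil the height bound (the paper simply asserts $|T(\mu)|=O(M\cdot(2/3)^t)$), and your edge-disjointness plus $N_d$ bookkeeping in the time analysis is a slightly more careful version of the paper's one-line ``pairwise disjoint'' claim, but the underlying argument is the same.
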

\begin{proof}
Consider any node $\mu$ of $\Upsilon$. Let $T(\mu)$ be the subtree of $T$
corresponding to $\mu$. According to our decomposition, $|T(\mu)|=O(M\cdot
(\frac{2}{3})^t)$, where $t$ is the depth of $\mu$ in $\Upsilon$. This
implies that the height of $\Upsilon$ is $O(\log M)$.

Since each leaf of $\Upsilon$ corresponds to either a vertex or an
open edge of $T$, the number of leaves of $\Upsilon$ is $O(M)$. Since
each internal node of $\Upsilon$ has at least two children, the number
of internal nodes is no more than the number of leaves. Hence,
$\Upsilon$ has $O(M)$ nodes.

According to our decomposition, all subtrees of $T$ corresponding to
all nodes in the same level of $\Upsilon$ (i.e., all nodes with the same
depth) are pairwise disjoint, and thus the total size of all these subtrees is $O(M)$.
Decomposing each subtree can be done in linear time (e.g., finding a
centroid takes linear time). Therefore, decomposing all subtrees in
each level of $\Upsilon$ takes $O(M)$ time. As the height of
$\Upsilon$ is $O(\log M)$, the total time for computing the
decomposition of $T$ is $O(M\log M)$. \qed
\end{proof}

In the following, we assume our decomposition of $T$ and the
decomposition tree $\Upsilon$ have been computed.
In addition, we introduce some notation that will be used later. For
each node $\mu$ of $\Upsilon$, we use $T(\mu)$ to represent the
subtree of $T$ corresponding to $\mu$. If $y$ is a connector of
$T(\mu)$, then we use $T(y,\mu)$ to represent the subtree of $T$
consisting of all points $q$ of $T\setminus T(\mu)$ such that $\pi(q,p)$ contains $y$
for any point $p\in T(\mu)$ (i.e., $T(y,\mu)$ is the ``outside world''
connecting to $T(\mu)$ through $y$; e.g., see Fig.~\ref{fig:median}).
By this definition, if $y$ is the only
connector of $T(\mu)$, then $T=T(\mu)\cup T(y,\mu)$; if $T(\mu)$ has
two connectors $y_1$ and $y_2$, then $T=T(\mu)\cup T(y_1,\mu)\cup
T(y_2,\mu)$.

\begin{figure}[t]
\begin{minipage}[t]{\linewidth}
\begin{center}
\includegraphics[totalheight=1.3in]{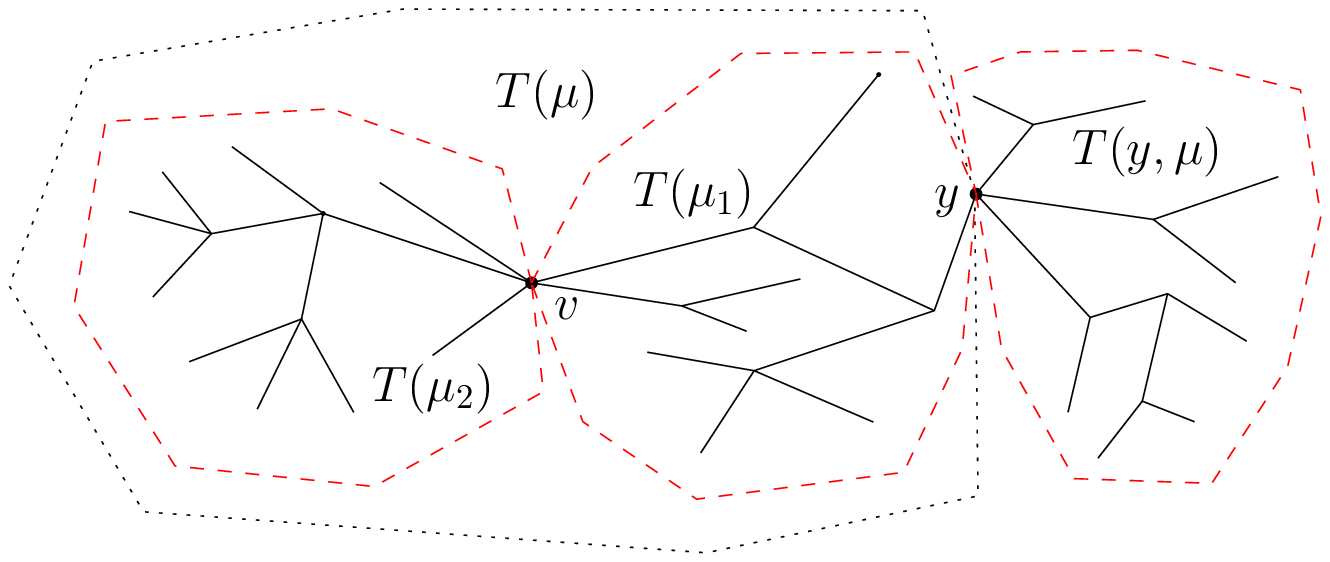}
\caption{\footnotesize Illustrating the subtrees $T(\mu_1),T(\mu_2)$,
and $T(y,\mu)$, where $y$ is a connector of $T(\mu)=T(\mu_1)\cup
T(\mu_2)$. Note that $T(y,\mu)$ is also $T(y,\mu_1)$ as $y\in T(\mu_1)$.}
\label{fig:median}
\end{center}
\end{minipage}
\vspace*{-0.15in}
\end{figure}

\subsection{Computing the Medians}
\label{sec:median}

In this section, we compute all medians.  It is easy to compute the median $p_i^*$ for
a single uncertain point $P_i$ in $O(M)$ time by traversing the tree
$T$. Hence, a straightforward algorithm can compute all $n$ medians in
$O(nM)$ time. Instead, we present an $O(M\log M)$ time algorithm,
which will prove  Lemma~\ref{lem:median}.


For any vertex $v$ (e.g., the centroid) of $T$, let $T_1$ and $T_2$ be
two subtrees of $T$ decomposed by $v$ (i.e., $v$ is
their only common vertex and $T=T_1\cup T_2$), such that
$v$ is contained in only one subtree and is an open
vertex in the other.
The following lemma can be readily obtained from Kariv and Hakimi
\cite{ref:KarivAn79}, and similar results were also given in \cite{ref:WangCo16}.

%

\begin{lemma}\label{lem:findmedian} 
For any uncertain point $P_i$ of $\calP$, we have the following.
\begin{enumerate}
\item
If the probability sum of $P_i$ in $T_j$ is greater than $0.5$ for
some $j\in \{1,2\}$, then the median $p_i^*$ must be in $T_j$.
\item
The vertex $v$ is $p_i^*$ if the probability sum of
$P_i$ in $T_j$ is equal to $0.5$ for some $j\in\{1,2\}$.
\end{enumerate}
\end{lemma}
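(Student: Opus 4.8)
The plan is to derive both parts from the convexity and piecewise-linearity of $\Ed(\cdot,P_i)$ recorded in Section~\ref{sec:pre}, combined with an explicit formula for the one-sided slope of $\Ed(\cdot,P_i)$ along an edge. First I would record the following elementary fact: if $y$ is a point of $T$ and we move $x$ away from $y$ along an incident edge $e$, then for each location $p_{ij}$ the distance $d(x,p_{ij})$ changes at rate $+1$ if $p_{ij}$ lies on the side of $y$ ``behind'' the motion and at rate $-1$ otherwise; summing over $j$ with weights $w_if_{ij}$, the one-sided derivative of $\Ed(\cdot,P_i)$ at $y$ in this direction equals $w_i(\alpha-\beta)=w_i(1-2\beta)$, where $\alpha$ (resp. $\beta$) is the probability sum of $P_i$ behind (resp. ahead of) the motion and $\alpha+\beta=1$. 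This is immediate from the cited fact that each $d(\cdot,p_{ij})$ is convex and piecewise linear on any simple path of $T$, so that the same holds for $\Ed(\cdot,P_i)$, which moreover has a minimizer set that is a subtree (and, as noted in Section~\ref{sec:pre}, contains a vertex).

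For Part~1, assume without loss of generality that the probability sum of $P_i$ in $T_2$ exceeds $0.5$. Take any point $x\in T_1$ with $x\neq v$ and consider the simple path $\pi(x,v)$, which other than $v$ lies in $T_1$. On every edge of this path the side ``ahead'' (the side of $v$) contains all of $T_2$, so its probability sum is at least that of $T_2$, hence strictly larger than $0.5$; by the slope formula the slope of $\Ed(\cdot,P_i)$ along $\pi(x,v)$ toward $v$ is therefore strictly negative at every interior point (here I use $w_i>0$; if $w_i=0$ the statement is trivial since then every vertex is a median, in particular one in $T_2$). Consequently $\Ed(x,P_i)>\Ed(v,P_i)$ for every $x\in T_1\setminus\{v\}$, so no point of $T_1\setminus\{v\}$ minimizes $\Ed(\cdot,P_i)$. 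Since the minimizer set is a subtree containing a vertex, every minimizer lies in $T_2$ (reading $T_2$ as including the connector vertex $v$ in the boundary case where $v$ itself is the minimizer), and we let $p_i^*$ be such a minimizing vertex.

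For Part~2, assume the probability sum of $P_i$ in $T_j$ equals $0.5$ for some $j$; then the probability sums of $P_i$ in $T_1$ and in $T_2$ are both $0.5$. Let $e$ be any edge incident to $v$ and let $S_e$ be the component of $T\setminus\{v\}$ reached through $e$. Then $S_e$ is contained either in $T_1\setminus\{v\}$ or in $T_2\setminus\{v\}$, so the probability sum of $P_i$ in $S_e$ is at most $0.5$; by the slope formula the one-sided derivative of $\Ed(\cdot,P_i)$ at $v$ in the direction of $e$ equals $w_i\bigl(1-2\cdot(\text{prob.\ sum in }S_e)\bigr)\geq 0$. Since this holds for every edge incident to $v$, and $\Ed(\cdot,P_i)$ is convex on every simple path through $v$, the vertex $v$ is a global minimizer of $\Ed(\cdot,P_i)$, so we may take $p_i^*=v$.

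I expect the only real subtlety to be the bookkeeping around the connector vertex $v$: because the location of $P_i$ at $v$ (if any) is assigned to exactly one of $T_1,T_2$ while $v$ is geometrically common to both, one must take care that ``the median is in $T_j$'' is read in the geometric (closed-subtree) sense and that the slope formula is invoked with the correct side receiving the probability at $v$. Deriving the slope formula itself is routine once the per-location statement (already cited from Megiddo in Section~\ref{sec:pre}) is in hand, and the convexity of $\Ed(\cdot,P_i)$ on simple paths is likewise already available, so no further machinery is needed.
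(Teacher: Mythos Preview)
Your proof is correct. The slope formula $w_i(1-2\beta)$ for the one-sided derivative of $\Ed(\cdot,P_i)$ is exactly the right tool, and both parts follow cleanly from it together with the convexity of $\Ed(\cdot,P_i)$ along simple paths already established in Section~\ref{sec:pre}. Your handling of the connector vertex $v$ is also fine: geometrically $v$ belongs to both subtrees, so the conclusion ``$p_i^*\in T_j$'' holds in either reading, and your slope computation correctly accounts for the convention that the location at $v$ is assigned to only one side.

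There is nothing to compare against in the paper: the authors do not prove this lemma themselves but simply cite Kariv and Hakimi~\cite{ref:KarivAn79} and their own earlier work~\cite{ref:WangCo16}. Your argument is therefore a self-contained proof of a result the paper treats as folklore. The approach you take---directional derivatives at a vertex governed by probability mass on each side---is in fact the standard one going back to those references, so while you supply more detail than the paper, you are not using a genuinely different route.
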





Consider the connector-bounded centroid decomposition $\Upsilon$ of
$T$.
Starting from the root of $\Upsilon$,
our algorithm will process the nodes of $\Upsilon$ in a top-down
manner. Suppose we are processing a node $\mu$. Then, we maintain
a sorted list of indices for $\mu$, called the {\em index list} of
$\mu$ and denoted by $L(\mu)$, which
consists of all indices $i\in [1,n]$ such that $p_i^*$ is not found
yet but is known to be in the subtree $T(\mu)$. Since each index $i$
of $L(\mu)$ essentially refers to $P_i$, for convenience, we also say
that $L(\mu)$ is a set of uncertain points.
Let $F[1\cdots n]$ be an array, which will help to
compute the probability sums in our algorithm.

\subsubsection{The Root Case}
Initially, $\mu$ is the root and we process it as follows. We
present our algorithm in a way that is consistent with that for the general
case.

Since $\mu$ is the root, we have $T(\mu)=T$
and $L(\mu)=\{1,2,\ldots,n\}$. 
Let  $\mu_1$ and $\mu_2$ be the two children of $\mu$. Let $v$ be the
centroid of $T$ that is used to decompose $T(\mu)$ into $T(\mu_1)$ and
$T(\mu_2)$ (e.g., see Fig.~\ref{fig:median}).
We compute in $O(|T(\mu)|)$ time the probability sums of all uncertain
points of $L(\mu)$ in $T(\mu_1)$ by using the array $F$ and traversing $T(\mu_1)$.
Specifically, we first perform a {\em reset procedure} on $F$ to reset $F[i]$ to $0$ for
each $i\in L(\mu)$, by scanning the list $L(\mu)$.
Then, we traverse $T(\mu_1)$, and for each
visited vertex, which holds some uncertain point location $p_{ij}$, we
update $F[i]=F[i]+f_{ij}$. After the traversal, for each $i\in L(\mu)$, $F[i]$
is equal to the probability sum of $P_i$ in $T(\mu_1)$. By
Lemma~\ref{lem:findmedian}, if $F[i]=0.5$, then $p_i^*$ is $v$ and we
report $p_i^*=v$; if $F[i]>0.5$, then $p_i^*$ is in $T_1(\mu)$ and we
add $i$ to the end of the index list
$L(\mu_1)$ for $\mu_1$ (initially $L(\mu_1)=\emptyset$); if
$F[i]<0.5$, then $p_i^*$ is in $T_2(\mu)$ and add
$i$ to the end of $L(\mu_2)$ for $\mu_2$.
The above has correctly computed the index lists for $\mu_1$ and $\mu_2$.

Recall that $v$ is a connector in both $T(\mu_1)$ and $T(\mu_2)$. In
order to efficiently compute medians in $T(\mu_1)$ and $T(\mu_2)$
recursively, we compute a {\em probability list} $L(v,\mu_j)$ at $v$ for
$\mu_j$ for each $j=1,2$. We discuss $L(v,\mu_1)$ first.

The list $L(v,\mu_1)$ is the same as $L(\mu_1)$ except that each index
$i\in L(v,\mu_1)$ is also associated with a value, denoted by $F(i,v,\mu_1)$,
which is the probability sum of $P_i$ in $T(v,\mu_1)$ (recall the
definition of $T(v,\mu_1)$ at the end of Section~\ref{sec:decom}; note
that $T(v,\mu_1)=T(\mu_2)$ in this case).
The list $L(v,\mu_1)$ can be
built in $O(|T(\mu)|)$ time by traversing $T(\mu_2)$ and using the
array $F$. Specifically, we scan the list $L(\mu_1)$, and for each
index $i\in L(\mu_1)$, we reset $F[i]=0$.
Then, we traverse the subtree $T(\mu_2)$, and for each
location $p_{ij}$ in $T(\mu_2)$, we update $F[i]=F[i]+f_{ij}$
(if $i$ is not in $L(\mu_1)$, this step is actually redundant but does
not affect anything).
After the traversal, for each index
$i\in L(\mu_1)$, we copy it to $L(v,\mu_1)$ and set $F(i,v,\mu_1)=F[i]$.

Similarly, we compute the probability list $L(v,\mu_2)$ at $v$ for
$\mu_2$ in $O(|T(\mu)|)$ time by
traversing $T(\mu_1)$. This finishes the processing of the root $\mu$.
The total time is $O(|T(\mu)|)$ since $|L(\mu)|\leq |T(\mu)|$. Note that our
algorithm guarantees that for each $i\in L(\mu_1)$, $P_i$ must have at
least one location  in $T(\mu_1)$, and thus $|L(\mu_1)|\leq
|T(\mu_1)|$. Similarly, for each $i\in L(\mu_2)$, $P_i$ must have at
least one location  in $T(\mu_2)$, and thus $|L(\mu_2)|\leq
|T(\mu_2)|$.

\subsubsection{The General Case}
Let $\mu$ be an internal node of $\Upsilon$ such that the ancestors of $\mu$
have all been processed. Hence, we have a sorted index list $L(\mu)$. If
$L(\mu)=\emptyset$, then we do not need to process $\mu$ and any of
its descendants. We assume $L(\mu)\neq\emptyset$. Thus, for each $i\in
L(\mu)$, $p_i^*$ is in $T(\mu)$ and $P_i$ has at least one location in
$T(\mu)$ (and thus $|L(\mu)|\leq |T(\mu)|$).
Further, for each connector $y$ of $T(\mu)$, the algorithm maintains
a probability list $L(y,\mu)$ that is
the same as $L(\mu)$ except that each index $i\in L(y,\mu)$ is
associated with a value $F(i,y,\mu)$, which is the probability sum of $P_i$ in the
subtree $T(y,\mu)$.  Our processing algorithm for $\mu$ works as
follows, whose total time is $O(|T(\mu)|)$.

According to our decomposition, $T(\mu)$ has at
most two connectors and
$\mu$ may have two, three, or four children. We first
discuss the case where $\mu$ has two children,  and other cases can be handled similarly.

Let $\mu_1$ and $\mu_2$ be the two children of $\mu$, respectively.
Let $v$ be the centroid of $T(\mu)$ that is used to decompose it.
We discuss the subtree $T(\mu_1)$ first, and $T(\mu_2)$ is similar.
Since $v$ is a connector of $T(\mu_1)$ and
$T(\mu_1)$ has at most two connectors, $T(\mu_1)$ has at most one
connector $y$ other than $v$.  We consider the general situation where
$T(\mu_1)$ has such a connector $y$ (the case where such a connector
does not exist can be handled similarly but in a simpler way).
Note that $y$ must be a connector of $T(\mu)$.

We first compute the probability sums of $P_i$'s for all $i\in L(\mu)$ in the subtree
$T(\mu_1)\cup T(y,\mu)$ (e.g., see Fig.~\ref{fig:median}),
which can be done in $O(|T(\mu)|)$ time by
traversing $T(\mu_1)$ and using the array $F$ and
the probability list $L(y,\mu)$ at $y$, as follows. We scan the list $L(\mu)$
and for each index $i\in L(\mu)$, we reset $F[i]=0$. Then, we traverse
$T(\mu_1)$ and for each location $p_{ij}$, we update
$F[i]=F[i]+f_{ij}$ (it does not matter if $i\not\in L(\mu)$). When the traversal
visits $y$, we scan the list $L(y,\mu)$ and for each index $i\in L(y,\mu)$, we
update $F[i]=F[i]+F(i,y,\mu)$. After the traversal, for
each $i\in L(\mu)$, $F[i]$ is the probability sum of $P_i$ in
$T(\mu_1)\cup T(y,\mu)$. For each $i\in L(\mu)$, if $F[i]=0.5$,
we report $p_i^*=v$; if $F[i]>0.5$, we add $i$ to $L(\mu_1)$;
if $F[i]<0.5$, we add $i$ to $L(\mu_2)$. This builds the two lists
$L(\mu_1)$ and $L(\mu_2)$, which are initially $\emptyset$. Note that
since for each $i\in L(\mu)$, $P_i$ has at least one location in
$T(\mu)$, the above way of computing $L(\mu_1)$ (resp., $L(\mu_2)$)
guarantees that for each $i$ in $L(\mu_1)$ (resp., $L(\mu_2)$),
$P_i$ has at least one location in $T(\mu_1)$ (resp., $T(\mu_2)$), which implies $|L(\mu_1)|\leq |T(\mu_1)|$ (resp., $|L(\mu_2)|\leq |T(\mu_2)|$).

Next we compute the probability lists for the connectors of
$T(\mu_1)$. Note that $T(\mu_1)$ has two connectors $v$ and $y$.
For $v$, we compute the probability list $L(v,\mu_1)$
that is the same as $L(\mu_1)$ except that each $i\in L(v,\mu_1)$ is
associated with a value $F(i,v,\mu_1)$, which is the probability sum of
$P_i$ in the subtree $T(v,\mu_1)$.
To compute $L(v,\mu_1)$, we first reset $F[i]=0$ for each $i\in
L(\mu_1)$.  Then we traverse $T(\mu_2)$ and for each location
$p_{ij}\in T(\mu_2)$, we update $F[i]=F[i]+f_{ij}$. If $T(\mu_2)$ has
a connector $y'$ other than $v$, then $y'$ is also a connector of $T(\mu)$ (note that there
is at most one such connector);
we scan the probability list $L(y',\mu)$
and for each $i\in L(y',\mu)$, we update $F[i]=F[i]+F(i,y',\mu)$.
Finally, we scan $L(\mu_1)$ and for each $i\in L(\mu_1)$, we copy it to
$L(v,\mu_1)$ and set $F(i,v,\mu_1)=F[i]$. This computes the
probability list $L(v,\mu_1)$.

Further, we also need to compute the probability list
$L(y,\mu_1)$ at $y$ for $T(\mu_1)$. The list $L(y,\mu_1)$
is the same as $L(\mu_1)$ except that each
$i\in L(y,\mu_1)$ also has a value $F(i,y,\mu_1)$, which is the
probability sum of $P_i$ in $T(y,\mu_1)$.
To compute $L(y,\mu_1)$, we first copy all indices of $L(\mu_1)$ to
$L(y,\mu_1)$, and then compute the values $F(i,y,\mu_1)$, as follows.
Note that $T(y,\mu_1)$ is exactly $T(y,\mu)$ (e.g., see
Fig.~\ref{fig:median}). Recall that as a connector of $T(\mu)$, $y$
has a probability list $L(y,\mu)$ in which each $i\in L(y,\mu)$ has a
value $F(i,y,\mu)$. Notice that $L(y,\mu_1)\subseteq L(y,\mu)$.
Due to $T(y,\mu_1)=T(y,\mu)$,
for each $i\in L(y,\mu_1)$, $F(i,y,\mu_1)$ is equal to $F(i,y,\mu)$.
Since indices in each of  $L(y,\mu_1)$ and  $L(y,\mu)$ are
sorted, we scan $L(y,\mu_1)$ and $L(y,\mu)$ simultaneously (like merging two
sorted lists) and for each $i\in L(y,\mu_1)$, if we encounter $i$ in
$L(y,\mu)$, then we set $F(i,y,\mu_1)=F(i,y,\mu)$. This computes the
probability list $L(y,\mu_1)$ at $y$ for $T(\mu_1)$.

The above has processed the subtree $T(\mu_1)$.
Using the similar approach, we can process $T(\mu_2)$ and we omit the details.

This finishes the processing of $\mu$ for the case where $\mu$ has two
children. The total time is
$O(|T(\mu)|)$. To see this, the algorithm traverses $T(\mu)$ for a
constant number of times. The algorithm also visits the list $L(\mu)$ and the
probability list of each connector of $T(\mu)$ for a constant number
of times. Recall that $|L(\mu)|\leq |T(\mu)|$ and
$|L(\mu)|=|L(\mu,y)|$ for each connector $y$ of $T(\mu)$. Also recall
that $T(\mu)$ has at most two connectors. Thus, the total time
for processing $\mu$ is $O(|T(\mu)|)$.

\paragraph{Remark.} If the number of connectors of $T(\mu)$ were not
bounded by a constant, then we could not bound the processing time for
$\mu$ as above. This is one reason our decomposition on
$T$ requires each subtree $T(\mu)$ to have at most two connectors.
\vspace{0.1in}

If $\mu$ has three children, $\mu_1,\mu_2,\mu_3$, then
$T(\mu)$ is decomposed into three subtrees $T(\mu_j)$ for $j=1,2,3$. In this case, $T(\mu)$ has two connectors. To process $\mu$, we apply the above algorithm for
the two-children case twice.
Specifically, we consider the procedure of decomposing $T(\mu)$ into three subtrees consisting of two ``intermediate decomposition steps''.
According to our decomposition, $T(\mu)$ was first
decomposed into two subtrees by its centroid such that one subtree $T_1(\mu)$
contains at most two connectors while the other one $T_2(\mu)$ contains
three connectors, and we consider this as the first
intermediate step. The second intermediate step is to further decompose
$T_2(\mu)$ into two subtrees each of which contains at most two
connectors. To process $\mu$, we apply our two-children case algorithm on
the first intermediate step and then on the second intermediate step.
The total time is still $O(|T(\mu)|)$. We omit the details.

Similarly, if $\mu$ has four children, then the decomposition can be
considered as consisting
of three intermediate steps (e.g., in Fig.~\ref{fig:median}, the first
step is to decompose $T(\mu)$ into $T_1(\mu)$ and $T_2(\mu)$, and
then decomposing $T_2(\mu)$ into three subtrees can be considered as
consisting of two steps each of which decomposes a subtree into two
subtrees), and we apply our two-children case
algorithm three times. The total processing time for $\mu$ is also $O(|T(\mu)|)$.

The above describes the algorithm for processing $\mu$ when $\mu$ is
an internal node of $\Upsilon$.

If $\mu$ is a leaf, then $T(\mu)$ is either a vertex or an open
edge of $T$. If $T(\mu)$ is an open edge, the index list $L(\mu)$
must be empty since our algorithm only finds medians on
vertices. Otherwise, $T(\mu)$ is a vertex $v$ of $T$.
If $L(\mu)$ is not empty, then for each $i\in L(\mu)$, we simply
report $p_i^*=v$.

The running time of the entire algorithm is $O(M\log M)$. To see this,
processing each node $\mu$ of $\Upsilon$ takes $O(|T(\mu)|)$ time. For
each level of $\Upsilon$, the total sum of $|T(\mu)|$ of
all nodes $\mu$ in the level is $O(|T|)$. Since the height of
$\Upsilon$ is $O(\log M)$, the total time of the algorithm is $O(M\log
M)$.
This proves Lemma~\ref{lem:median}.

\section{The Data Structures $\calA_1$, $\calA_2$, and $\calA_3$}
\label{sec:ds}

In this section, we present the three data structures $\calA_1$,
$\calA_2$, and $\calA_3$, for Lemmas~\ref{lem:report}, \ref{lem:candidate}, and
\ref{lem:ed}, respectively. In particular, $\calA_3$ will be used to build
$\calA_2$ and it will also be needed for solving the $k$-center
problem in Section~\ref{sec:kcenter}.
Our connector-bounded centroid decomposition $\Upsilon$ will play an
important role in constructing both $\calA_1$ and $\calA_3$.
In the following, we present them in the order of
$\calA_1, \calA_3$, and $\calA_2$.

\subsection{The Data Structure $\calA_1$}
\label{sec:ds1}

The data structure $\calA_1$ is for answering the coverage-report-queries, i.e., given
any point $x\in T$, find all active uncertain points that are covered by $x$. Further, it
also supports the operation of removing an uncertain point once it is deactivated.

Consider any node $\mu\in \Upsilon$. If $\mu$ is the root, let
$L(\mu)=\emptyset$; otherwise, define $L(\mu)$ to be the sorted list
of all indices $i\in [1,n]$ such that $P_i$ does not have any
locations in the subtree $T(\mu)$ but has at least one location in
$T(\mu')$,
where $\mu'$ is the parent of $\mu$. Let $y$ be any connector of $T(\mu)$.
Let $L(y,\mu)$ be an index list the same as $L(\mu)$ and
each index $i\in L(y,\mu)$ is associated with two values: $F(i,y,\mu)$, which is the
probability sum of $P_i$ in the subtree $T(y,\mu)$,
and $D(i,y,\mu)$, which is the expected distance from $y$ to the locations of
$P_{i}$ in $T(y,\mu)$, i.e., $D(i,y,\mu)=w_i\cdot \sum_{p_{ij}\in T(y,\mu)}f_{ij}\cdot
d(p_{ij},y)$. We refer to $L(\mu)$ and $L(y,\mu)$ for each connector $y\in
T(\mu)$ as the {\em information lists} of $\mu$.

\begin{lemma}\label{lem:node}
Suppose $L(\mu)\neq \emptyset$ and the information lists of $\mu$ are
available. Let $t_{\mu}$ be the number of indices in $L(\mu)$. Then, we can build a
data structure of $O(t_{\mu})$ size in $O(|T(\mu)|+t_{\mu}\log t_{\mu})$ time
on $T(\mu)$, such that given any point $x\in T(\mu)$, we can report
all indices $i$ of $L(\mu)$ such that $P_i$ is covered by $x$ in
$O(\log n+ k\log n)$ amortized time, where $k$ is the output size; further, if
$P_i$ is deactivated with $i\in L(\mu)$, then we can remove $i$ from
the data structure and all information lists of $\mu$ in $O(\log n)$ amortized time.
\end{lemma}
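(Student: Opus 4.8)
We need to build, for a single node $\mu$, a structure over $T(\mu)$ that reports the active uncertain points among $L(\mu)$ covered by a query point $x\in T(\mu)$. Let me think about what "covered" means here. For $i \in L(\mu)$, all locations of $P_i$ in $T(\mu)$... wait, no — by definition $L(\mu)$ consists of indices $i$ such that $P_i$ has NO locations in $T(\mu)$. So all of $P_i$'s locations lie outside $T(\mu)$, and they reach $T(\mu)$ only through the (at most two) connectors of $T(\mu)$. So for $x \in T(\mu)$ and connector $y$, the contribution of locations "behind" $y$ is $w_i \sum_{p_{ij} \in T(y,\mu)} f_{ij}(d(p_{ij},y) + d(y,x)) = D(i,y,\mu) + F(i,y,\mu)\cdot d(y,x)$, using the information lists. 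Summing over the connectors $y$ of $T(\mu)$ gives $\Ed(x,P_i)$ as an explicit affine function of the distances $d(x,y)$ from $x$ to the connectors. With one connector $y$, $\Ed(x,P_i) = D(i,y,\mu) + F(i,y,\mu)\cdot d(x,y)$; with two connectors $y_1,y_2$, it is $D(i,y_1,\mu) + D(i,y_2,\mu) + F(i,y_1,\mu) d(x,y_1) + F(i,y_2,\mu) d(x,y_2)$, and note $F(i,y_1,\mu) + F(i,y_2,\mu) = 1$ since $P_i$ has all its probability outside $T(\mu)$.

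The plan: treat the query as follows. Given $x$, first compute $d(x,y_1)$ and $d(x,y_2)$ in $O(\log n)$ time (the connectors are fixed, so precompute distances from every vertex of $T(\mu)$ to each connector, or use an LCA/distance oracle on $T(\mu)$). Then the condition $\Ed(x,P_i)\le\lambda$ becomes, after substituting the two connector distances, a linear inequality in the two fixed coefficients $(F(i,y_1,\mu),F(i,y_2,\mu))$ together with the constants $D(i,y_j,\mu)$. Writing $a_i = F(i,y_1,\mu)$ (so $F(i,y_2,\mu)=1-a_i$) and $b_i = D(i,y_1,\mu)+D(i,y_2,\mu)$, the condition is $b_i + a_i d(x,y_1) + (1-a_i)d(x,y_2) \le \lambda$, i.e. $a_i\,(d(x,y_1)-d(x,y_2)) + b_i \le \lambda - d(x,y_2)$. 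So each $P_i$ corresponds to a point $(a_i,b_i)$ in the plane, and the query asks for all such points lying in a halfplane whose normal direction depends only on the sign/value of $d(x,y_1)-d(x,y_2)$ and whose offset depends on $x$. This is a halfplane range reporting problem on $t_\mu$ points, and it supports deletions; a standard dynamic halfplane-reporting / convex-layers structure (e.g. based on a Chan-style or interval-tree decomposition sorted by $a_i$) gives $O(\log t_\mu + k)$ or $O((k+1)\log n)$ query and $O(\log n)$ deletion after $O(t_\mu\log t_\mu)$ preprocessing; the $O(|T(\mu)|)$ term in the build time is just for reading the information lists and precomputing connector distances over $T(\mu)$. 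The one-connector case is the easy special case: the points $(a_i := F(i,y,\mu),\ b_i := D(i,y,\mu))$ queried against a halfplane with a fixed normal $(d(x,y),1)$, which reduces to a dynamic lower-envelope / Kirkpatrick–Dobkin-type structure, or even simpler since sorting by $a_i$ suffices.

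The main obstacle is getting the amortized bounds right for the dynamic halfplane-reporting structure with the stated $O(\log n + k\log n)$ query and $O(\log n)$ deletion, rather than worst-case, since we must also keep all the information lists $L(\mu), L(y,\mu)$ synchronized under deletions (each holds the same index set, so a deletion must touch $O(1)$ of them, each in $O(\log n)$ time via a balanced BST cross-linked with the geometric structure). I would phrase the geometric component as: map each $i\in L(\mu)$ to its dual point, store the points in a balanced binary search tree keyed by $a_i$, and at each internal node maintain the convex hull (upper and lower) of the points in its subtree; a halfplane-reporting query walks $O(\log t_\mu)$ canonical subtrees and, in each, does a hull-based extreme-point search followed by an output-sensitive walk, charging $O(\log n)$ per reported point, which yields the claimed amortized query time; a deletion rebuilds the $O(\log t_\mu)$ hulls on the root-to-leaf path in amortized $O(\log n)$ via the standard fractional-cascading/weight-balanced argument. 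I would then remark that since every $P_i\in L(\mu)$ has all locations outside $T(\mu)$, $\Ed(x,P_i)$ really is the affine expression above for \emph{all} $x\in T(\mu)$ simultaneously — this is exactly the place where the connector-bounded property (at most two connectors) is used, guaranteeing the dual image is a point set in the plane rather than in $\mathbb{R}^{\Theta(1)}$ of unbounded dimension.
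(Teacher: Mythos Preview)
Your reduction is essentially the paper's: for $i\in L(\mu)$ all locations of $P_i$ lie outside $T(\mu)$, so $\Ed(x,P_i)$ is affine in two $x$-dependent parameters (the paper uses $a_x=d(x,q_x)$ and $b_x=d(q_x,y_1)$ where $q_x$ is the projection of $x$ onto $\pi(y_1,y_2)$, rather than your $(d(x,y_1),d(x,y_2))$, but the two parameterizations are affinely equivalent). The covering condition then becomes a halfplane membership test in the plane $z=\lambda$, and the task is dynamic halfplane reporting with deletions. One slip: your formula $b_i+a_i\,d(x,y_1)+(1-a_i)\,d(x,y_2)\le\lambda$ drops the weight $w_i$ that multiplies the $F$-terms; with $w_i$ present your 2D point $(a_i,b_i)$ no longer encodes the line, since a third coefficient $w_i$ enters. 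The fix is to keep each $P_i$ as a \emph{line} in the $(a_x,b_x)$-plane (the intersection of the plane $z=\Ed(x,P_i)$ with $z=\lambda$) and ask which lines lie above/below the query point---exactly what the paper does.

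The genuine gap is in your data structure. A balanced BST keyed by $a_i$ with convex hulls stored at every internal node does not support halfplane reporting via ``$O(\log t_\mu)$ canonical subtrees'': canonical-subtree decomposition is for orthogonal ranges, and the set of points below an arbitrary non-axis-parallel line is not a union of $O(\log t_\mu)$ subtree leaf sets. Moreover, storing full hulls at every node costs $\Theta(t_\mu\log t_\mu)$ space (violating the $O(t_\mu)$ bound), and rebuilding the hulls on a root-to-leaf path after a deletion is $\Theta(t_\mu)$ worst case and $\Theta(\log^2 t_\mu)$ amortized at best (Overmars--van Leeuwen), not $O(\log n)$. The paper's actual mechanism is cleaner and is what you alluded to in your first pass before the BST description: maintain the lines with the Brodal--Jacob dynamic convex hull structure, and answer a query by repeatedly taking the extreme line in the query direction, reporting it if it lies on the correct side of $x_\lambda$, and deleting it---each step $O(\log n)$ amortized, for $O((k+1)\log n)$ total. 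Since in the surrounding algorithm every reported index is immediately deactivated anyway, these query-time deletions are not wasted work. Your remark that the two-connector bound is precisely what keeps the problem two-dimensional is exactly right and matches the paper's emphasis.
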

\begin{proof}
As $L(\mu)\neq \emptyset$, $\mu$ is not the root. Thus, $T(\mu)$ has one or two connectors. We only discuss the most general case where $T(\mu)$ has
two connectors since the other case is similar but easier.
Let $y_1$ and $y_2$ denote the two connectors of $T(\mu)$, respectively. So the
lists $L(y_1,\mu)$ and $L(y_2,\mu)$ are available.

Note that for any two points $p$ and $q$ in $T(\mu)$, $\pi(p,q)$ is
also in $T(\mu)$ since $T(\mu)$ is connected.

\begin{figure}[t]
\begin{minipage}[t]{\linewidth}
\begin{center}
\includegraphics[totalheight=1.2in]{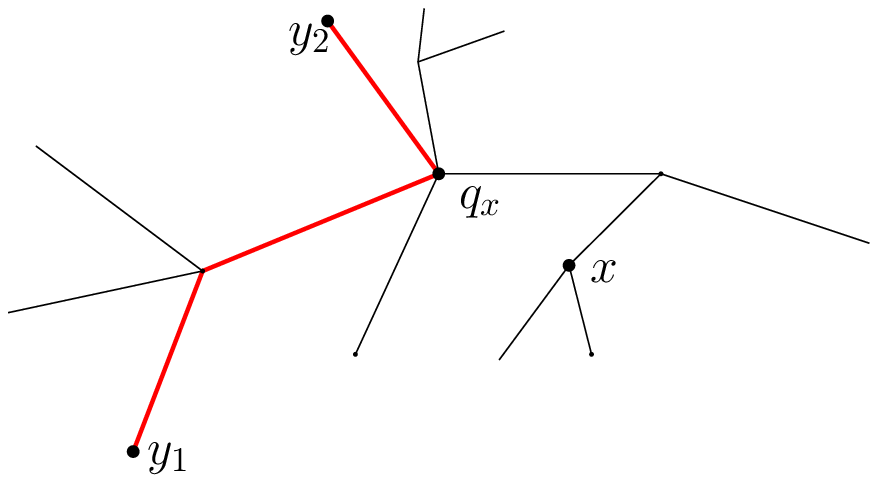}
\caption{\footnotesize Illustrating the definition of $q_x$ in the subtree $T(\mu)$ with two connectors $y_1$ and $y_2$. The path $\pi(y_1,y_2)$ is highlighted with thicker (red) segments.}
\label{fig:qx}
\end{center}
\end{minipage}
\vspace*{-0.15in}
\end{figure}

Consider any point $x\in T(\mu)$. Suppose we traverse on $T(\mu)$ from $x$ to
$y_1$, and let $q_x$ be the first point on $\pi(y_1,y_2)$ we encounter
(e.g. see Fig~\ref{fig:qx}; so $q_x$ is $x$ if $x\in\pi(y_1,y_2)$).
Let $a_x=d(x,q_x)$ and $b_x=d(q_x,y_1)$. Thus, $d(y_1,x)=a_x+b_x$ and
$d(y_2,x)=a_x+d(y_1,y_2)-b_x$.

For any $i\in L(\mu)$, since $P_i$ does not have any location in $T(\mu)$, we have
$F(i,y_1,\mu)+F(i,y_2,\mu)=1$, and thus the following holds for $\Ed(x,P_i)$:
\begin{equation*}
\begin{split}
&\Ed(x,P_i)  =  w_i\cdot \sum_{p_{ij}\in T}f_{ij}\cdot d(x,p_{ij}) \\
&= w_i\cdot \sum_{p_{ij}\in T(y_1,\mu)}f_{ij}\cdot d(x,p_{ij}) +
w_i\cdot \sum_{p_{ij}\in T(y_2,\mu)}f_{ij}\cdot d(x,p_{ij}) \\
&=w_i\cdot [F(i,y_1,\mu)\cdot (a_x+b_x) + D(i,y_1,\mu)] +
w_i\cdot [F(i,y_2,\mu)\cdot (a_x+d(y_1,y_2)-b_x) + D(i,y_2,\mu)] \\
&=w_i\cdot [a_x+(F(i,y_1,\mu)-F(i,y_2,\mu))\cdot
b_x+D(i,y_1,\mu)+D(i,y_2,\mu)+F(i,y_2,\mu)\cdot d(y_1,y_2)].
\end{split}
\end{equation*}

Notice that for any $x\in T(\mu)$, all above values are constant except $a_x$ and $b_x$.
Therefore, if we consider $a_x$ and $b_x$ as two variables of $x$,
$\Ed(x,P_i)$ is a linear function of them. In other words,
$\Ed(x,P_i)$ defines a plane in $\bbR^3$, where the $z$-coordinates
correspond to the values of $\Ed(x,P_i)$ and the $x$- and
$y$-coordinates correspond to $a_x$ and $b_x$ respectively. In the following, we also
use $\Ed(x,P_i)$ to refer to the plane defined by it in $\bbR^3$.

\paragraph{Remark.} This nice property for calculating $\Ed(x,P_i)$ is due to that
$\mu$ has at most two connectors. This is another reason
our decomposition requires every subtree $T(\mu)$ to have at most two
connectors.
\vspace{0.1in}

Recall that $x$ covers $P_i$ if $\Ed(x,P_i)\leq \lambda$. Consider the
plane $H_{\lambda}: z=\lambda$ in $\bbR^3$. In general the two planes $\Ed(x,P_i)$
and $H_{\lambda}$ intersect at a line $l_i$ and we let $h_i$ represent
the closed
half-plane of $H_{\lambda}$ bounded by $l_i$ and above the plane $\Ed(x,P_i)$. Let
$x_{\lambda}$ be the point $(a_x,b_x)$ in the plane $H_{\lambda}$. An easy
observation is that $\Ed(x,P_i)\leq \lambda$ if and only if $x_{\lambda}\in
h_i$. Further, we say that $l_i$ is an {\em upper bounding line} of $h_i$ if
$h_i$ is below $l_i$ and a {\em lower bounding line} otherwise.
Observe that if $l_i$ is an upper bounding line, then
$\Ed(x,P_i)\leq \lambda$ if and only if $x_{\lambda}$ is below $l_i$; if
$l_i$ is a lower bounding line, then $\Ed(x,P_i)\leq \lambda$ if and only
if $x_{\lambda}$ is above $l_i$.

Given any query point $x\in T(\mu)$, our goal for answering the query is to find all
indices $i\in L(\mu)$ such that $P_i$ is covered by $x$.  Based on the above
discussions, we do the following preprocessing. After $d(y_1,y_2)$ is
computed, by using the information lists of $y_1$ and $y_2$, we compute all functions
$\Ed(x,P_i)$ for all $i\in L(\mu)$ in $O(t_{\mu})$ time. Then, we
obtain a set $U$ of all upper bounding lines and a set of all
lower bounding lines on the plane $H_{\lambda}$ defined by
$\Ed(x,P_i)$ for all $i\in L(\mu)$. In the following, we
first discuss the upper bounding lines. Let $S_U$ denote the indices $i\in L(\mu)$ such that
$P_i$ defines an upper bounding line in $U$.

Given any point $x\in T(\mu)$, we first compute $a_x$ and $b_x$. This
can be done in constant time after $O(|T(\mu)|)$ time preprocessing,
as follows. In the preprocessing, for each vertex $v$ of $T(\mu)$,
we compute the vertex $q_v$ (defined in the similar way as $q_x$ with
respect to $x$) as well as the two values $a_v$ and $b_v$ (defined
similarly as $a_x$ and $b_x$, respectively). This can be easily done
in $O(|T(\mu)|)$ time by traversing $T(\mu)$ and we omit the details.
Given the point $x$, which is specified by an edge $e$ containing $x$,
let $v$ be the incident vertex of $e$ closer to $y_1$ and let $\delta$
be the length of $e$ between $v$ and $x$. Then, if $e$ is
on $\pi(y_1,y_2)$, we have $a_x=0$ and $b_x=b_v+\delta$. Otherwise,
$a_x=a_v+\delta$ and $b_x=b_v$.

After $a_x$ and $b_x$ are computed, the point $x_{\lambda}=(a_x,b_x)$ on the plane $H_{\lambda}$
is also obtained. Then, according to our discussion, all uncertain points
of $S_U$ that are covered by $x$ correspond to exactly those lines
of $U$ above $x_{\lambda}$. Finding the lines of $U$ above $x_{\lambda}$ is actually
the dual problem of half-plane range reporting query in $\bbR^2$.
By using the dynamic convex hull maintenance data
structure of Brodal and Jacob~\cite{ref:BrodalDy02}, with
$O(|U|\log |U|)$ time and $O(|U|)$ space preprocessing, for any point
$x_{\lambda}$, we can easily report all lines of $U$ above $x_{\lambda}$ in $O(\log |U| + k\log |U|)$ amortized
time (i.e., by repeating $k$ deletions),
where $k$ is the output size, and deleting a line from $U$ can be done in $O(\log |U|)$ amortized time. Clearly, $|U|\leq t_{\mu}$.

On the set of all lower bounding lines, we do the similar
preprocessing, and the query algorithm is symmetric.

Hence, the total preprocessing time is $O(|T(\mu)|+t_{\mu}\log
t_{\mu})$ time. Each query takes
$O(\log t_{\mu} + k\log^2 t_{\mu})$ amortized time and each remove operation can be performed in
$O(\log t_{\mu})$ amortized time.  Note that $t_{\mu}\leq n$. The lemma thus follows.
\qed
\end{proof}

The preprocessing algorithm for our data structure $\calA_1$ consists
of the following four steps. First, we compute the
information lists for all nodes $\mu$ of $\Upsilon$.
Second, for each node $\mu\in \Upsilon$, we compute the data structure
of Lemma~\ref{lem:node}. Third, for each $i\in [1,n]$, we compute
a {\em node list} $L_{\mu}(i)$ containing all nodes $\mu\in \Upsilon$
such that $i\in L(\mu)$. Fourth,
for each leaf $\mu$ of $\Upsilon$, if $T(\mu)$ is a vertex $v$ of $T$ holding
a location $p_{ij}$, then we maintain at $\mu$ the value $\Ed(v,P_i)$.
Before giving the details of the above processing algorithm,
we first assume the preprocessing work has been done and discuss the
algorithm for answering the coverage-report-queries.

Given any point $x\in T$, we answer the coverage-report-query as follows. Note that
$x$ is in $T(\mu_x)$ for some leaf $\mu_x$ of $\Upsilon$.
For each node $\mu$ in the path of $\Upsilon$ from the root  to
$\mu_x$, we apply the query algorithm
in Lemma~\ref{lem:node} to report all indices $i\in L(\mu)$ such that
$x$ covers $P_i$. In addition, if $T(\mu_x)$ is a vertex of $T$ holding a
location $p_{ij}$ such that $P_i$ is active, then we report $i$ if
$\Ed(v,P_i)$, which is maintained at $v$, is at most $\lambda$.
The following lemma proves the correctness and the
performance of our query algorithm.

\begin{lemma}
Our query algorithm correctly finds all active uncertain points
that are covered by $x$ in $O(\log M \log n+k\log n)$ amortized time, where $k$ is the
output size.
\end{lemma}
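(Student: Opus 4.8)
The plan is to prove both correctness and the claimed amortized running time for the coverage-report-query algorithm. For \textbf{correctness}, the key fact is that every location of every uncertain point $P_j$ lies in $T(\mu)$ for exactly the leaf $\mu = \mu_x$ reached, or more precisely: for any $P_j$, as we walk down the path of $\Upsilon$ from the root to $\mu_x$, there is a unique node $\mu$ on this path at which $j$ first ``drops out'', i.e., $P_j$ has a location in $T(\mu')$ (the parent) but none in $T(\mu)$; this is exactly the node $\mu$ with $j \in L(\mu)$ by definition of the information lists. The one exceptional case is when all of $P_j$'s locations happen to sit inside $T(\mu_x)$ down to the leaf — but since $\mu_x$ is a leaf, $T(\mu_x)$ is a single vertex or an open edge, and under the general-position / vertex-constrained assumption this means $P_j$ has a single location $p_{j\ell}$ equal to that vertex, handled by the explicit $\Ed(v,P_j)$ stored at the leaf. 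I would first argue that these cases are exhaustive and mutually exclusive, so each active $P_j$ covered by $x$ is reported exactly once, and then invoke Lemma~\ref{lem:node} to guarantee that, at the node $\mu$ where $j \in L(\mu)$, the query correctly decides whether $\Ed(x,P_j) \le \lambda$ (since $x \in T(\mu)$ because $\mu$ is an ancestor of $\mu_x$).

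For the \textbf{running time}, I would split the cost into the per-node query overhead and the output-dependent cost. The path in $\Upsilon$ from the root to $\mu_x$ has length $O(\log M)$ since the height of $\Upsilon$ is $O(\log M)$. At each node $\mu$ on this path, Lemma~\ref{lem:node} gives an amortized cost of $O(\log n + k_\mu \log n)$, where $k_\mu$ is the number of indices reported at $\mu$; summing the $O(\log n)$ terms over the $O(\log M)$ nodes gives $O(\log M \log n)$, and $\sum_\mu k_\mu = k$ (the total output size along the path, each $P_j$ reported at its unique node), giving the additional $O(k \log n)$. The leaf check contributes $O(1)$. Hence the bound $O(\log M \log n + k \log n)$ amortized.

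One subtlety I would address carefully is the word \emph{amortized}: Lemma~\ref{lem:node}'s bound is amortized because reporting a line is implemented as a deletion from the Brodal--Jacob structure, so the $k_\mu \log n$ term is charged against deletions that permanently shrink the structure. Since a deactivated $P_j$ is removed from all the data structures and never re-inserted, the total number of such reporting-deletions over the whole algorithm is at most $n$, so the amortization is globally valid, not just per query. I would state this explicitly and note that the $O(\log n)$ per-node ``probe'' cost (e.g., the successor/predecessor search that locates $x_\lambda$ relative to the convex hull) is worst-case, so it is only the reporting part that is amortized. I do not expect any real obstacle here — the argument is a routine composition of Lemma~\ref{lem:node} along a root-to-leaf path — the only thing requiring care is the bookkeeping that each covered active $P_j$ is caught at exactly one node (or the leaf), which rests entirely on the definition of the information lists $L(\mu)$.
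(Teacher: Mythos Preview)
Your proposal is correct and follows essentially the same approach as the paper's own proof: both argue correctness by observing that each active $P_j$ covered by $x$ is detected at the unique node $\mu$ on the root-to-$\mu_x$ path where $j$ first enters $L(\mu)$ (or at the leaf if a location of $P_j$ sits at $T(\mu_x)$), and both obtain the running time by summing the per-node cost of Lemma~\ref{lem:node} over the $O(\log M)$ nodes on this path while using the uniqueness to get $\sum_\mu k_\mu = k$. Your explicit discussion of why the amortization is globally valid (reported points are permanently removed) is more detailed than the paper's version, but consistent with it.
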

\begin{proof}
Let $\pi_x$ represent the path of $\Upsilon$ from the root to the leaf $\mu_x$.
To show the correctness of the algorithm, we argue that for each
active uncertain point $P_i$ that is covered by $x$, $i$ will be
reported by our query algorithm.

Indeed, if $T(\mu_x)$ is a vertex $v$ of $T$ holding a location $p_{ij}$ of
$P_i$, then the leaf $\mu_x$ maintains the value
$\Ed(v,P_i)$, which is equal to $\Ed(x,P_i)$ as $x=v$. Hence, our
algorithm will report $i$ when it processes $\mu_x$. Otherwise,
no location of $P_i$ is in $T(\mu_x)$. Since $P_i$ has locations in
$T$, if we go from the root to $\mu_x$ along $\pi$, we will eventually
meet a node $\mu$ such that $T(\mu)$ does not have any location of $P_i$
while $T(\mu')$ has at least one location of $P_i$, where $\mu'$ is
the parent of $\mu$. This implies that $i$ is in $L(\mu)$, and
consequently, our query algorithm will report $i$ when it processes
$\mu$. This establishes the correctness of our query algorithm.

For the runtime, as the height of $\Upsilon$ is $O(\log M)$, we make $O(\log M)$ calls on the query
algorithm in Lemma~\ref{lem:node}. Further, notice that each $i$ will be reported at most once. This is  because if $i$ is in $L(\mu)$ for some node $\mu$, then $i$ cannot be in $L(\mu')$ for any ancestor $\mu'$ of $\mu$. Therefore, the total runtime is $O(\log M \log n+k\log n)$.
\qed
\end{proof}

If an uncertain point $P_i$ is deactivated, then we scan the node list
$L_{\mu}(i)$ and for each node $\mu\in L_{\mu}(i)$, we remove $i$ from
the data structure by Lemma~\ref{lem:node}. The following
lemma implies that the total time is $O(m_i\log M\log n)$.

\begin{lemma}
For each $i\in [1,n]$, the number of nodes in $L_{\mu}(i)$ is
$O(m_i\log M)$.
\end{lemma}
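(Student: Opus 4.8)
The plan is to charge each node of $L_{\mu}(i)$ to its parent, and then to bound, level by level, the number of nodes of $\Upsilon$ whose subtree contains a location of $P_i$.

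First, recall the definition: $i\in L(\mu)$ only if $P_i$ has at least one location in $T(\mu')$, where $\mu'$ is the parent of $\mu$. Hence every node of $L_{\mu}(i)$ is a child of some node $\mu'$ with the property that $T(\mu')$ contains at least one location of $P_i$; call such a $\mu'$ \emph{$i$-heavy}, and let $N_i$ be the number of $i$-heavy nodes of $\Upsilon$. The next step is to observe that a single $i$-heavy node $\mu'$ contributes at most three nodes to $L_{\mu}(i)$: by the connector-bounded centroid decomposition, the subtrees of the children of $\mu'$ are pairwise disjoint and their union is $T(\mu')$, where each vertex (and hence each location) belongs to exactly one child's subtree as a ``closed'' vertex; since $\mu'$ has at most four children and at least one of them already contains a location of $P_i$, at most three children $\mu$ can have $i\in L(\mu)$. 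Therefore $|L_{\mu}(i)|\le 3N_i$.

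It then remains to show $N_i = O(m_i\log M)$. Here I would invoke two facts about $\Upsilon$ already established in Section~\ref{sec:decom}: the subtrees $T(\mu)$ over all nodes $\mu$ at a fixed depth of $\Upsilon$ are pairwise disjoint, and each location of $\calP$ belongs to exactly one of these subtrees (a connector vertex is ``open'' in all but one of the subtrees that share it). Consequently, at any fixed level, the $m_i$ locations of $P_i$ lie in at most $m_i$ distinct subtrees, so at most $m_i$ nodes at that level are $i$-heavy. Since the height of $\Upsilon$ is $O(\log M)$, summing over all levels yields $N_i = O(m_i\log M)$, and hence $|L_{\mu}(i)| = O(m_i\log M)$.

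The only point needing care is the partition/disjointness bookkeeping for the children's subtrees and for same-level subtrees, in particular the correct handling of ``open'' versus ``closed'' connector vertices so that each location is counted in exactly one subtree per level; once that is nailed down, the level-by-level counting is immediate.
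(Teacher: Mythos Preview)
Your proof is correct and follows essentially the same approach as the paper: charge each $\mu\in L_{\mu}(i)$ to its parent (an ``$i$-heavy'' node), bound the number of children per $i$-heavy parent by a constant, and then bound the number of $i$-heavy nodes by $O(m_i\log M)$ using the disjointness of same-level subtrees together with the $O(\log M)$ height of $\Upsilon$. The only differences are cosmetic: you squeeze the constant from $4$ down to $3$ by noting that at least one child of an $i$-heavy parent must itself contain a location of $P_i$, and you phrase the $N_i$ bound level-by-level whereas the paper phrases it location-by-location (each $p_{ij}$ lies on a single root-to-leaf path of $\Upsilon$); these are two ways of counting the same thing.
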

\begin{proof}
Let $\alpha$ denote the number of nodes of $L_{\mu}(i)$.
Our goal is to argue that $i$ appears in $L(\mu)$ for
$O(m_i\log M)$ nodes $\mu$ of $\Upsilon$. Recall that if $i$ is in
$L(\mu)$ for a node $\mu\in \Upsilon$, then  $P_i$ has at least one
location in $T(\mu')$, where $\mu'$ is the parent of $\mu$. Since
each node of $\Upsilon$ has at most four children, if $N$ is the total
number of nodes $\mu'$ such that $P_i$ has at least one location in
$T(\mu')$, then
it holds that $\alpha\leq 4N$. Below we show that $N=O(m_i\log
M)$, which will prove the lemma.

Consider any location $p_{ij}$ of $P_i$. According to our
decomposition, the subtrees $T(\mu)$ for all nodes $\mu$ in the same
level of $\Upsilon$ are pairwise disjoint. Let $v$ be the vertex of $T$ that
holds $p_{ij}$, and let $\mu_v$ be the leaf of $\Upsilon$ with
$T(\mu_v)=v$. Observe that for any node $\mu\in \Upsilon$, $p_{ij}$
appears in $T(\mu)$ if and only if $\mu$ is in the path of $\Upsilon$ from
$\mu_v$ to the root. Hence, there are $O(\log M)$ nodes $\mu\in
\Upsilon$ such that $p_{ij}$ appears in $T(\mu)$.
As $P_i$ has $m_i$ locations, we obtain $N=O(m_i\log M)$.
\qed
\end{proof}

The following lemma gives our preprocessing algorithm for building $\calA_1$.

\begin{lemma}\label{lem:120}
$\sum_{\mu\in \Upsilon}t_{\mu}=O(M\log M)$, and the preprocessing time for constructing
the data structure $\calA_1$ excluding the second step is $O(M\log M)$.
\end{lemma}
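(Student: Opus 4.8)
The plan is to prove the two claims separately. For the bound $\sum_{\mu\in\Upsilon}t_\mu=O(M\log M)$, recall that $t_\mu=|L(\mu)|$, and by definition $i\in L(\mu)$ only if $P_i$ has a location in $T(\mu')$, where $\mu'$ is the parent of $\mu$. So $\sum_{\mu\in\Upsilon}t_\mu$ is at most four times $\sum_{\mu'\in\Upsilon}(\text{number of }i\text{ with }P_i\text{ having a location in }T(\mu'))$, since each node has at most four children. Now I would swap the order of summation: this last sum equals $\sum_{i=1}^n(\text{number of nodes }\mu'\text{ with }P_i\text{ having a location in }T(\mu'))$. By the argument already used in the previous lemma (for any location $p_{ij}$ at vertex $v$, the nodes $\mu'$ whose subtree contains $p_{ij}$ are exactly those on the root-to-$\mu_v$ path in $\Upsilon$, of which there are $O(\log M)$; summing over $m_i$ locations of $P_i$ gives $O(m_i\log M)$), this sum is $O(\sum_i m_i\log M)=O(M\log M)$. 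Hence $\sum_\mu t_\mu=O(M\log M)$.

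For the preprocessing time (excluding the second step, i.e.\ excluding the $O(|T(\mu)|+t_\mu\log t_\mu)$ cost of building the Lemma~\ref{lem:node} structures), I would go through the other three steps listed before Lemma~\ref{lem:120}. The third step — computing the node lists $L_\mu(i)$ — and the first step — computing the information lists $L(\mu)$ and $L(y,\mu)$ for all $\mu$ — can be done together by a single top-down traversal of $\Upsilon$ in the same style as the median-computation algorithm of Section~\ref{sec:median}: at each node $\mu$ we traverse $T(\mu)$ a constant number of times, using the array $F$ and the parent's information lists (which here carry the extra field $D(i,y,\mu)$, updated additively exactly like $F(i,y,\mu)$ when the traversal crosses a connector). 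The cost per node is $O(|T(\mu)|)$, and since the subtrees on each level of $\Upsilon$ are pairwise disjoint with total size $O(M)$ and $\Upsilon$ has height $O(\log M)$, this totals $O(M\log M)$. The node lists $L_\mu(i)$ are then just the transpose of the family $\{L(\mu)\}$, of total size $\sum_\mu t_\mu=O(M\log M)$, and can be produced within the same bound. The fourth step — for each leaf $\mu$ with $T(\mu)$ a vertex $v$ holding $p_{ij}$, store $\Ed(v,P_i)$ — can be done in $O(M\log M)$ total by one call to the data structure $\calA_3$ of Lemma~\ref{lem:ed} per leaf (there are $O(M)$ leaves and each query is $O(\log M)$); alternatively one can fold these values into the top-down traversal. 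In all cases the total excluding step two is $O(M\log M)$.

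The main technical point to get right is the first step: verifying that the information lists (now including the $D(i,y,\mu)$ values) can be maintained top-down at total cost $O(|T(\mu)|)$ per node $\mu$, using only the parent's lists and a constant number of traversals of $T(\mu)$. This is where the connector-bounded property of the decomposition is essential — each $T(\mu)$ has at most two connectors, so only a constant number of probability/distance lists are attached to $\mu$, and merging them into the children's lists (as in the simultaneous scan of $L(y,\mu_1)$ and $L(y,\mu)$ done in Section~\ref{sec:median}) stays within $O(|T(\mu)|)$. The arithmetic itself — updating $F[i]\mathrel{+}=f_{ij}$ and $D$-values additively along the traversal, and copying $F(i,y,\mu)$ through when $T(y,\mu_1)=T(y,\mu)$ — is routine and essentially identical to the median computation, so I would only sketch it and refer back to Section~\ref{sec:median}.
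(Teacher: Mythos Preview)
Your proposal is correct and follows essentially the same approach as the paper: a top-down traversal of $\Upsilon$ spending $O(|T(\mu)|)$ time per node to build the information lists (the paper does this by maintaining, alongside $L(\mu)$, an auxiliary list $L'(\mu)$ of indices $i$ with $P_i$ having at least one location in $T(\mu)$), with the node lists $L_\mu(i)$ produced as a byproduct.

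Two small remarks. First, for $\sum_\mu t_\mu=O(M\log M)$ you give a direct counting argument (reusing the previous lemma), whereas the paper derives the bound as a consequence of the algorithm's $O(M\log M)$ running time; your route is slightly cleaner. Second, your suggestion to use $\calA_3$ for the fourth step is circular in the paper's organization, since the construction of $\calA_3$ in Section~\ref{sec:ed} explicitly relies on the first \emph{and fourth} steps of $\calA_1$'s preprocessing; your alternative (folding it into the traversal) is exactly what the paper does, observing that at a leaf $\mu$ with $T(\mu)=v$ one has $\Ed(v,P_i)=D(i,v,\mu)$, which is already available from the computed information list at that leaf.
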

\begin{proof}
We begin with the first step of the preprocessing algorithm for
$\calA_1$, i.e., computing the information lists for all nodes $\mu$ of $\Upsilon$.

In order to do so, for each node $\mu\in \Upsilon$, we will also compute a sorted
list $L'(\mu)$ of all such indices $i\in [1,n]$ that $P_i$ has at least one location
in $T(\mu)$, and further, for each connector $y$
of $T(\mu)$, we will compute a list $L'(y,\mu)$ that is
the same as $L'(\mu)$ except that each $i\in L'(y,\mu)$ is
associated with two values: $F(i,y,\mu)$, which is equal to the
probability sum of $P_i$ in the subtree $T(y,\mu)$, and $D(i,y,\mu)$,
which is equal to the expected distance from $y$ to the locations of $P_i$ in
$T(y,\mu)$, i.e., $D(i,y,\mu)=w_i\cdot \sum_{p_{ij}\in
T(y,\mu)}f_{ij}\cdot d(y,p_{ij})$. With a little abuse of notation,
we call all above the {\em information lists} of $\mu$ (including its
original information lists). In the
following, we describe our algorithm for computing the information
lists of all nodes $\mu$ of $\Upsilon$. Let
$F[1\cdots n]$ and $D[1\cdots n]$ be two arrays that we are going to
use in our algorithm (they will mostly be used to compute the $F$
values and $D$ values of the information lists of connectors).

Initially, if $\mu$ is the root of $\Upsilon$, we have
$L'(\mu)=\{1,2,\ldots,n\}$ and $L(\mu)=\emptyset$. Since $T(\mu)$ does
not have any connectors, we do not need to compute the information
lists for connectors.

Consider any internal node $\mu$. We assume all information lists for $\mu$ has
been computed (i.e., $L(\mu)$, $L'(\mu)$, and $L'(y,\mu)$, $L(y,\mu)$ for each
connector $y$ of $T(\mu)$). In the following we present our algorithm for
processing $\mu$, which will compute
the information lists of all children of $\mu$ in $O(|T(\mu)|)$ time.

We first discuss the case where $\mu$ has two children, denoted by
$\mu_1$ and $\mu_2$, respectively. Let $v$ be the centroid of $T(\mu)$
that is used to decompose $T(\mu)$ into $T(\mu_1)$ and $T(\mu_2)$
(e.g., see Fig.~\ref{fig:inforlist}).
We first compute the information lists of $\mu_1$, as follows.

We begin with computing the two lists $L(\mu_1)$ and $L'(\mu_1)$. Initially, we set
both of them to $\emptyset$. We scan the list $L'(\mu)$ and for each
$i\in L'(\mu)$, we reset
$F[i]=0$. Then, we scan the subtree $T(\mu_1)$, and for each location
$p_{ij}$, we set $F[i]=1$ as a flag showing that $P_i$ has locations in
$T(\mu_1)$. Afterwards, we scan the list $L'(\mu)$ again, and for
each $i\in L'(\mu)$, if $F[i]=1$, then we add $i$ to $L'(\mu_1)$;
otherwise, we add $i$ to $L(\mu_1)$. This computes the two index lists
$L(\mu_1)$ and $L'(\mu_1)$ for $\mu_1$. The running time is $O(|T(\mu)|)$
since the size of $L'(\mu)$ is no more than $|T(\mu)|$.

We proceed to compute the information lists for the connectors of
$T(\mu_1)$. Recall that $v$ is a connector of $T(\mu_1)$. So we need to
compute the two lists $L(v,\mu_1)$ and $L'(v,\mu_1)$, such that each
index $i$ in either list is associated with the two values
$F(i,v,\mu_1)$ and $D(i,v,\mu_1)$. We first copy all indices of
$L(\mu_1)$ to $L(v,\mu_1)$ and copy all indices of $L'(\mu_1)$ to
$L'(v,\mu_1)$. Next we compute their $F$ and $D$ values as follows.

\begin{figure}[t]
\begin{minipage}[t]{\linewidth}
\begin{center}
\includegraphics[totalheight=1.3in]{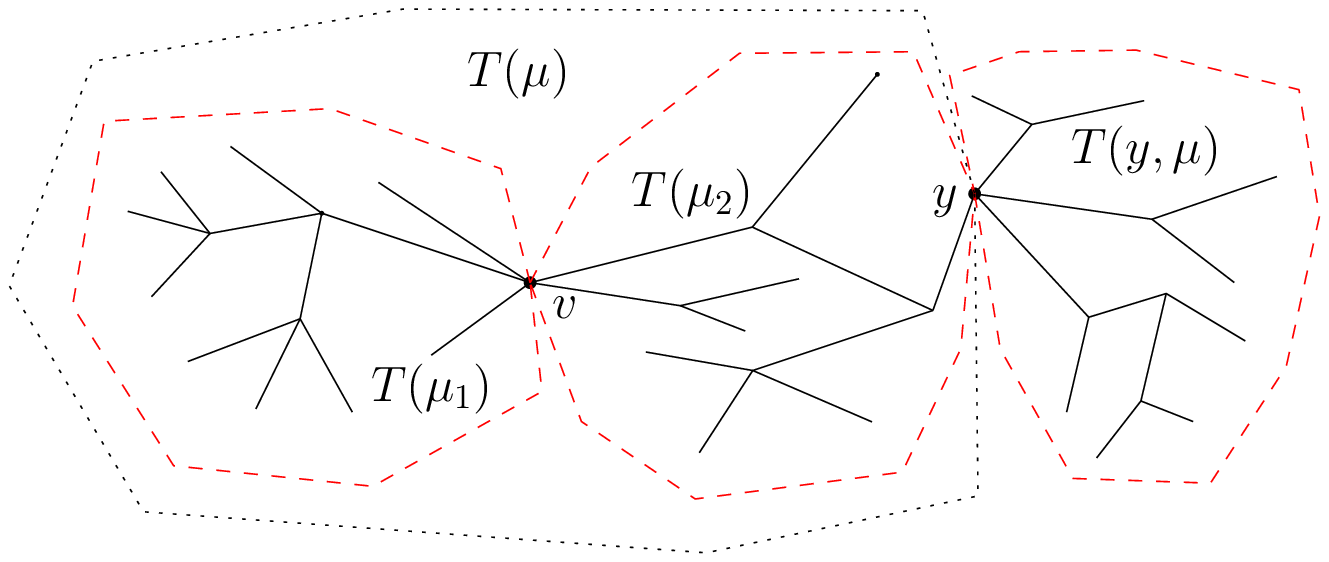}
\caption{\footnotesize Illustrating the subtrees $T(\mu_1),T(\mu_2)$,
and $T(y,\mu)$, where $y$ is a connector of $T(\mu)=T(\mu_1)\cup
T(\mu_2)$. Note that $T(y,\mu)$ is also $T(y,\mu_2)$ as $y\in T(\mu_2)$.}
\label{fig:inforlist}
\end{center}
\end{minipage}
\vspace*{-0.15in}
\end{figure}

We first scan $L'(\mu)$ and for each $i\in L'(\mu)$,
we reset $F[i]=0$ and $D[i]=0$. Next, we traverse $T(\mu_2)$ and for
each location $p_{ij}$, we update $F[i]=F[i]+f_{ij}$ and
$D[i]=D[i]+w_i\cdot f_{ij}\cdot d(v,p_{ij})$ ($d(v,p_{ij})$ can be computed in constant
time after $O(T(\mu))$-time preprocessing that computes $d(v,v')$ for
every vertex $v'\in T(\mu)$ by traversing $T(\mu)$). Further, if
$T(\mu_2)$ has a connector $y$ other than $v$, then
$y$ must be a connector of $T(\mu)$ (e.g., see
Fig.~\ref{fig:inforlist}; there exists
at most one such connector $y$); we scan the list $L'(y,\mu_2)$,
and for each $i\in L'(y,\mu_2)$, we update $F[i]=F[i]+F(i,y,\mu)$ and
$D[i]=D[i]+D(i,y,\mu)+w_i\cdot d(v,y)\cdot F(i,y,\mu)$ ($d(v,y)$ is already
computed in the preprocessing discussed above).
Finally, we scan $L(v,\mu_1)$ (resp., $L'(v,\mu_1)$) and for each
index $i$ in $L(v,\mu_1)$ (resp., $L'(v,\mu_1)$), we set
$F(i,v,\mu_1)=F[i]$ and $D(i,v,\mu_1)=D[i]$. This computes the two
information lists $L(v,\mu_1)$ and $L'(v,\mu_1)$. The total time is $O(|T(\mu)|)$.

In addition, if $T(\mu_1)$ has a connector $y$ other than $v$, then
$y$ must be a connector of $T(\mu)$ (e.g., see Fig.~\ref{fig:median}; there is only one such
connector), and we
further compute the two information lists $L(y,\mu_1)$ and
$L'(y,\mu_1)$. To do so, we first copy all indices of $L(\mu_1)$ to $L(y,\mu_1)$
and copy all indices of $L'(\mu_1)$ to $L'(y,\mu_1)$. Observe that
$L(y,\mu_1)$ and $L'(y,\mu_1)$ form a partition of the
indices of $L'(y,\mu)$. For each index $i$ in $L(y,\mu_1)$ (resp.,
$L'(y,\mu_1)$), we have $F(i,y,\mu_1)=F(i,y,\mu)$ and
$D(i,y,\mu_1)=D(i,y,\mu)$. Therefore, the $F$ and $D$ values for
$L'(y,\mu_1)$ and $L(y,\mu_1)$ can be obtained from $L'(y,\mu)$ by scanning the three
lists $L'(y,\mu_1)$, $L(y,\mu_1)$, and $L'(y,\mu)$ simultaneously, as
they are all sorted lists.

The above has computed the information lists for $\mu_1$ and the total time is $O(|T(\mu)|)$. Using the similar approach, we can compute the information lists for
$\mu_2$, and we omit the details. This finishes the algorithm for
processing $\mu$ where $\mu$ has two children.

If $\mu$ has three children, then $T(\mu)$ is decomposed into three
subtrees in our decomposition. As discussed in
Section~\ref{sec:median} on the algorithm for
Lemma~\ref{lem:median}, we can consider the decomposition of $T(\mu)$
consisting of two intermediate decomposition steps each of which decompose a subtree into two subtrees.
For each intermediate step, we
apply the above processing algorithm for the two-children case. In this way, we can compute the information lists for all three children of $\mu$ in $O(|T(\mu)|)$ time.
If $\mu$ has four children, then similarly there are four intermediate
decomposition steps and we apply the two-children case
algorithm three times. The total processing time for $\mu$ is
still $O(|T(\mu)|)$.

Once all internal nodes of $\Upsilon$ are processed, the
information lists of all nodes are computed. Since processing
each node $\mu$ of $\Upsilon$ takes $O(|T(\mu)|)$ time, the total
time of the algorithm is $O(M\log M)$. This also implies that the
total size of the information lists of all nodes of $\Upsilon$ is $O(M\log M)$,
i.e., $\sum_{\mu\in \Upsilon}t_{\mu}=O(M\log M)$.

This above describes the first step of our preprocessing algorithm for $\calA_1$.
For the third step, the node lists $L_{\mu}(i)$ can be built during the course of the
above algorithm. Specifically, whenever an index $i$ is added to $L(\mu)$ for some
node $\mu$ of $\Upsilon$, we add $\mu$ to the list $L_{\mu}(i)$. This
only introduces constant extra time each. Therefore, the overall algorithm has
the same runtime asymptotically as before.

For the fourth step, for each leaf $\mu$ of $\Upsilon$ such that $T(\mu)$ is a vertex $v$
of $T$, we do the following. Let $p_{ij}$ be the uncertain point location at $v$.
Based on our above algorithm, we have $L'(\mu)=\{i\}$. Since
$v$ is a connector, we have a list $L'(v,\mu)$ consisting of $i$
itself and two values $F(i,v,\mu)$ and
$D(i,v,\mu)$. Notice that $\Ed(v,P_i)=D(i,v,\mu)$. Hence, once the
above algorithm finishes, the value $\Ed(v,P_i)$ is available.

As a summary, the preprocessing algorithm for $\calA_1$ except the second step
runs in  $O(M\log M)$ time.  The lemma thus follows.
\qed
\end{proof}

For the second step of the preprocessing of $\calA_1$,
since $\sum_{\mu\in \Upsilon}t_{\mu}=O(M\log M)$ by
Lemma~\ref{lem:120}, applying the
preprocessing algorithm of Lemma~\ref{lem:node} on all nodes of
$\Upsilon$ takes $O(M\log^2 M)$ time and $O(M\log M)$ space in total.
Hence, the total preprocessing time of $\calA_1$ is
$O(M\log^2 M)$ and the space is $O(M\log M)$.
This proves Lemma~\ref{lem:report}.

\subsection{The Data Structure $\calA_3$}
\label{sec:ed}

In this section, we present the data structure $\calA_3$. Given any point
$x$ and any uncertain point $P_i$, $\calA_3$ is used to compute the
expected distance $\Ed(x,P_i)$.
Note that we do not need to consider the remove operations for $\calA_3$.

We follow the notation defined in Section~\ref{sec:ds1}. As
preprocessing, for each node $\mu\in \Upsilon$,
we compute the information lists $L(\mu)$ and $L(y,\mu)$ for each connector $y$ of
$T(\mu)$. This is actually the first step of the preprocessing
algorithm of $\calA_1$ in Section~\ref{sec:ds1}.
Further, we also preform the fourth step of the preprocessing
algorithm for $\calA_1$.
The above can be done in $O(M\log M)$ time by Lemma~\ref{lem:120}.

Consider any node $\mu\in \Upsilon$ with $L(\mu)\neq\emptyset$. Given
any point $x\in T(\mu)$, we have shown in the proof of Lemma~\ref{lem:node} that
$\Ed(x,P_i)$ is a function of two variables $a_x$ and $b_x$. As preprocessing, we compute
these functions for all $i\in L(\mu)$, which takes $O(t_{\mu})$
time as shown in the proof of Lemma~\ref{lem:node}.
For each $i\in L(\mu)$, we store the function $\Ed(x,P_i)$ at
$\mu$. 
The total preprocessing time for $\calA_3$ is $O(M\log M)$.

Consider any query on a point $x\in T$ and $P_i\in \calP$.
Note that $x$ is specified by an edge $e$ and its distance to
a vertex of $e$. Let $\mu_x$ be the leaf of $\Upsilon$ with $x\in
T(\mu_x)$. If $x$ is in the interior of $e$, then $T(\mu_x)$ is the
open edge $e$; otherwise, $T(\mu_x)$ is a single vertex $v=x$.

We first consider the case where $x$ is in the interior of $e$.
In this case, $P_i$ does not have any location in $T(\mu_x)$ since
$T(\mu_x)$ is an open edge.
Hence, if we go along the path of $\Upsilon$ from the root to $\mu_x$, we will
encounter a first node $\mu'$ with $i\in L(\mu')$. After
finding $\mu'$, we compute $a_x$ and $b_x$ in $T(\mu')$, which can
be done in constant time after $O(|T(\mu')|)$ time preprocessing on
$T(\mu')$, as discussed in the proof of Lemma~\ref{lem:node} (so the
total preprocessing time for all nodes of $\Upsilon$ is $O(M\log M)$).
After $a_x$ and $b_x$ are computed,
we can obtain the value $\Ed(x,P_i)$.

\paragraph{Remark.}
One can verify (from the proof of
Lemma~\ref{lem:node}) that as $x$ changes on $e$, $\Ed(x,P_i)$
is a linear function of $x$ because one of $a_x$ and $b_x$ is constant and the other linearly changes as $x$ changes in $e$. Hence, the above also computes the linear function $\Ed(x,P_i)$ for $x\in e$.
\vspace{0.07in}

To find the above node $\mu'$, for each node $\mu$ in the path of
$\Upsilon$ from
the root to $\mu_x$, we need to determine whether $i\in L(\mu)$. If we
represented the sorted index list $L(\mu)$ by a binary search tree, then
we could spend $O(\log n)$ time on each node $\mu$ and thus the total query
time would be $O(\log n\log M)$.
To remove the $O(\log n)$ factor, we further enhance our
preprocessing work by building a fractional cascading structure
\cite{ref:ChazelleFr86} on the
sorted index lists $L(\mu)$ for all nodes $\mu$ of $\Upsilon$. The
total preprocessing time for building the
structure is linear in the total number of nodes of all lists, which
is $O(M\log M)$ by Lemma~\ref{lem:120}.
For each node $\mu$, the fractional cascading structure will
create a new list $L^*(\mu)$ such that $L(\mu)\subseteq L^*(\mu)$.
Further, for each index $i\in L^*(\mu)$, if it is also in $L(\mu)$,
then we set a flag as an indicator. Setting the flags for all nodes of
$\Upsilon$ can be done in $O(M\log M)$ time as well.
Using the fractional cascading structure, we only
need to do binary search on the list in the root and then spend
constant time on each subsequent node \cite{ref:ChazelleFr86}, and thus the total query time
is $O(\log M)$.

If $x$ is a vertex $v$ of $T$, then depending on whether the
location at $v$ is $P_i$'s or not, there are two subcases. If it is
not, then we apply the same query algorithm as above. Otherwise,
let $p_{ij}$ be the location at $v$. Recall that in our preprocessing,
the value $\Ed(v,P_i)$ has already been computed and stored at
$\mu_x$ as $T(\mu_x)=v$. Due to $v=x$, we obtain $\Ed(x,P_i)=\Ed(v,P_i)$.

Hence, in either case, the query algorithm runs in $O(\log M)$ time.
This proves Lemma~\ref{lem:ed}.

\subsection{The Data Structure $\calA_2$}

The data structure $\calA_2$ is for answering candidate-center-queries: Given any
vertex $v\in T_m$, the query asks for the candidate center $c$ for the
active medians in $T_m(v)$, which is the subtree of $T_m$ rooted at
$v$. Once an uncertain point is deactivated, $\calA_2$ can also
support the operation of removing it.

Consider any vertex $v\in T_m$. Recall that due to our reindexing, the indices of all
medians in $T_m(v)$ exactly form the range $R(v)$. Recall that the candidate center
$c$ is the point on the path $\pi(v,r)$ closest to $r$ with $\Ed(v,P_i)\leq
\lambda$ for each active uncertain point $P_i$ with $i\in R(v)$.
Also recall that our algorithm invariant guarantees that whenever a
candidate-center-query is called at a vertex $v$, then it holds that
$\Ed(v,P_i)\leq\lambda$ for each active uncertain point $P_i$ with
$i\in R(v)$. However, we actually give a result that can answer a more
general query. Specifically, given a range $[k,j]$ with $1\leq k\leq
j\leq n$, let $v_{kj}$ be the lowest common ancestor of all medians $p_i^*$ with $i\in [k,j]$ in $T_m$; if $\Ed(v_{kj},P_i)>\lambda$ for some active $P_i$ with
$i\in [k,j]$, then our query algorithm will return $\emptyset$; otherwise,
our algorithm will compute a point $c$ on $\pi(v_{kj},r)$ closest to $r$ with $\Ed(c,P_i)\leq \lambda$ for each active $P_i$ with $i\in [k,j]$. We refer to it as
the {\em generalized} candidate-center-query.

In the preprocessing, we build a complete binary search tree $\calT$ whose
leaves from left to right correspond to indices $1,2,\ldots,n$.
For each node $u$ of $\calT$, let $R(u)$ denote the set of indices
corresponding to the leaves in the subtree of $\calT$ rooted at $u$.
For each median $p_i^*$, define $q_i$ to be the point $x$ on the
path $\pi(p_i^*,r)$ of $T_m$ closest to $r$ with $\Ed(x,P_i)\leq
\lambda$.

For each node $u$ of $\calT$, we define a node $q(u)$ as
follows. If $u$ is a leaf, define
$q(u)$ to be $q_i$, where $i$ is the index corresponding to leaf $u$.
If $u$ is an internal node, let $v_u$ denote the vertex of $T_m$ that
is the lowest common ancestor of the medians $p_i^*$ for all $i\in
R(u)$. If $\Ed(v_u,P_i)\leq \lambda$ for all $i\in R(u)$ (or
equivalently, $q_i$ is in $\pi(v_u,r)$ for all $i\in R(u)$), then
define $q(u)$ to be the point $x$ on the
path $\pi(v_u,r)$ of $T_m$ closest to $r$ with $\Ed(x,P_i)\leq
\lambda$ for all $i\in R(u)$; otherwise, $q(u)=\emptyset$.

\begin{lemma}\label{lem:110}
The points $q(u)$ for all nodes $u\in \calT$ can be computed in
$O(M\log M + n\log^2 M)$ time.
\end{lemma}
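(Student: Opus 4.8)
The plan is to compute the $q(u)$ values in a bottom-up fashion over $\calT$, using the observation that each $q(u)$ is determined by the functions $\Ed(x,P_i)$ restricted to the path $\pi(v_u,r)$ of $T_m$. First I would precompute, for every index $i$, the point $q_i$ on $\pi(p_i^*,r)$ closest to $r$ with $\Ed(q_i,P_i)\le\lambda$; this is exactly the leaf value $q(u)$ and can be found by walking up $T_m$ from $p_i^*$ and using the data structure $\calA_3$ of Lemma~\ref{lem:ed} to evaluate $\Ed(\cdot,P_i)$ — on each edge of $\pi(p_i^*,r)$ the expected distance is a linear (monotone increasing, as we move toward $r$) function of the position, so by binary searching along the path one locates the edge where the value $\lambda$ is crossed and solves a linear equation for $q_i$ on that edge. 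Since the monotonicity lets us binary-search the $O(\log M)$-length (actually $O(|T_m|)$-length) path rather than scan it, a single $q_i$ costs $O(\log n\cdot\log M)$ using the LCA/ancestor machinery plus $\calA_3$ queries; summed over the $n$ leaves this is $O(n\log n\log M)$, within budget, and this also yields all leaf $q(u)$'s.

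For an internal node $u$ with children $u_\ell,u_r$, note $R(u)=R(u_\ell)\cup R(u_r)$ and $v_u$ is the LCA in $T_m$ of $v_{u_\ell}$ and $v_{u_r}$ (computable in $O(1)$ after $O(|T_m|)$ LCA preprocessing). If either child has $q(\cdot)=\emptyset$, or if some $q_i$ with $i\in R(u)$ fails to lie on $\pi(v_u,r)$ — equivalently $v_u$ is a strict descendant in $T_m$ of some $q_i$, which we detect by checking whether $q(u_\ell)$ and $q(u_r)$ both lie on $\pi(v_u,r)$ — then $q(u)=\emptyset$. Otherwise $q(u)$ is the point on $\pi(v_u,r)$ closest to $r$ that is ``$\le r$-ward'' of the constraint $\Ed(x,P_i)\le\lambda$ for every $i\in R(u)$; since each such constraint individually gives the point $q_i$ (or rather, its projection onto $\pi(v_u,r)$), the combined bound is the one among them that is farthest from $r$ along $\pi(v_u,r)$. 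Concretely I would maintain with each non-empty $q(u)$ its location on $T_m$ and take $q(u)$ to be whichever of $q(u_\ell), q(u_r)$ is deeper (farther from $r$) along the path $\pi(v_u,r)$ — this is an $O(1)$ ``deeper ancestor'' comparison per internal node once we have preorder/depth labels and LCA on $T_m$. Hence all internal nodes together cost $O(n)$ after the leaves are done, and the total is dominated by $O(M\log M)$ for building $\calA_3$ (Lemma~\ref{lem:ed}) plus $O(n\log^2 M)$ for the $n$ path searches, giving the claimed $O(M\log M+n\log^2 M)$.

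The step I expect to be the real obstacle, and the one deserving the most care in the write-up, is the claim that the constraint ``$\Ed(x,P_i)\le\lambda$ for all $i\in R(u)$, for $x$ ranging over $\pi(v_u,r)$'' is governed entirely by the single worst $q_i$, so that merging two children reduces to one ancestor comparison rather than to maintaining an envelope of linear functions along the path. This needs the fact (from Lemma~\ref{lem:Tm} and the convexity/monotonicity of $\Ed(\cdot,P_i)$) that as $x$ moves from $v_u$ toward $r$, $\Ed(x,P_i)$ is non-decreasing for each relevant $i$ — because $v_u$ is an ancestor of $p_i^*$ in $T_m$, so the whole path $\pi(v_u,r)$ lies on the ``increasing side'' of the median — hence the feasible region $\{x\in\pi(v_u,r): \Ed(x,P_i)\le\lambda\}$ is a prefix of the path (the part nearest $v_u$), and the intersection of these prefixes is itself a prefix determined by the shortest one, namely by $\max_i$ of the distance-from-$v_u$ at which $P_i$'s constraint binds. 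I would verify that this ``shortest prefix'' is exactly $\min$ over $i$ of the projections of the $q_i$'s onto $\pi(v_u,r)$, taking the point closest to $v_u$ (equivalently, the deepest among the $q_i$ when all are ancestors of the $p_i^*$), and that this quantity composes correctly under the binary-tree merge; everything else is routine bookkeeping with LCA queries and linear-equation solving on single edges.
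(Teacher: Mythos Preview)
Your proposal is correct and follows essentially the same approach as the paper: compute each leaf value $q_i$ by binary searching along $\pi(p_i^*,r)$ using $\calA_3$ for evaluations, then merge bottom-up by taking, at each internal node $u$, whichever of $q(u_\ell),q(u_r)$ is farther from $r$ (after checking both lie on $\pi(v_u,r)$), using LCA queries for the $O(1)$ comparisons. The only implementation-level difference is that the paper realizes the binary search on $\pi(p_i^*,r)$ via a DFS that maintains the current root-to-vertex path in an array-backed stack (so the $k$-th vertex on the path is accessed directly), whereas you invoke level-ancestor queries; both give $O(\log^2 M)$ per leaf, and note that your intermediate claim of $O(\log n\cdot\log M)$ per $q_i$ should read $O(\log^2 M)$ since the path can have $\Theta(M)$ vertices, though your final total $O(n\log^2 M)$ is stated correctly.
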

\begin{proof}
Assume the data structure $\calA_3$ for Lemma~\ref{lem:ed} has been
computed in $O(M\log M)$ time. In the following, by using $\calA_3$
we compute $q(u)$ for all nodes $u\in \calT$ in $O(M+n\log^2 M)$ time.

We first compute $q_i$ for all medians $p_i^*$.
Consider the depth-first-search on $T_m$ starting from the
root $r$. During the traversal, we use a
stack $S$ to maintain all vertices in order along the path $\pi(r,v)$
whenever a vertex $v$ is visited. Such a stack can be easily
maintained by standard techniques (i.e., push new vertices into $S$
when we go ``deeper'' and pop vertices out of $S$ when backtrack),
without affecting the linear-time performance of the traversal
asymptotically. Suppose the traversal visits a median $p_i^*$. Then,
the vertices of $S$ essentially form the path $\pi(r,p_i^*)$.
To compute $q_i$, we do binary search on the vertices of
$S$, as follows.

We implement $S$ by using an array of size $M$. Since
the order of the vertices of $S$ is the same as their order along
$\pi(r,p_i^*)$, the expected distances $\Ed(v,P_i)$ of the vertices
$v\in S$ along their order in $S$ are monotonically changing.
Consider a middle vertex $v$ of $S$. The vertex $v$ partitions $S$ into
two subarrays such that one subarray contains all vertices of
$\pi(r,v)$ and the other contains vertices of $\pi(v,p_i^*)$.  We
compute $\Ed(v,P_i)$ by using data structure $\calA_3$. Depending on
whether $\Ed(v,P_i)\leq \lambda$, we can proceed on only one
subarray of $M$. The binary search will eventually locate an edge
$e=(v,v')$ such that $\Ed(v,P_i)\leq \lambda$ and $\Ed(v',P_i)>
\lambda$. Then, we know that $q_i$ is located on $e\setminus \{v'\}$.
We further pick any point $x$ in the interior of $e$ and the data
structure $\calA_3$ can also compute the function $\Ed(x,P_i)$ for
$x\in e$ as remarked in Section~\ref{sec:ed}. With the function $\Ed(x,P_i)$ for $x\in e$, we can
compute $q_i$ in constant time. Since the binary search calls $\calA_3$
$O(\log M)$ times, the total time of the binary search is $O(\log^2 M)$.

In this way, we can compute $q_i$ for all medians $p_i^*$ with $i\in
[1,n]$ in $O(M+n\log^2 M)$ time, where the $O(n\log^2 M)$ time is for the binary
search procedures in the entire algorithm and the $O(M)$ time is for
traversing the tree $T_m$. Note that this also computes
$q(u)$ for all leaves $u$ of $\calT$.

We proceed to compute the points $q(u)$ for all internal nodes $\mu$ of
$\calT$ in a bottom-up manner.  Consider an
internal node $u$ such that $q(u_1)$ and $q(u_2)$ have been computed,
where $u_1$ and $u_2$ are the children of $u$, respectively. We
compute $q(u)$ as follows.

If either one of $q(u_1)$ and $q(u_2)$ is $\emptyset$, then we set
$q(u)=\emptyset$.  Otherwise, we do the following. Let
$i$ (resp., $j$) be the leftmost (resp., rightmost) leaf in the
subtree $\calT(u)$ of $\calT$ rooted at $u$. We first find the lowest comment ancestor of
$p_i^*$ and $p_j^*$ in the tree $T_m$, denoted by $v_{ij}$.
Due to our particular way of defining indices of all medians, $v_{ij}$ is the
lowest common ancestor of the medians $p_k^*$ for all $k\in [i,j]$. We determine
whether $q(u_1)$ and $q(u_2)$ are both on $\pi(r,v_{ij})$. If either
one is not on  $\pi(r,v_{ij})$, then we set $q(u)=\emptyset$;
otherwise, we set $q(u)$ to the one of $q(u_1)$ and $q(u_2)$
closer to $v_{ij}$.

The above for computing $q(u)$ can be implemented in $O(1)$ time,
after $O(M)$ time preprocessing on $T_m$. Specifically, with $O(M)$ time preprocessing on $T_m$, given any two vertices of $T_{m}$, we can compute their lowest
common ancestor in $O(1)$ time~\cite{ref:BenderTh00,ref:HarelFa84}.
Hence, we can compute $v_{ij}$ in constant time. To determine
whether $q(u_1)$ is on $\pi(r,v_{ij})$, we use the following approach. As a point on $T_m$, $q(u_1)$ is specified by an edge $e_1$ and its distance to one incident vertex of $e_1$. Let $v_1$ be the incident vertex of $e_1$ that is farther from the root $r$.
Observe that $q(u_1)$ is on
$\pi(r,v_{ij})$ if and only if the lowest common ancestor of $v_1$
and $v_{ij}$ is $v_1$. Hence, we can determine whether
$q(u_1)$ is on $\pi(r,v_{ij})$ in constant time by a lowest common ancestor query. Similarly, we can
determine whether $q(u_2)$ is on $\pi(r,v_{ij})$ in constant time.
Assume both $q(u_1)$ and $q(u_2)$ are on $\pi(r,v_{ij})$. To determine
which one of $q(u_1)$ and $q(u_2)$ is closer to $v_{ij}$, if they are on the same edge $e$ of $T_m$, then this can be done in constant time since both points are specified by their distances to an incident vertex of $e$. Otherwise, let $e_1$ be the edge of $T_m$ containing $q(u_1)$ and let $v_1$ be the incident vertex of $e_1$ farther to $r$; similarly, let $e_2$ be the edge of $T_m$ containing $q(u_2)$ and let $v_2$ be the incident vertex of $e_2$ farther to $r$. Observe that
$q(u_1)$ is closer to $v_{ij}$ if and only if the lowest common
ancestor of $v_1$ and $v_2$ is $v_2$, which can be
determined in constant time by a lowest common ancestor query.

The above shows that we can compute $q(u)$ in constant time based on
$q(u_1)$ and $q(u_2)$. Thus, we can compute $q(u)$ for all internal
nodes $u$ of $\calT$ in $O(n)$ time. The lemma thus follows.
\qed
\end{proof}

In addition to constructing the tree $\calT$ as above, our
preprocessing for $\calA_2$ also includes building
a lowest common ancestor query data structure on $T_m$ in
$O(M)$ time, such that given any two vertices of $T_{m}$, we can compute their lowest
common ancestor in $O(1)$ time~\cite{ref:BenderTh00,ref:HarelFa84}.
This finishes the preprocessing for $\calA_2$. The total time is
$O(M\log M+n\log^2 M)$.

The following
lemma gives our algorithm for performing operations on $\calT$.

\begin{lemma}
Given any range $[k,j]$, we can answer each generalized
candidate-center-query in $O(\log n)$ time, and each remove operation (i.e.,
deactivating an uncertain point) can be performed in $O(\log n)$ time.
\end{lemma}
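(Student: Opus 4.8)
The plan is to treat $\calT$ as a segment-tree-style structure over the index range $[1,n]$ and to answer a generalized candidate-center-query on $[k,j]$ by merging the precomputed points $q(u_1),\dots,q(u_t)$ of the $t=O(\log n)$ canonical nodes of $\calT$ whose index sets $R(u_1),\dots,R(u_t)$ partition $[k,j]$. To make each merge run in $O(1)$ time, I would first augment the preprocessing of Lemma~\ref{lem:110} so that every node $u\in\calT$ also stores the vertex $v_u$ of $T_m$ that is the lowest common ancestor of the medians $p_i^*$ with $i\in R(u)$; since $R(u)$ is a contiguous range $[a,b]$ and the medians inside any subtree of $T_m$ form a contiguous range of indices, $v_u$ equals the lowest common ancestor of $p_a^*$ and $p_b^*$ and is obtained in $O(1)$ time from the $T_m$-lowest-common-ancestor structure already built for $\calA_2$. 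Storing all $v_u$ adds only $O(n)$ to the preprocessing, so the $O(M\log M+n\log^2 M)$ bound of $\calA_2$ is unaffected.

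For a query on $[k,j]$ I would compute $v_{kj}=\mathrm{lca}_{T_m}(p_k^*,p_j^*)$ in $O(1)$ time; by the same contiguous-range property this is the lowest common ancestor of all medians with index in $[k,j]$. I would then obtain the canonical nodes $u_1,\dots,u_t$ and scan them, maintaining a running ``best'' point: if some $q(u_\ell)=\emptyset$, or if some $q(u_\ell)$ does not lie on the path $\pi(v_{kj},r)$ of $T_m$, the query returns $\emptyset$; otherwise the running best is updated to whichever of it and $q(u_\ell)$ is closer to $v_{kj}$ (equivalently, farther from $r$), and this final point is returned. Both the test ``$q(u_\ell)\in\pi(v_{kj},r)$'' and the comparison ``which of two points on $\pi(v_{kj},r)$ is closer to $v_{kj}$'' reduce to a single lowest-common-ancestor query on $T_m$, exactly as done in the proof of Lemma~\ref{lem:110}; hence each query takes $O(\log n)$ time.

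To deactivate an uncertain point $P_i$ I would overwrite the value stored at the leaf of $\calT$ for index $i$ by the root $r$ (encoding that $P_i$ now imposes no constraint) and recompute $q(\cdot)$ bottom-up along the $O(\log n)$ ancestors of that leaf. At an internal node $u$ with children $u_1,u_2$ I would use exactly the merge rule of Lemma~\ref{lem:110}: if $q(u_1)$ or $q(u_2)$ is $\emptyset$, or if (both being non-$\emptyset$) one of them is not on $\pi(v_u,r)$, then $q(u)=\emptyset$; otherwise $q(u)$ is the one of $q(u_1),q(u_2)$ closer to $v_u$. Each such merge is $O(1)$ using the stored $v_u$ and the $T_m$-lowest-common-ancestor structure, so a remove costs $O(\log n)$ time and keeps every $q(u)$ equal to the value it would have for the current set of active uncertain points.

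The step I expect to be the main obstacle is justifying correctness of the merging scheme, which I would derive from the earlier monotonicity fact that $\Ed(x,P_i)$ does not decrease as $x$ moves away from $p_i^*$ along any simple path. Since $\pi(v_{kj},r)$ is an ancestor path of every $p_i^*$ with $i\in[k,j]$, this gives $\Ed(c,P_i)\le\lambda$ for $c\in\pi(v_{kj},r)$ if and only if $c$ is a descendant of $q_i$ on $\pi(p_i^*,r)$; hence the candidate center for an active index set $S$ with $T_m$-lowest-common-ancestor $w$ is the $q_i$ closest to $w$ over active $i\in S$, and is $\emptyset$ exactly when some such $q_i$ lies strictly below $w$. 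One then checks that $q(u)$, defined relative to $v_u$, is precisely this point for $S=R(u)$ once inactive leaves carry $r$, and that reading the $q(u_\ell)$ against the possibly higher vertex $v_{kj}$ is still sound: if some active $q_i$ with $i\in R(u_\ell)$ lies strictly between $v_{u_\ell}$ and $v_{kj}$, then $q(u_\ell)$ is no higher than that $q_i$, so it fails the test ``$q(u_\ell)\in\pi(v_{kj},r)$'' and $\emptyset$ is correctly reported; while if every $q(u_\ell)$ passes the test, the lowest of them is the binding constraint over all active $i\in[k,j]$, i.e.\ the desired candidate center. I would also note the degenerate cases, namely all relevant indices inactive (the answer is then $r$, which the scheme returns) and $v_{kj}$ itself violating $\lambda$ (detected because the affected canonical node's value is $\emptyset$ or strictly below $v_{kj}$).
\qed
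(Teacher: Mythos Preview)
Your proof is correct and takes essentially the same approach as the paper: decompose $[k,j]$ into $O(\log n)$ canonical nodes of $\calT$, test each stored $q(u_\ell)$ for membership on $\pi(v_{kj},r)$ via an LCA query on $T_m$, return the one closest to $v_{kj}$, and handle a removal by recomputing $q(\cdot)$ bottom-up along the leaf-to-root path of $\calT$. The explicit storage of $v_u$ is a harmless addition; the paper simply recomputes it from the extreme indices of $R(u)$ via an LCA query.

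One small but genuine difference worth noting: the paper sets a deactivated leaf to $\emptyset$ and then says it reapplies the merge rule of Lemma~\ref{lem:110} verbatim, which, read literally, would wrongly propagate $\emptyset$ upward through every ancestor of an inactive leaf. Your choice of $r$ as the sentinel for a deactivated leaf sidesteps this cleanly, since $r$ always lies on $\pi(v_u,r)$ and is never closer to $v_u$ than any active $q_i$; thus the merge rule of Lemma~\ref{lem:110} applies without modification and $q(u)$ remains exactly the candidate center for the currently active indices in $R(u)$. Your correctness discussion (monotonicity of $\Ed(\cdot,P_i)$ along ancestor paths, and the observation that if $q(u_\ell)$ passes the $\pi(v_{kj},r)$ test then so does every active $q_i$ with $i\in R(u_\ell)$, because they all lie on $\pi(v_{u_\ell},r)\supseteq\pi(v_{kj},r)$ and are no lower than $q(u_\ell)$) is sound and more explicit than the paper's.
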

\begin{proof}
We first describe how to perform the remove operations. Suppose an
uncertain point $P_i$ is deactivated. Let $u_i$ be the leaf of $\calT$
corresponding to the index $i$. We first set $q(u_i)=\emptyset$.
Then, we consider the path of $\calT$ from $u_i$ to the root in a bottom-up manner,
and for each node $u$, we update
$q(u)$ based on $q(u_1)$ and $q(u_2)$ in constant time in exactly the same way as in
Lemma~\ref{lem:110}, where $u_1$ and $u_2$ are the two children of
$u$, respectively. In this way, each remove operation can be
performed in $O(\log n)$ time.

Next we discuss the generalized candidate-center-query on a range $[k,j]$. By
standard techniques, we can locate a set $S$ of $O(\log n)$ nodes of $\calT$ such
that the descendant leaves of these nodes exactly correspond to
indices in the
range $[k,j]$. We find the lowest common ancestor $v_{kj}$ of $p_k^*$
and $p_j^*$ in $T_{m}$ in constant time. Then, for
each node $u\in S$, we check whether $q(u)$ is on $\pi(r,v_{kj})$,
which can be done in constant time by using the lowest common ancestor
query in the same way as in the proof of Lemma~\ref{lem:110}.
If $q(u)$ is not on $\pi(r,v_{kj})$ for some $u\in S$, then we simply
return $\emptyset$.
Otherwise, $q(u)$ is on $\pi(r,v_{kj})$ for every $u\in S$. We further find
the point $q(u)$ that is closest to $v_{kj}$ among all $u\in S$, and
return it as the answer to the candidate-center-query on
$[k,j]$. Such a $q(u)$ can be found by comparing the nodes of $S$ in
$O(\log n)$ time. Specifically,
for each pair $u$ and $u'$ in a comparison, we find among $q(u)$ and
$q(u')$ the one closer to $v_{kj}$, which can be done in constant time
by using the lowest common ancestor query in the same way as in the
proof of Lemma~\ref{lem:110}, and then we keep comparing the above closer
one to the rest of the nodes in $S$. In this way, the
candidate-center-query can be handled in $O(\log n)$ time.
\qed
\end{proof}

This proves Lemma~\ref{lem:candidate}.

\subsection{Handling the Degenerate Case and Reducing the General Case to the Vertex-Constrained Case}
\label{sec:reduction}

We have solved the vertex-constrained case problem, i.e., all locations
of $\calP$ are at vertices of $T$ and each vertex of $T$ contains at
least one location of $\calP$. Recall that we have made a general position assumption that every
vertex of $T$ has only one location of $\calP$. For the degenerate case, our algorithm still works in the same way as before with the following slight change. Consider a subtree $T(\mu)$ corresponding to a node $\mu$ of $\Upsilon$. In the degenerate case, since a vertex of $T(\mu)$ may hold multiple uncertain point locations of $\calP$, we define the size $|T(\mu)|$ to be the total number of all uncertain point locations in $T(\mu)$. In this way, the algorithm and the analysis follow similarly as before. In fact, the performance of the algorithm becomes even better in the degenerate case since the height of the decomposition tree $\Upsilon$ becomes smaller (specifically, it is bounded by $O(\log t)$, where $t$ is the number of vertices of $T$, and $t<M$ in the degenerate case).

The above has solved the vertex-constrained case problem (including the degenerate case). In the general case, a location of
$\calP$ may be in the interior of an edge of $T$ and a vertex of $T$ may not
hold any location of $\calP$. The following theorem solves the general case by
reducing it to the vertex-constrained case. The reduction is almost
the same as the one given in \cite{ref:WangCo16} for the one-center
problem and we include it here for the completeness of this paper.

\begin{lemma}\label{lem:reduction}
The center-coverage problem on $\calP$ and $T$ is solvable in
$O(\tau+M+|T|)$
time, where $\tau$ is the time for solving the same problem on $\calP$
and $T$ if this were a vertex-constrained case.
\end{lemma}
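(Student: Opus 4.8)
The plan is to reduce the general case to the vertex-constrained case by a sequence of local modifications that preserve the optimal solution while driving the instance toward one where every location of $\calP$ sits on a vertex and every vertex carries at least one location. First I would handle the locations lying in the interior of edges: for each edge $e$ of $T$ that contains one or more locations $p_{ij}$ in its interior, I subdivide $e$ at each such point by inserting a new vertex there and splitting $e$ into the corresponding sub-edges (keeping the original edge lengths as the sums of the new sub-edge lengths). This does not change $T$ as a metric space, so $d(\cdot,\cdot)$, all expected distances $\Ed(x,P_i)$, the covering range $\lambda$, the notion of covering, and hence the set of feasible/optimal center-sets are all unchanged; it only adds at most $M$ new vertices, so it costs $O(|T|+M)$ time and produces a tree with $O(|T|+M)$ vertices in which every location is at a vertex.

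Next I would deal with vertices that hold no location. After the subdivision step the tree may still have ``empty'' vertices — original vertices of $T$ and possibly the new subdivision vertices are fine, but a vertex with no location and degree $\le 2$ (in particular a degree-$1$ empty leaf, or a degree-$2$ empty vertex) can be suppressed: a degree-$2$ empty vertex is removed by merging its two incident edges into a single edge whose length is their sum, and an empty leaf together with its pendant edge is simply deleted. Again neither operation changes the metric on the remaining points that carry locations, nor the expected distances, nor whether a center can be placed so as to cover a given $P_i$ (any center that an optimal solution would have placed on a deleted pendant edge or at a suppressed vertex can be moved, along the tree, toward the nearest surviving vertex without increasing any $\Ed(x,P_i)$, by the same convexity/monotonicity argument used in Lemma~\ref{lem:Tm}). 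Empty vertices of degree $\ge 3$ cannot be suppressed, but there are at most $O(|T|)$ of them and they only contribute a constant factor; alternatively one observes their number is bounded by the number of leaves, which are all non-empty after the previous step, so the total vertex count stays $O(|T|+M)$. I repeat the suppression exhaustively, which takes $O(|T|+M)$ time overall.

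After these two preprocessing phases we are left with a tree $T'$ in which every location of $\calP$ is at a vertex and every vertex holds at least one location — i.e. a vertex-constrained instance on $\calP$ and $T'$ — with $|T'|=O(M)$ and $\sum_i m_i$ unchanged. We now invoke the assumed vertex-constrained algorithm, which runs in time $\tau$ on this instance (note $\tau$ is stated as the running time ``if this were a vertex-constrained case,'' so it already accounts for the size of the reduced instance). Finally I map the computed centers back: a center on $T'$ corresponds to a unique point on the original $T$ (undoing an edge subdivision is trivial, and undoing an edge-merge/leaf-deletion only requires re-expressing a point on the merged edge as a point on one of the original edges), in $O(M+|T|)$ time. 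Summing, the total time is $O(\tau+M+|T|)$, proving the lemma.

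The main obstacle is the correctness of the suppression of empty degree-$3$-or-more vertices and, more subtly, making sure the bookkeeping that translates a center from $T'$ back to $T$ is genuinely $O(1)$ per center (so that the overhead is $O(M+|T|)$ and not more). The former is handled by the monotonicity-of-$\Ed$ relocation argument exactly as in Lemma~\ref{lem:Tm}; for the latter I would store, with each edge of $T'$, a pointer to the chain of original edges it represents together with prefix lengths, so that a point given by an edge of $T'$ and an offset can be located on the correct original edge by a single lookup. One should also double-check that the general-position degenerate handling described just above (defining $|T(\mu)|$ as the number of locations rather than vertices) is compatible with the reduced instance, which it is since the reduction does not create or destroy locations.
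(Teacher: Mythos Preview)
Your reduction follows the paper's approach closely through the subdivision step and the suppression of empty leaves and empty degree-$2$ vertices. The gap is in what you do with the remaining \emph{empty vertices of degree $\ge 3$}. You correctly observe that these branching vertices cannot be suppressed (removing them would change the tree metric), and you correctly bound their number. But you then assert that ``every vertex holds at least one location --- i.e.\ a vertex-constrained instance,'' which is simply false while those empty branching vertices are still present. The vertex-constrained case, by the paper's definition, requires \emph{every} vertex of $T'$ to hold a location, and the vertex-constrained algorithm (and its running time $\tau$) relies on this --- in particular, the assumption $|T|=M$ used throughout Sections~\ref{sec:main}--\ref{sec:ds}. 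Your appeal to the Lemma~\ref{lem:Tm} monotonicity argument does not help here: that argument relocates \emph{centers}, it does not populate empty vertices.

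The paper's fix is a small extra step you are missing: at each remaining empty degree-$\ge 3$ vertex, place a \emph{dummy} location with probability $0$, assigning these dummies to existing uncertain points $P_1,P_2,\ldots$ in turn. Since the number of such vertices is at most $M$, each $P_i$ receives at most $m_i$ dummies, so in the new instance $\calP'$ each $P_i$ has at most $2m_i$ locations and the total is still $\Theta(M)$. Zero-probability locations do not change any expected distance, so the optimal solution is preserved, and now the instance on $(T',\calP')$ is genuinely vertex-constrained.

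A secondary point: your claimed $O(1)$-per-center back-translation ``by a single lookup'' is optimistic. An edge of $T'$ may represent a chain of many original edges of $T$, and recovering the correct original edge from an offset along the merged edge is not constant time with just a pointer and prefix lengths (it is a predecessor query). The paper avoids this by a single joint traversal of $T'$, $T_1$, and $T$, handling all centers together in $O(|T|+M)$ time; this is the clean way to get the stated bound.
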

\begin{proof}
We reduce the problem to an instance of the vertex-constrained case and
then apply our algorithm for the vertex-constrained case. More specifically, we
will modify the tree $T$ to obtain another tree $T'$ of size $\Theta(M)$. We will
also compute another set $\calP'$ of $n$ uncertain points on $T'$, which
correspond to the uncertain points of $\calP$ with the same weights, but
each uncertain point $P_i$ of $\calP'$ has at most $2m_i$ locations on $T'$.
Further, each location of $\calP'$ is at a vertex of $T'$ and each vertex of $T'$ holds
at least one location of $\calP'$, i.e., it is the vertex-constrained case. We will show that we can obtain $T'$ and $\calP'$ in $O(M+|T|)$ time.
Finally, we will show that given a set of centers on $T'$ for $\calP'$, we
can find a corresponding set of the same number of centers on $T$ for
$\calP$ in $O(M+|T|)$ time.  The details are given below.

We assume that for each edge $e$ of $T$, all locations of $\calP$ on $e$ have been
sorted (otherwise we sort them first, which would introduce an additional
$O(M\log M)$ time on the problem reduction).
We traverse $T$, and for each edge $e$, if $e$ contains
some locations of $\calP$ in its interior, we create a new vertex in $T$ for
each such location. In this way, we create at most $M$ new vertices for $T$.
The above can be done in $O(M+|T|)$ time. We
use $T_1$ to denote the new tree. Note that $|T_1|=O(M+|T|)$.
For each vertex $v$ of $T_1$, if $v$ does not
hold any location of $\calP$, we call $v$ an {\em empty} vertex.

Next, we modify $T_1$ in the following way. First, for each leaf $v$ of $T_1$,
if $v$ is empty, then we remove $v$ from $T_1$. We
keep doing this until each leaf of the remaining tree is not empty.
Let $T_2$ denote the tree after the above step (e.g., see Fig.~\ref{fig:general}(b)).
Second, for each
internal vertex $v$ of $T_2$, if the degree of $v$ is $2$ and $v$ is empty,
then we remove $v$ from $T_2$ and merge its two
incident edges as a single edge whose length is equal to the sum of the lengths
of the two incident edges of $v$. We keep doing this until each degree-2 vertex of
the remaining tree is not empty. Let $T'$ represent the remaining
tree (e.g., see Fig.~\ref{fig:general}(c)).
The above two steps can be implemented in $O(|T_1|)$ time, e.g., by a
post-order traversal of $T_1$. We omit the details.

\begin{figure}[t]
\begin{minipage}[t]{\linewidth}
\begin{center}
\includegraphics[totalheight=1.3in]{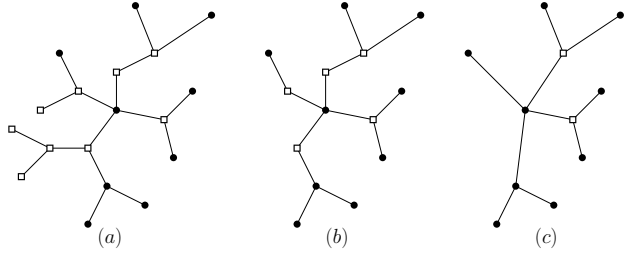}
\caption{\footnotesize Illustrating the three trees: (a) $T_1$, (b) $T_2$, and (c)
$T'$, where the empty and non-empty vertices are shown with squares and disks,
respectively.}
\label{fig:general}
\end{center}
\end{minipage}
\vspace*{-0.15in}
\end{figure}

Notice that every location of $\calP$ is at a
vertex of $T'$ and every vertex of $T'$ except those whose degrees are
at least three holds a location of $\calP$. Let $V$ denote the set of all
vertices of $T'$ and let $V_3$ denote the set of the vertices of $T'$ whose
degrees are at least three. Clearly, $|V_3|\leq |V\setminus V_3|$.
Since each vertex in $V\setminus V_3$ holds a location of $\calP$, we have
$|V\setminus V_3|\leq M$, and thus $|V_3|\leq M$.

To make every vertex of $T'$ contain a location of an uncertain point,
we first arbitrarily pick $m_1$ vertices from $V_3$ and remove them from $V_3$,
and set a ``dummy'' location
for $P_1$ at each of these vertices with zero probability. We keep picking
next $m_2$ vertices from $V_3$ for $P_2$ and continue this procedure until $V_3$
becomes empty. Since $|V_3|\leq M$, the above procedure will
eventually make $V_3$ empty before we ``use up'' all $n$ uncertain points of
$\calP$. We let $\calP'$ be the set of new uncertain points. For each $P_i\in \calP$,
it has at most $2m_i$ locations on $T'$.

Since now every vertex of $T'$ holds a location of $\calP'$ and every
location of $\calP'$ is at a vertex of $T'$, we obtain
an instance of the vertex-constrained case on $T'$ and $\calP'$.
Hence, we can use our algorithm for the vertex-constrained case to compute
a set $C'$ of centers on $T'$ in $O(\tau)$ time.
In the following, for each center $c'\in C'$, we find a corresponding
center $c$ on the original tree $T$ such that $P_i$ is
covered by $c$ on $T$ if and only if $P'_i$ is covered by $c'$ on $T'$.

Observe that every vertex $v$ of $T'$ also exists as a vertex in $T_1$, and every edge
$(u,v)$ of $T'$ corresponds to the simple path in $T_1$ between $u$ and $v$. Suppose
$c'$ is on an edge $(u,v)$ of $T'$ and let $\delta$ be the length of $e$ between $u$ and $c'$.
We locate a corresponding $c_1$ in $T_1$ in the simple path from $u$
to $v$ at distance $\delta$ from $u$.
On the other hand, by our construction from $T$ to $T_1$, if an
edge $e$ of $T$ does not appear in $T_1$, then $e$ is broken into several edges
in $T_1$ whose total length is equal to that of $e$. Hence, every
point of $T$ corresponds to a point on $T_1$. We find the point on $T$ that corresponds to $c_1$ of $T_1$, and let the point be $c$.

Let $C$ be the set of points $c$ on $T$ corresponding to
all $c'\in C'$ on $T$, as defined above. Let $C_1$ be the set of points $c_1$
on $T_1$ corresponding to all $c'\in C'$ on $T'$. To compute $C$, we
first compute $C_1$. 
This can be
done by traversing both $T'$ and $T_1$, i.e., for each edge $e$ of
$T'$ that contains centers $c'$ of $C'$, we find the corresponding
points $c_1$ in the path of $T_1$ corresponding to the edge $e$. Since
the paths of $T_1$ corresponding to the edges of $T'$ are pairwise
edge-disjoint, the runtime for computing $C_1$ is
$O(|T_1|+|T'|)$. Next we compute $C$, and similarly this can be done
by traversing both $T_1$ and $T$ in $O(|T_1|+|T|)$ time. Hence, the
total time for computing $C$ is $O(|T|+M)$ since both $|T_1|$ and $|T'|$ are bounded by $O(|T|+M)$.

As a summary, we can find an optimal solution for the
center-coverage problem on $T$ and $\calP$ in $O(\tau+M+|T|)$ time.
The lemma thus follows. \qed
\end{proof}

\section{The $k$-Center Problem}
\label{sec:kcenter}

The $k$-center problem is to find a set $C$ of $k$ centers on $T$ minimizing the value $\max_{1\leq i\leq n}d(C,P_i)$, where $d(C,P_i)=\min_{c\in C}d(c,P_i)$. Let $\lambda_{opt}=\max_{1\leq i\leq n}d(C,P_i)$ for an optimal solution $C$, and we call $\lambda_{opt}$ the {\em optimal covering range}.

As the center-coverage problem, we can also reduce the general $k$-center
problem to the vertex-constrained case. The reduction is similar to
the one in Lemma~\ref{lem:reduction} and we omit the details. In the
following, we only discuss the vertex-constrained case and we assume the problem on $T$ and $\calP$ is a vertex-constrained case.
Let $\tau$ denote the running time for solving the center-coverage
algorithm on $T$ and $\calP$.

To solve the $k$-center problem, the key is to compute
$\lambda_{opt}$, after which we can compute $k$ centers in additional
$O(\tau)$ time using
our algorithm for the center-coverage problem with $\lambda=\lambda_{opt}$.
To compute $\lambda_{opt}$, there are two main steps. In the first
step, we find a set $S$ of $O(n^2)$
{\em candidate values} such that $\lambda_{opt}$ must be in $S$. In
the second step, we compute $\lambda_{opt}$ in $S$.
Below we first compute the set $S$.

\begin{figure}[t]
\begin{minipage}[t]{\linewidth}
\begin{center}
\includegraphics[totalheight=1.0in]{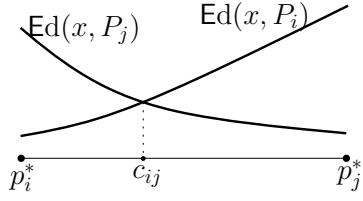}
\caption{\footnotesize Illustrating $c_{ij}$ and the two functions $\Ed(x,P_i)$ and $\Ed(x,P_j)$ as $x$ changes in the path $\pi(p_i^*,p_j^*)$ (shown as a segment).}
\label{fig:cij}
\end{center}
\end{minipage}
\vspace*{-0.15in}
\end{figure}

For any two medians $p_i^*$ and $p_j^*$ on $T_m$, observe that as $x$ moves on $\pi(p_i^*,p_j^*)$ from $p_i^*$ to $p_j^*$, $\Ed(x,P_i)$ is monotonically increasing and $\Ed(x,P_j)$ is monotonically decreasing (e.g., see Fig.~\ref{fig:cij}); we define $c_{ij}$ to be a point on
the path $\pi(p_i^*,p_j^*)$ with $\Ed(c_{ij},P_i)=\Ed(c_{ij},P_j)$, and we let $c_{ij}=\emptyset$ if such a point does not exist on $\pi(p_i^*,p_j^*)$. We have the following lemma.

\begin{lemma}\label{lem:160}
Either $\lambda_{opt}=\Ed(p_i^*,P_i)$ for some  uncertain point $P_i$ or
$\lambda_{opt}=\Ed(c_{ij},P_i)=\Ed(c_{ij},P_j)$ for two
uncertain points $P_i$ and $P_j$.
\end{lemma}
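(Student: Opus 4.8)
The plan is to dispose of one alternative by an a priori inequality and to obtain the other by a local perturbation of an optimal solution. First I would note that, since $T$ is a finite tree (hence a compact space) and the objective $C\mapsto \max_i\min_{c\in C}\Ed(c,P_i)$ is continuous, there is an optimal solution $C^*$ with $|C^*|=k$ attaining $\lambda_{opt}$ exactly. Assign each $P_i$ to a center $c(i)\in C^*$ realizing $\min_{c\in C^*}\Ed(c,P_i)$, and set $\calP_c=\{P_i:c(i)=c\}$ and $\rho(c)=\max_{P_i\in\calP_c}\Ed(c,P_i)$, so that $\lambda_{opt}=\max_{c\in C^*}\rho(c)$. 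Call a center $c$ \emph{tight} if $\rho(c)=\lambda_{opt}$, and call $P_i\in\calP_c$ a \emph{tight point of $c$} if $\Ed(c,P_i)=\lambda_{opt}$; every tight center has at least one tight point, and the bottleneck index $i_0$ (with $\min_{c}\Ed(c,P_{i_0})=\lambda_{opt}$) gives a tight point of its center $c(i_0)$.

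Next, since no point of $T$ covers $P_i$ with range below $\Ed(p_i^*,P_i)$, we have $\lambda_{opt}\ge\Ed(p_i^*,P_i)$ for every $i$. If $\lambda_{opt}=\Ed(p_i^*,P_i)$ for some $i$, the first alternative holds and we are done; so assume $\lambda_{opt}>\Ed(p_i^*,P_i)$ for all $i$. The geometric fact I would establish here, using that $\Ed(\cdot,P_i)$ is convex along every simple path of $T$ and minimized at $p_i^*$: for a tight center $c$ and a tight point $P_i$ of $c$ (so $c\neq p_i^*$, since $\Ed(c,P_i)=\lambda_{opt}>\Ed(p_i^*,P_i)$), the function $\Ed(\cdot,P_i)$ is strictly decreasing along the first edge of the path $\pi(c,p_i^*)$ leaving $c$ and strictly increasing along every other edge (or, if $c$ lies in the interior of an edge, along the other of the two directions) leaving $c$; the key point is that a convex function has no strict interior local maximum, and that $\Ed(\cdot,P_i)$ cannot be constant on an edge incident to $c$ without forcing $c$ to be a minimizer. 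Call this edge the \emph{descent direction} of $P_i$ at $c$.

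From this I would read off the dichotomy. Suppose some tight center $c$ has two tight points $P_i,P_j$ with distinct descent directions at $c$; then $p_i^*$ and $p_j^*$ are reached from $c$ through different incident edges, so $c$ lies on $\pi(p_i^*,p_j^*)$, and since $\Ed(c,P_i)=\Ed(c,P_j)=\lambda_{opt}$ while along $\pi(p_i^*,p_j^*)$ the function $\Ed(\cdot,P_i)$ is nondecreasing toward $p_j^*$ and $\Ed(\cdot,P_j)$ is nonincreasing toward $p_j^*$ (with strict monotonicity near $c$), the point $c$ is exactly $c_{ij}$ and $\lambda_{opt}=\Ed(c_{ij},P_i)=\Ed(c_{ij},P_j)$, the second alternative. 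Otherwise, for every tight center $c$ all of its tight points share one common descent direction; I would then move each tight center a sufficiently small amount along its common descent direction, keeping all non-tight centers fixed. For a small enough common displacement, every tight point of a tight center $c$ has expected distance strictly below $\lambda_{opt}$ from the moved $c$, every non-tight point of $\calP_c$ (which already had expected distance strictly below $\lambda_{opt}$) still does, and every point assigned to a non-tight center $c'$ has $\Ed(c',P_i)\le\rho(c')<\lambda_{opt}$ and is unaffected. Re-assigning each $P_i$ to its nearest center among the new $k$ centers therefore yields a $k$-center solution with covering range strictly smaller than $\lambda_{opt}$, contradicting optimality. Hence one of the two alternatives must hold.

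The hard part will be the second paragraph: making the ``descent direction'' statement fully rigorous via the sign analysis of the convex, piecewise-linear functions $\Ed(\cdot,P_i)$ at high-degree vertices and at edge interiors, and then verifying carefully in the last paragraph that a single small $\varepsilon$ works simultaneously for all tight centers and all the relevant strict inequalities, so that the perturbed configuration indeed covers every $P_i$. The initial case split separating $\lambda_{opt}=\Ed(p_i^*,P_i)$ is essential and not a technicality, since that is precisely the situation in which $\Ed(\cdot,P_{i_0})$ is already at its global minimum at the serving center and no perturbation can lower $\lambda_{opt}$.
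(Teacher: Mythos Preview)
Your proposal is correct and follows essentially the same approach as the paper: both argue by perturbing the tight centers of an optimal solution toward the medians of their tight uncertain points, deriving a contradiction with optimality unless one of the two alternatives holds. Your write-up is in fact more careful than the paper's own proof—you explicitly isolate the ``descent direction'' dichotomy (two tight points with different descent directions force $c=c_{ij}$; a common descent direction allows a strict local improvement), whereas the paper phrases the same idea as a continuous movement of each tight center toward some $p_j^*$ until it reaches a median or a crossing point $c_{ij}$, without spelling out why all tight distances remain non-increasing along the way.
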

\begin{proof}
%
%
Consider any optimal solution and let $C$ be the set of all centers.
For each $c\in C$, let $Q(c)$ be the set of uncertain points that are
covered by $c$ with respect to $\lambda_{opt}$, i.e., for each $P_i\in
Q(c)$, $\Ed(c,P_i)\leq \lambda_{opt}$.
Let $C'$ be the subset of all centers $c\in C$ such that $Q(c)$ has an
uncertain point $P_i$ with $\Ed(c,P_i)=\lambda_{opt}$ and there is no
other center $c'\in C$ with $\Ed(c',P_i)< \lambda_{opt}$. For each $c\in
C'$, let $Q'(c)$ be the set of all uncertain points $P_i$ such that
$\Ed(c,P_i)=\lambda_{opt}$.

If there exists a center $c\in C'$ with an uncertain point $P_i\in
Q'(c)$ such that $c$ is at $p_i^*$, then the lemma follows since
$\lambda_{opt}=\Ed(c,P_i)=\Ed(p_i^*,P_i)$. Otherwise, if
there exists a center $c\in C'$ with two uncertain points $P_i$ and
$P_j$ in $Q'(c)$ such that $c$ is at $c_{ij}$, then the lemma also
follows since  $\lambda_{opt}=\Ed(c_{ij},P_i)=\Ed(c_{ij},P_j)$.
Otherwise, if we move each $c\in C'$ towards the median $p_j^*$ for any
$P_j\in Q'(c)$, then $\Ed(c,P_i)$ for every $P_i\in Q'(c)$ becomes
non-increasing. During the above movements of all $c\in C'$, one of
the following two cases must happen (since otherwise we would obtain another
set $C''$ of $k$ centers with $\max_{1\leq i\leq
n}d(C'',P_i)<\lambda_{opt}$, contradicting with that $\lambda_{opt}$ is the optimal covering range): either a center
$c$ of $C'$ arrives at a median $p_i^*$ with
$\lambda_{opt}=\Ed(c,P_i)=\Ed(p_i^*,P_i)$ or a center $c$ of $C'$
arrives at $c_{ij}$ for two uncertain points $P_i$ and $P_j$ with
$\lambda_{opt}=\Ed(c_{ij},P_i)=\Ed(c_{ij},P_j)$. In either case, the lemma
follows.
\qed
\end{proof}

In light of Lemma~\ref{lem:160}, we let $S=S_1\cup S_2$ with
$S_1=\{\Ed(p_i^*,P_i)\ |\ 1\leq i\leq n\}$ and
$S_2=\{\Ed(c_{ij},P_i)\ |\ 1\leq i,j\leq n\}$ (if $c_{ij}=\emptyset$ for a pair $i$ and $j$, then let $\Ed(c_{ij},P_i)=0$). Hence,
$\lambda_{opt}$ must be in $S$ and $|S|=O(n^2)$.

We assume the data structure $\calA_3$ has been computed in $O(M\log M)$
time. Then, computing the values of $S_1$ can be
done in $O(n\log M)$ time by using $\calA_3$.
The following lemma computes $S_2$ in $O(M+ n^2\log n\log M)$
time.

\begin{lemma}
After $O(M)$ time preprocessing, we can compute $\Ed(c_{ij},P_i)$ in
$O(\log n\cdot \log M)$ time for any pair $i$ and $j$.
\end{lemma}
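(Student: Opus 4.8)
The plan is to locate $c_{ij}$ by a binary search along the path $\pi(p_i^*,p_j^*)$ of $T_m$, evaluating expected distances with the data structure $\calA_3$ of Lemma~\ref{lem:ed} and navigating the path with lowest-common-ancestor and level-ancestor structures.

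\textbf{Preprocessing.} Besides $\calA_3$ (which I assume is already built in $O(M\log M)$ time), I would preprocess $T_m$, rooted at $r$, in $O(M)$ time so that: (i) the lowest common ancestor of any two vertices can be found in $O(1)$ time~\cite{ref:BenderTh00,ref:HarelFa84}; (ii) for any vertex $v$ and integer $a$ the $a$-th ancestor of $v$ can be found in $O(1)$ time (a level-ancestor structure); and (iii) the depth of every vertex is stored in an array. With (i)--(iii), for any two vertices $u,v$ one can in $O(1)$ time compute $z=\mathrm{lca}(u,v)$, the number $\ell$ of edges of $\pi(u,v)$, and, for any $0\le t\le\ell$, the $t$-th vertex of $\pi(u,v)$ oriented from $u$: it is the $t$-th ancestor of $u$ if $t\le\mathrm{depth}(u)-\mathrm{depth}(z)$, and otherwise the $(\ell-t)$-th ancestor of $v$.

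\textbf{Query for a pair $(i,j)$.} Recall from the discussion preceding Lemma~\ref{lem:160} that, orienting $\pi(p_i^*,p_j^*)$ from $p_i^*$ to $p_j^*$, the function $\Ed(x,P_i)$ is nondecreasing and $\Ed(x,P_j)$ is nonincreasing, so $g(x):=\Ed(x,P_i)-\Ed(x,P_j)$ is nondecreasing along the path; this uses the convexity of $\Ed(x,P_i)$ and $\Ed(x,P_j)$ on simple paths together with the facts that $p_i^*$ minimizes $\Ed(\cdot,P_i)$ and $p_j^*$ minimizes $\Ed(\cdot,P_j)$. First I would compute $g(p_i^*)$ and $g(p_j^*)$ with two $\calA_3$ queries; if $g(p_i^*)>0$ or $g(p_j^*)<0$, then $g$ never vanishes on the path, so $c_{ij}=\emptyset$ and we return $\Ed(c_{ij},P_i):=0$. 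Otherwise $g$ changes sign on a unique edge $e$ of the path, which I would find by binary search: at each step take the middle vertex $v$ of the current subpath in $O(1)$ time via the navigation structures, evaluate $g(v)$ with two $\calA_3$ queries in $O(\log M)$ time, and recurse into the half of the subpath on which $g$ still changes sign. The number of iterations is $O(\log n)$ (by navigating the path through the at most $n$ leaves and branch vertices of $T_m$ lying on it, rather than through all of its edges), so this phase runs in $O(\log n\log M)$ time. Finally, picking a point $x$ in the interior of $e$, $\calA_3$ also returns the two linear functions $\Ed(x,P_i)$ and $\Ed(x,P_j)$ of $x\in e$ (see the remark in Section~\ref{sec:ed}); equating them yields $c_{ij}$ in $O(1)$ time, and $\Ed(c_{ij},P_i)$ follows by substitution. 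The total query time is $O(\log n\log M)$.

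\textbf{Main obstacle.} The delicate part is the path navigation: a binary search needs $O(1)$-time access to the ``middle'' vertex of an arbitrary path of $T_m$, which is a tree path rather than a linear array; this is exactly what the level-ancestor structure (combined with the lowest-common-ancestor structure and the depth array) provides, once $\pi(p_i^*,p_j^*)$ is split at its apex $z=\mathrm{lca}(p_i^*,p_j^*)$ into two vertical pieces. A secondary point requiring care is keeping the iteration count at $O(\log n)$ rather than $O(\log|T_m|)$, together with the correct handling of the case $c_{ij}=\emptyset$ and a rigorous justification of the monotonicity of $g$, on which the correctness of the binary search rests.
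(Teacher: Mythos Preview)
Your approach is essentially the paper's own: the same $O(M)$-time LCA / level-ancestor / depth preprocessing on $T_m$, the same endpoint test for $c_{ij}=\emptyset$, the same binary search along $\pi(p_i^*,p_j^*)$ with $\calA_3$ evaluations at each probed vertex, and the same use of the linear description of $\Ed(\cdot,P_i)$ on the final edge to pin down $c_{ij}$. The only cosmetic difference is that the paper first splits the path at $v_{ij}=\mathrm{lca}(p_i^*,p_j^*)$ and binary-searches each vertical half in turn, whereas you parameterise the whole path through the apex; like you, the paper simply asserts the $O(\log n)$ iteration count without detailing the contraction of $T_m$ to its $O(n)$ median/branch vertices.
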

\begin{proof}
As preprocessing, we do the following. First, we compute a lowest common ancestor query data structure on $T_m$ in $O(M)$ time such that given any two vertices of $T_m$, their lowest common ancestor can be found in $O(1)$ time~\cite{ref:BenderTh00,ref:HarelFa84}. Second, for each vertex $v$ of $T_m$, we compute the length $d(v,r)$, i.e., the number of edges in the path of $T_m$ from $v$ to the root $r$ of $T_m$. Note that $d(v,r)$ is also the depth of $v$. Computing $d(v,r)$ for all vertices $v$ of $T_m$ can be done in $O(M)$ time by a depth-first-traversal of $T_m$ starting from $r$. For each vertex $v\in T_m$ and any integer $d\in [0,d(v,r)]$, we use $\alpha(v,d)$ to denote the ancestor of $v$ whose depth is $d$. We build a {\em level ancestor query} data structure on $T_m$ in $O(M)$ time that can compute $\alpha(v,d)$ in constant time for  any vertex $v$ and any $d\in [0,d(v,r)]$~\cite{ref:BenderTh04}. The total time of the above processing is $O(M)$.

Consider any pair $i$ and $j$. We present an algorithm to
compute $c_{ij}$ in $O(\log n\cdot \log M)$ time, after which $\Ed(c_{ij},P_i)$
can be computed in $O(\log M)$ time by using the data structure $\calA_3$.

Observe that $c_{ij}\neq \emptyset$ if and only if $\Ed(p_i^*,P_i)\leq
\Ed(p_i^*,P_j)$ and $\Ed(p_j^*,P_j)\leq \Ed(p_j^*,P_i)$. Using
$\calA_3$, we can compute the four expected distances in $O(\log
M)$ time and thus determine whether $c_{ij}=\emptyset$. If yes, we simply
return zero. Otherwise, we proceed as follows.


Note that $c_{ij}$ is a point $x\in \pi(p_i^*,p_j^*)$ minimizing the value $\max\{\Ed(x,P_i),\Ed(x,P_j)\}$ (e.g., see Fig.~\ref{fig:cij}).
To compute $c_{ij}$, by using a lowest common ancestor query, we find the lowest common ancestor $v_{ij}$ of $p_i^*$ and $p_j^*$ in constant time. Then, we search $c_{ij}$ on the path $\pi(p_i^*,v_{ij})$, as follows (we will search the path $\pi(p_j^*,v_{ij})$ later). To simplify the
notation, let $\pi=\pi(p_i^*,v_{ij})$. By using the level ancestor queries, we can find the middle edge of $\pi$ in $O(1)$ time. Specifically, we find the two vertices $v_1=\alpha(p_i^*,k)$ and $v_2=\alpha(p_i^*,k+1)$, where $k=\lfloor(d(p_i^*,r)+d(v_{ij},r))/2\rfloor$. Note that the two values $d(p_i^*,r)$ and $d(v_{ij},r)$ are computed in the preprocessing. Hence, $v_1$ and $v_2$ can be found in constant time by the level ancestor queries. Clearly, the edge $e=(v_1,v_2)$ is the middle edge of $\pi$.

%
%

After $e$ is obtained, by using the data structure $\calA_3$ and as remarked in Section~\ref{sec:ed},
we can obtain the two functions $\Ed(x,P_i)$ and $\Ed(x,P_j)$ on $x\in e$ in $O(\log M)$ time, and both functions are linear in $x$ for $x\in e$. As $x$ moves in $e$ from one
end to the other, one of $\Ed(x,P_i)$ and $\Ed(x,P_j)$ is
monotonically increasing and the other is monotonically decreasing.
Therefore, we can determine in constant time whether $c_{ij}$ is on $\pi_1$,
$\pi_2$, or $e$, where $\pi_1$ and $\pi_2$ are the sub-paths of $\pi$
partitioned by $e$. If $c_{ij}$ is on $e$, then $c_{ij}$ can be
computed immediately by the two functions and we can finish the algorithm. Otherwise, the binary search
proceeds on either $\pi_1$ or $\pi_2$ recursively.

For the runtime, the binary search has $O(\log n)$ iterations and each
iteration runs in $O(\log M)$ time. So the total time of the binary search on $\pi(p_i^*,v_{ij})$ is $O(\log n\log M)$. The binary search will either find $c_{ij}$ or determine that $c_{ij}$ is at $v_{ij}$. The latter case actually implies that $c_{ij}$ is in the path $\pi(p_j^*,v_{ij})$, and thus we apply the similar binary search on $\pi(p_j^*,v_{ij})$, which will eventually compute $c_{ij}$.
Thus, the total time
for computing $c_{ij}$ is $O(\log n\log M)$.

The lemma thus follows. \qed
\end{proof}

The following theorem summarizes our algorithm.
\begin{theorem}
An optimal solution for the $k$-center problem can be found
in $O(n^2\log n\log M+M\log^2M\log n)$ time.
\end{theorem}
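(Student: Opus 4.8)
The plan is to compute the optimal covering range $\lambda_{opt}$ and then invoke the center-coverage algorithm one last time with this range to output the $k$ centers. The key observation enabling this is that the minimum number of centers needed to cover all uncertain points is a non-increasing function of the covering range; hence the center-coverage problem admits a solution with at most $k$ centers if and only if $\lambda\ge\lambda_{opt}$. Combined with Lemma~\ref{lem:160}, which guarantees that $\lambda_{opt}$ lies in the set $S=S_1\cup S_2$ of size $O(n^2)$, this means we can pin down $\lambda_{opt}$ by binary search on a sorted copy of $S$, using the center-coverage algorithm of Theorem~\ref{theo:10} as the decision oracle. An instance with $\Ed(p_i^*,P_i)>\lambda$ for some $i$ is infeasible; we treat it as ``needing more than $k$ centers'', which is consistent since $\lambda_{opt}\ge\max_i\Ed(p_i^*,P_i)\in S_1$.

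First I would build the data structure $\calA_3$ of Lemma~\ref{lem:ed} in $O(M\log M)$ time and use it to evaluate all $n$ values of $S_1=\{\Ed(p_i^*,P_i)\}$ in $O(n\log M)$ time. Then, by the preceding lemma, after $O(M)$-time preprocessing each $\Ed(c_{ij},P_i)$ is computable in $O(\log n\log M)$ time, so the whole set $S_2=\{\Ed(c_{ij},P_i)\mid 1\le i,j\le n\}$ is obtained in $O(M+n^2\log n\log M)$ time. I would then sort $S$ in $O(n^2\log n)$ time, which is dominated by the cost of forming $S_2$.

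With $S$ sorted, I would binary-search for the smallest element $\lambda\in S$ for which the center-coverage problem has a solution of size at most $k$; by the monotonicity above this element is exactly $\lambda_{opt}$. Each probe runs the algorithm of Theorem~\ref{theo:10} once in $O(M\log^2 M)$ time, and since $|S|=O(n^2)$ there are $O(\log n)$ probes, for a total of $O(M\log^2 M\log n)$. A final call to the center-coverage algorithm with $\lambda=\lambda_{opt}$ produces the $k$ centers in an additional $O(M\log^2 M)$ time. Adding up $O(M\log M)$ for $\calA_3$, $O(n\log M)$ for $S_1$, $O(M+n^2\log n\log M)$ for $S_2$, $O(n^2\log n)$ for sorting, and $O(M\log^2 M\log n)$ for the search yields the stated bound $O(n^2\log n\log M+M\log^2 M\log n)$.

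No genuinely hard step remains: the work is entirely in assembling Lemmas~\ref{lem:ed} and \ref{lem:160}, the $\Ed(c_{ij},P_i)$-evaluation lemma, and Theorem~\ref{theo:10}. The step I would double-check most carefully is the justification for binary search, i.e., that the predicate ``at most $k$ centers suffice'' is monotone in $\lambda$ and that $\lambda_{opt}$ — as characterized by Lemma~\ref{lem:160} — is precisely its threshold value within $S$ (so that no valid smaller element of $S$ is missed and the smallest feasible element is returned); once this is in place, the rest is routine time bookkeeping.
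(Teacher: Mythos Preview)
Your proposal is correct and follows essentially the same approach as the paper: build $\calA_3$, compute the candidate set $S=S_1\cup S_2$ in $O(M\log M+n\log M+M+n^2\log n\log M)$ time, sort it, and binary search using the center-coverage algorithm of Theorem~\ref{theo:10} as a feasibility oracle, then run it once more at $\lambda_{opt}$. Your added remarks on monotonicity of the feasibility predicate and on handling the infeasible case $\Ed(p_i^*,P_i)>\lambda$ are sound and slightly more explicit than the paper's own treatment.
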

\begin{proof}
Assume the data structure $\calA_3$ has been computed in $O(M\log M)$
time.  Computing $S_1$ can be done in $O(n\log M)$ time. Computing
$S_2$ takes $O(M+n^2\log n\log M)$ time.
After $S$ is computed, we find $\lambda_{opt}$ from $S$ as follows.

Given any $\lambda$ in $S$, we can use  our algorithm for the
center-coverage  problem to find a minimum number $k'$ of centers with respect to
$\lambda$. If $k'\leq k$, then we say that $\lambda$ is {\em
feasible}. Clearly, $\lambda_{opt}$ is the smallest feasible value in
$S$. To find $\lambda_{opt}$ from $S$,
we first sort all values in $S$ and then do binary search
using our center-coverage algorithm as a decision procedure. In this
way, $\lambda_{opt}$ can be found in $O(n^2\log n+\tau\log n)$ time.

Finally, we can find an
optimal solution using our algorithm for the covering problem with
$\lambda=\lambda_{opt}$ in $O(\tau)$ time.
Therefore, the total time of the algorithm is
$O(n^2\log n\log M+\tau\log n)$, which is
$O(n^2\log n\log M+M\log^2M\log n)$ by Theorem~\ref{theo:10}.
\qed
\end{proof}



\end{document}